\def\sp{\textsc{s\&p}500\xspace}
\def\ou{Ornstein-Uhlenbeck\xspace}
\providecommand{\algorithmname}{Algorithm}
\newcommand{\bbR}{\mathbb R}
\newtheorem{theorem}{Theorem}[section]
\newtheorem{lem}{Lemma}[section]
\newtheorem{rem}{Remark}[section]
\newtheorem{prop}{Proposition}[section]
\newtheorem{cor}{Corollary}[section]
\newtheorem{asn}{Assumption}[section]
\newcounter{hypA}
\newcommand{\Exp}{\mathbb{E}}
\newcommand{\bbE}{\mathbb{E}}
\newcommand{\bbP}{\mathbb{P}}
\newcommand{\cA}{\mathcal{A}}
\newcommand{\cO}{\mathcal{O}}
\newcommand{\cB}{\mathcal{B}}
\newcommand{\cU}{\mathcal{U}}
\newcommand{\cV}{\mathcal{V}}
\newcommand{\hU}{\widehat{U}}
\newcommand{\heta}{\widehat{\eta}}
\newcommand{\cC}{\mathcal{C}}
\def\ess{ESS\xspace}
\def\gbm{GBM\xspace}
\def\mlpf{MLPF\xspace}
\def\mse{MSE\xspace}
\def\nlm{NLM\xspace}
\def\ou{OU\xspace}
\def\pf{PF\xspace}
\def\sde{SDE\xspace}
\def\bbR{\mathbb{R}}
\def\calL{\mathcal{L}}
\def\calN{\mathcal{N}}
\def\sp{S\&P~500\xspace}
\DeclareMathOperator\var{var}
\date{}
\begin{document}

\begin{center}

{\Large \textbf{Multilevel Particle Filters}}

\vspace{0.5cm}

BY AJAY JASRA$^{1}$, KENGO KAMATANI$^{2}$, KODY J. H. LAW$^{3}$ \& YAN ZHOU$^{1}$

{\footnotesize $^{1}$Department of Statistics \& Applied Probability,
National University of Singapore, Singapore, 117546, SG.}
{\footnotesize E-Mail:\,}\texttt{\emph{\footnotesize staja@nus.edu.sg; stazhou@nus.edu.sg}}\\
{\footnotesize $^{2}$Graduate School of Engineering Science, Osaka University, Osaka, 565-0871, JP.}
{\footnotesize E-Mail:\,}\texttt{\emph{\footnotesize kamatani@sigmath.es.osaka-u.ac.jp}}\\
{\footnotesize $^{3}$Computer Science and Mathematics Division,
Oak Ridge National Laboratory Oak Ridge, TN, 37831, USA.}\\
{\footnotesize E-Mail:\,}\texttt{\emph{\footnotesize kodylaw@gmail.com}}
\end{center}

\begin{abstract}
In this paper 
the filtering of partially observed diffusions, with discrete-time observations, is considered. 
It is assumed that only biased approximations of the diffusion can be obtained, for choice of
an accuracy parameter indexed by $l$.
A multilevel estimator 
is proposed, consisting of a telescopic sum of increment estimators
associated to the successive levels.
The work associated to $\cO(\varepsilon^2)$ mean-square error between the multilevel estimator
and average with respect to the filtering distribution is shown to scale optimally, for example 
as $\cO(\varepsilon^{-2})$ for optimal rates of convergence of the underlying diffusion approximation.
The method is illustrated on some toy examples as well as estimation of interest rate based on real 
S\&P 500 stock price data. 
\\
\textbf{Key words:} Filtering; Diffusions; Particle Filter; Multilevel Monte Carlo
\end{abstract}

\section{Introduction}

Problems which involve 
continuum fields are typically discretized before they
are solved numerically.  Finer resolution solutions are more expensive to compute than coarse resolution ones.
Often such discretizations naturally give rise to resolution hierarchies, for example nested meshes.
Successive solution on refined meshes can be utilized to mitigate the number of necessary solves at the 
finest resolution.  For solution of linear systems, the coarsened systems are solved as pre-conditioners 
within the framework of iterative linear solvers in order to reduce the condition number, and hence the 
number of necessary iterations, at the fine resolution.  This is the principle of multi-grid methods \cite{briggs2000multigrid}.

In the context of Monte Carlo methods, 
a telescoping sum of correlated differences at successive refinement levels 
can be utilized so that the bias of the resulting multilevel estimator is determined by the finest 
level but the variance is given by the sum of the variances of the increments.    
The decay in the variance of the increments of finer levels means that the number of samples 
required to reach a given error tolerance is also reduced for finer levels.  This can then be optimized to balance 
the extra per-sample cost at the finer levels \cite{heinrich2001multilevel, Giles08, giles_acta}.

Inference tends to be more complicated, especially in a Bayesian context,
as the posterior measure often concentrates strongly with respect to the prior.  
Therefore, simple Monte Carlo strategies involving ratios of likelihood-weighted integrals 
tend to converge slowly and be inefficient.  Indeed, in extreme cases all the weight may 
concentrate on a single sample:
this is referred to as weight degeneracy.  In the case in 
which data arrives sequentially online, as considered here, this phenomenon compounds,
and degeneracy is unavoidable without a resampling mechanism (see e.g.~\cite{delm:04,doucet_johan}).  
If resampling is performed from time to time, and if the data and underlying diffusion are sufficiently regular, then 
degeneracy can be avoided and even time-uniform convergence is possible \cite{delm:04, delm_guionnet2001}.

The natural and yet challenging extension of the multilevel Monte Carlo (MLMC) 
framework to inference problems has recently been pioneered by the works 
\cite{hoangmlmcmc2013, scheichlmlmcmc2013, oursmlsmc2015, hoel2015multilevel},
but, to the best knowledge of the authors, rigorous results for consistent filtering, 
via the particle filter, have yet to be obtained.
In this article, 
the context of a partially observed diffusion is considered, with observations in discrete time; 
this will be detailed 
explicitly in the next section.

In the context of filtering, one difficulty is the nonlinearity of the update,
which precludes the construction of unbiased estimators.  However, this problem was already 
addressed in \cite{oursmlsmc2015}. 
Indeed some ingenuity is required to successfully actualize the necessary resampling step
while retaining adequate correlations.  
In this paper a novel coupled resampling procedure 
is introduced, which enables
this extension of the MLMC framework to the multilevel particle filter (MLPF).
The work associated to $\cO(\varepsilon^2)$ mean-square error between the multilevel estimator
and average with respect to the filtering distribution is shown to scale optimally, for example 
as $\cO(\varepsilon^{-2})$ for optimal rate of convergence of the underlying diffusion approximation.

This new MLPF algorithm is illustrated on some toy diffusion examples, as well as a stochastic 
volatility model with real S\&P 500 stock price data.  The performance of the new algorithm easily reaches 
an order of magnitude or greater improvement in cost, and the theoretical rate is verified so that 
improvement will continue to amplify as more accurate estimates are obtained.
Furthermore, the method is very amenable to parallelization strategies, 
leaving open great potential for its use on next generation super-computers.

\section{Set Up}
\label{sec:setup}

Consider the following diffusion process:
\begin{eqnarray}
dX_t & = & a(X_t)dt + b(X_t)dW_t
\label{eq:sde}
\end{eqnarray}
with $X_t\in\mathbb{R}^d$, $t\geq 0$ and $\{W_t\}_{t\in[0,T]}$ a Brownian motion of appropriate dimension. 
The following assumptions will be made on the diffusion process.
\begin{asn}[SDE properties] The coefficients $a\in C^2(\mathbb{R}^d;\mathbb{R}^d), b\in C^2(\mathbb{R}^d;\mathbb{R}^d\otimes\mathbb{R}^d)$. 
Also, $a$ and $b$ satisfy 
\begin{itemize}
\item[{\rm (i)}] {\bf uniform ellipticity}: $b(x)b(x)^T$ is uniformly positive definite;
\item[{\rm (ii)}] {\bf globally Lipschitz}:
there is a $C>0$ such that 
$|a(x)-a(y)|+|b(x)-b(y)| \leq C |x-y|$ 
for all $x,y \in \bbR^d$; 
\item[{\rm (iii)}] {\bf boundedness}: $\bbE |X_0|^p < \infty$ for all $p \geq 1.$
\end{itemize}
\label{asn:diff}
\end{asn}
Notice that (ii) and (iii) together imply that $\bbE |X_n|^p < \infty$ for all $n$.

It will be assumed that the data are 
regularly spaced (i.e.~in discrete time) observations 
$y_1,\dots,y_{n}$, 
where $y_k \in \bbR^m$ is a realization of $Y_k$ and  
$Y_k| X_{k\delta}$ has density given by $G(y_k, x_{k\delta})$. 
For simplicity of notation let $\delta=1$ (which can always be done by rescaling time), so $X_k = X_{k\delta}$.
The joint probability density of the observations and the unobserved diffusion at the observation times 
is then
$$
\prod_{i=1}^n G(y_{i},x_{i})Q^\infty(x_{(i-1)},x_{i}), 
$$
where 
$Q^\infty(x_{(i-1)},x)$ 
is the transition density of the diffusion process as a function of $x$, 
i.e. the density of the solution
$X_1$ of Eq. \eqref{eq:sde} at time $1$
given initial condition 
$X_0=x_{(i-1)}$.

The following assumptions will be made on the observations.

\begin{asn}[Observation properties] 
There are some $c>1$ and $C>0$, such that $G$ satisfies
\begin{itemize}
\item[{\rm (i)}] {\bf boundedness}: $c^{-1} < G(y,x) < c$ for all $x \in \bbR^d$ and $y\in \bbR^m$; 
\item[{\rm (ii)}] {\bf globally Lipschitz}: for all $y \in \bbR^m$, $|G(y,x) - G(y,x')| \leq C |x-x'|$.
\end{itemize}
\label{asn:g}
\end{asn}

For $k\in\{1,\dots,n\}$, the objective is to approximate 
the target distribution
$
\pi^\infty(x_{k}|y_{1:k}),
$
which will be denoted $\heta^\infty_k$.
With a particle filter one obtains a collection of samples $\{{u}^{\infty,i}_k\}_{i=1}^N$ 
with associated weights $\{\omega^{\infty,i}_k\}_{i=1}^N$, giving rise to an empirical measure
$$
\heta_k^{\infty,N} = \sum_{i=1}^N \omega^{\infty,i}_k \delta_{{u}^{\infty,i}_k}
$$
which approximates $\heta_k^\infty$.
The particle filter works by interlacing importance sampling for the Bayesian updates incorporating 
observations, with a resampling selection step to rejuvenate the ensemble, 
and a mutation move which propagates the ensemble forward through the diffusion (e.g.~\cite{doucet_johan} and the references therein).  
It is a well-known fact that if $Q^\infty(x,\cdot)$ can be sampled from exactly, then the
particle filter achieves standard convergence rates for Monte Carlo approximation
of expectations of quantities of interest $\varphi:\cB_b(\bbR^d)$, 
the set of bounded measurable functions over $\bbR^d$
\cite{Cappe_2005} :
\begin{equation}
\bbE |{\heta}_k^{\infty,N}(\varphi) - \heta^\infty_k(\varphi)|^2 \leq C/N,
\label{eq:mc}
\end{equation}
although $C$ may behave poorly with respect to $k$ and/or $d$ \cite{delm:04, bengtsson2008curse, bickel2008sharp}. 
  In the setting considered in this paper,
 it is not possible to sample exactly from $Q^\infty(x,\cdot)$, with the exception of very simple
SDE \eqref{eq:sde}, but rather it must be approximated by some discrete time-stepping method \cite{KlPl92}.

It will be assumed that the diffusion process is 
approximated by a time-stepping method for time-step $h_l=2^{-l}$. 
For simplicity and illustration, Euler's method \cite{KlPl92} will be considered.
However, the results can easily be extended and the theory will be presented more generally.
In particular, 
\begin{eqnarray}
\label{eq:euler1step}
X^l_{k,(m+1)} & = & X^l_{k,m} + {h_l} a(X^l_{k,m}) + \sqrt{h_l} b(X^l_{k,m}) \xi_{k,m}, \\
\xi_{k,m} & \stackrel{\textrm{i.i.d.}}{\sim} & \mathcal{N}_d(0,I_d) 
\nonumber\end{eqnarray}
for $m=0,\dots, k_l$, where 
$k_l=2^l$ and $\mathcal{N}_d(0,I_d)$ is the $d-$dimensional normal distribution with mean zero and identity covariance 
(when $d=1$ 
the subscript is omitted). 
The numerical scheme gives rise to its own transition density between observation times 
$Q^l(x_{(k-1)},x )$, which is the density of $X^l_{(k-1),k_l}=X^l_{k,0}=X^l_{k}$, 
given initial condition $X^{l}_{(k-1),0}=x_{(k-1)}$.
Let $\heta_1^l(\varphi) := \bbE \varphi(X_1^l)$ for $l=0,\dots, \infty$.
Suppose one aims to approximate the expectation of 
$\varphi \in \cB_b(\bbR^d)$. 
For a given $L$, the
Monte Carlo approximation
of $\heta_1^\infty(\varphi)$ by 
$$
\heta_1^{L,N}(\varphi) = \frac{1}{N}\sum_{i=1}^N \varphi({X}^{L,i}_1), \qquad {X}^{L,i}_1 \sim Q^L(x_0,\cdot)\ ,
$$
has mean square error (MSE) given by 
 \begin{equation}
 \bbE | \heta_1^{L,N}(\varphi) -  \heta_1^\infty(\varphi) |^2 =   
  \underbrace{\bbE | \heta_1^{L,N}(\varphi) -  \heta_1^L(\varphi) |^2}_{\rm variance} +
    \underbrace{| \heta_1^{L}(\varphi) -  \heta_1^\infty(\varphi) |^2}_{\rm bias}.
 \label{eq:mse}
 \end{equation}
 If one aims for $\cO(\varepsilon^2)$ MSE with optimal cost, then one must balance these two terms.  

For $l=0,1,\ldots, L$, the hierarchy of time-steps $\{h_l\}_{l=0}^L$ 
gives rise to a hierarchy of transition densities $\{Q^l\}_{l=0}^L$.  In this context, 
for a single transition, it is well-known that the multilevel Monte Carlo (MLMC) method \cite{Giles08, heinrich2001multilevel} 
can reduce the cost to obtain a given level of mean-square error (MSE) \eqref{eq:mse}.
The description of this method and its extension to the particle filter setting will be the topic of the next 
section.

\section{Multilevel Particle Filters}
\label{sec:mlpf}

In this section, the multilevel particle filter will be introduced.  First, a review of the standard multilevel Monte-Carlo method is 
presented, 
illustrating the strategy for reducing the necessary cost for a given level of mean-square
error.  Next, the extension to the multilevel particle filter is presented.

\subsection{Multilevel Monte Carlo}
\label{ssec:mlmc}

The standard multilevel Monte Carlo (MLMC) framework \cite{Giles08} begins with asymptotic estimates for 
weak and strong error rates, and the associated cost.  
In particular, assume the following. 
\begin{asn}[MLMC Rates]  There are $\alpha, \beta, \gamma>0$ such that
\begin{itemize}
\item[(i)] 
$\bbE [ \varphi(X_{1}^{l}) - \varphi(X_{1}^{\infty}) ] = \cO(h_l^\alpha)$;
\item[(ii)] 
$\bbE [ | \varphi(X_{1}^{l}) - \varphi(X_{1}^{\infty}) |^p ]^{2/p} = \cO(h_l^\beta)$;
\item[(iii)] 
COST$(X_{1}^{l}) = \cO(h_l^{-\gamma})$,
\end{itemize}
where COST denotes the computational effort to obtain one sample $X_{1}^{l}$,   
and $h_l$ is the grid-size of the numerical method, 
for example the Euler method as given in \eqref{eq:euler1step}.  In this case 
$\alpha=\beta=\gamma=1$.
In general $\alpha\geq\beta/2$, as the choice $\alpha=\beta/2$ is always possible, by Jensen's inequality.
\label{asn:mlmcrates}
\end{asn}

Recall that in order to minimize the effort to obtain a given MSE, one must balance the terms in 
\eqref{eq:mse}.
Based on 
Assumption \ref{asn:mlmcrates}(i) above, a bias error proportional to $\varepsilon$ will require 
\begin{equation}
\label{eq:ell}
L \propto 
-\log(\varepsilon)/(\log(2)\alpha).
\end{equation}
The associated cost, 
in terms of $\varepsilon$, for a given sample is $\cO(\varepsilon^{-\gamma/\alpha})$.
Furthermore, the necessary number of samples to obtain a variance proportional to $\varepsilon^2$ for this 
standard single level estimator is given by $N \propto \varepsilon^{-2}$ following from \eqref{eq:mc}.
So the total cost to obtain a mean-square error tolerance of $\cO(\varepsilon^2)$ is: 
$\#$samples$\times($cost$/$sample)$=$total cost$\propto \varepsilon^{-2 - \gamma/\alpha}$.
To anchor to the particular example of the Euler-Marayuma method, 
the total cost is $\cO(\varepsilon^{-3})$.

Define a kernel 
$M^l: [\bbR^d\times \bbR^d] \times [\sigma(\bbR^d)\times \sigma(\bbR^d)] \rightarrow \bbR_+$, 
where $\sigma(\cdot)$ denotes the sigma algebra of measurable subsets,
such that $M_1^l (x,A) := M^l([x,x'],A\times\bbR^d) = Q^l(x,A)$
and $M_2^l (x',A) := M^l([x,x'],\bbR^d \times A) = Q^{l-1}(x',A)$.
The idea of MLMC is the following.  First approximate the $l^{th}$ increment 
$(\eta_1^l - \eta_1^{l-1})(\varphi)$ by an empirical average
\begin{equation}
Y_{l}^{N_l} (\varphi) := \frac{1}{N_l} \sum_{i=1}^{N_l} \varphi(X_{1,1}^{l,i}) - \varphi(X_{1,2}^{l,i}),
\label{eq:incemp}
\end{equation}
where $[X_{1,1}^{l,i},X_{1,2}^{l,i}] \sim M^l([x_0,x_0],\cdot)$, given initial datum $X_0=x_0$.
The multilevel estimator is a telescopic sum of such unbiased increment estimators, 
which yields an unbiased estimator of $\eta^L_1(\varphi)$.  It can be  
defined in terms of its empirical measure as
\begin{equation}
\heta_1^{L,{\rm Multi}} (\varphi) := \
\sum_{l=0}^L  Y_{l}^{N_l} (\varphi) \ , 
\end{equation}
under the convention that $\varphi(X^{0,i}_{1,2})\equiv 0$.

The mean-square error of the multilevel estimator is given by
\begin{equation}
\begin{split}
& \bbE \left \{ \heta_1^{L,{\rm Multi}} (\varphi) - \eta_1^\infty(\varphi) \right \}^2  =  \\
& \underbrace{ \sum_{\ell=0}^L \bbE \left \{ Y^{N_\ell}_{\ell}(\varphi) - [\eta_1^l(\varphi) - \eta_1^{l-1}(\varphi)] \right \}^2}_{\rm variance}  
+ \{ \underbrace{\eta_1^L(\varphi) - \eta_1^\infty(\varphi)}_{\rm bias} \}^2 .
\end{split}
\end{equation}
The key observation is that the bias is given by the {\it finest} level, 
whilst the variance is decomposed into a sum of variances of the {\it increments}
$\cV = \sum_{l=0}^L V_l N_l^{-1}$.
Sufficient correlation must be built into the kernels $M^l$ 
to ensure condition Assumption \ref{asn:mlmcrates}(ii)
above carries over to the increments
(for example two discretizations of the {\it same random realization} of the SDE \eqref{eq:sde}).
Then the variance of the $l^{th}$ increment has the form $V_l N_l^{-1}$ and $V_l = \cO(h_l^\beta)$ following from Assumption \ref{asn:mlmcrates} (ii), allowing smaller number of samples $N_l$ at cost $C_l = \cO(h_l^{-\gamma})$
 for larger $l$, following from Assumption \ref{asn:mlmcrates}(iii). 
 The total cost is given by the sum $\cC=\sum_{l=0}^L C_l N_l$.
Based on Assumption \ref{asn:mlmcrates}(ii) and Assumption \ref{asn:mlmcrates}(iii) 
above, 
optimizing $\cC$ for a fixed $\cV$ yields that $N_l = \lambda^{-1/2} 2^{-(\beta+\gamma)l/2}$, 
for Lagrange multiplier $\lambda$. 
In the Euler-Marayuma case $N_l = \lambda^{-1/2} 2^{-l}$.  
Now, one can see that after fixing the bias to $c\varepsilon$,
one aims to find the 
Lagrange multiplier $\lambda$ such that $\cV 
\approx c^2\varepsilon^2$.  
Defining $N_0=\lambda^{-1/2}$, then $\cV 
= N_0^{-1} \sum_{l=0}^L 2^{(\gamma-\beta)l/2}$, 
so one must have $N_0 \propto \varepsilon^{-2} K(\varepsilon)$, where $K(\varepsilon)=\sum_{l=0}^L 2^{(\gamma-\beta)l/2}$,
and the $\varepsilon$-dependence comes from $L(\varepsilon)$, as defined in \eqref{eq:ell}.
There are three cases, with associated $K$, and hence cost $\cC$, given in Table \ref{tab:mlcases}.

\begin{table}[h]
\begin{center}
  \begin{tabular}{ | c || c | c |}
    \hline
    CASE & $K(\varepsilon)$ & $\cC(\varepsilon)$ \\ \hline\hline
    $\beta>\gamma$ & $\cO(1)$ & $\cO(\varepsilon^{-2})$ \\ \hline
    $\beta=\gamma$  & $\cO(-\log(\varepsilon))$ & $\cO(\varepsilon^{-2}\log(\varepsilon)^2)$ \\ \hline
    $\beta<\gamma$  & $\cO(\varepsilon^{(\beta-\gamma)/(2\alpha)})$ & $\cO(\varepsilon^{-2+(\beta-\gamma)/\alpha})$ \\
    \hline
  \end{tabular}
\end{center}
\caption{The three cases of multilevel Monte Carlo, and associated constant $K(\varepsilon)$ and cost $\cC(\varepsilon)$.}
\label{tab:mlcases}
\end{table}

For example, Euler-Marayuma falls into the case 
($\beta=\gamma$), so that $\cC(\varepsilon)= \cO(\varepsilon^{-2}\log(\varepsilon)^2)$.  In this case, one chooses 
$N_0 = C \varepsilon^{-2} |\log(\varepsilon)| = C 2^{2L} L$, where the purpose of $C$ is 
to match the variance with the bias$^2$, similar to the single level case.

 The kernel $M^l$ can be constructed using the following strategy.
First the finer discretization is simulated using \eqref{eq:euler1step} (ignoring index $k$) with
$X^{l,i}_{0,1}=x_0$, 
for $i\in\{1,\dots, N_l\}$.
Now for the coarse discretization, let 
$X^{l,i}_{0,2}=x_0$ for $i\in\{1,\dots,N_l\}$, let $h_{l-1}=2h_l$ and 
for $m\in\{1,\dots,k_{l-1}\}$ simulate
\begin{equation}
X^{l,i}_{m+1,2} =  X^{l,i}_{m,2} + h_{l-1} a(X^{l,i}_{m,2}) + \sqrt{h_{l-1}} b(X^{l,i}_{m,2}) (\xi^i_{2m}+\xi^i_{2m+1}),  
\label{eq:euler1stepcoarse}
\end{equation}
where $\{\xi^i_{m}\}_{i=1,m=0}^{N_l,k_l}$ are the $i^{th}$ realizations used in the 
simulation of the finer discretization.
This procedure defines a kernel $M^l$ as above, such that 
$(X^{l,i}_{k_{l-1},1},X^{l,i}_{k_{l-1},2}) \sim M^l([x_0,x_0], ~\cdot~)$ are suitably coupled 
and the standard MLMC theory will go through 
with $\alpha=\beta=\gamma=1$ above.

\subsection{Multilevel Particle Filters}
\label{ssec:mlpf}

The framework of the previous section will now be extended to the new
multilevel particle filter (MLPF). Throughout, the observations $y_{1:m}$ are omitted from the notations.
It will be convenient to define $U^l_m:=X^l_{m}|y_{1:m-1}$ for $l=0,\ldots,\infty$, with $U^\infty_m:=X^\infty_{m}|y_{1:m-1}$
denoting the limiting continuous-time process, 
and denote the associated predicting distributions by $\eta_m^l$.
It will also be useful to define $\widehat{U}^l_m:=X^l_{m}|y_{1:m}$, and its distribution $\heta_m^l$.
Let $\varphi \in \cB_b(\bbR^d)$ and consider the following decomposition
\begin{eqnarray}
\eta^\infty_m(\varphi) & =  & \sum_{l=0}^L (\eta^{l}_m - \eta^{l-1}_m)(\varphi)  +  (\eta^\infty_m - \eta^{L}_m)(\varphi)
\label{eq:collapse_sum}
\end{eqnarray}
where 
$\eta_m^{-1}(\varphi) :=0$. 

Let $U^{l,i}_{0,1}=\widehat{U}^{l,i}_{0,1} = U^{l,i}_{0,2} = \widehat{U}^{l,i}_{0,2} = X_0^i$, where 
$X_0^i \sim \eta_0 = \heta_0$, and iterate the following.  
Draw $[U^{l,i}_{m,1},U^{l,i}_{m,2}] \sim M^l ([\widehat{U}^{l,i}_{m-1,1},\widehat{U}^{l,i}_{m-1,2}],~\cdot~)$
Each summand in the first term of \eqref{eq:collapse_sum} can be estimated with:
$$
\sum_{i=1}^{N_l}\left \{ w_{m,1}^{l,i} \varphi(U^{l,i}_{m,1}) - 
w_{m,2}^{l,i} \varphi(U^{l,i}_{m,2}) \right \},
$$
where the weights are defined as follows, for $h\in\{1,2\}$, 
\begin{equation}
w_{m,h}^{l,i} = \frac{G(y_{m},U^{l,i}_{m,h})}{\sum_{j=1}^{N_l} G(y_{m},U^{l,j}_{m,h})}. 
\label{eq:weights}
\end{equation}
It is clear that for suitably well-behaved $G$, for example satisfying Assumption \ref{asn:g}, such an estimate will satisfy the standard MLMC identity and cost.
 However, it is well-known that one must perform resampling in order for a particle filter to perform well for multiple steps.
 Here this is a particularly challenging point, as the samples have to remain suitably coupled after the resampling, so that similar rates hold as above.

For every index $k\in\{1,\dots,N_l\}$ 
the indices $I^{l,k}_{m,j}$, $j\in\{1,2\}$, are sampled according to the {\bf coupled resampling} 
procedure described below: 

\begin{itemize}
\item[{\bf a}.] with probability  $\alpha_m^l = \sum_{i=1}^{N_l}w_{m,1}^{l,i}\wedge w_{m,2}^{l,i}$, 
draw $I^{l,k}_{m,1}$ according to
$$
\bbP(I^{l}_{m,1}=i) = \frac{1}{\alpha_m^l} (w_{m,1}^{l,i}\wedge w_{m,2}^{l,i}),
\qquad i=1,\ldots,N_l.
$$
and let $I^{l,k}_{m,2}=I^{l,k}_{m,1}$.
\item[{\bf b}.] Define $Z^l_{m,h} := w_{m,h}^{l,i}-w_{m,1}^{l,i}\wedge w_{m,2}^{l,i}$, and 
with probability $1-\alpha_m^l$, draw
$(I^{l,k}_{m,1}, I^{l,k}_{m,2})$ 
{\it independently} according to the probabilities 
\[
\begin{split}
\bbP(I^{l}_{m,1}=i) & =  Z^{l,i}_{m,1} / \sum_{j=1}^{N_l} Z^{l,j}_{m,1} \, ; \\
\bbP(I^{l}_{m,2}=i) & =  Z^{l,i}_{m,2} / \sum_{j=1}^{N_l} Z^{l,j}_{m,2} \, ,
\end{split}
\]
for $i=1,\ldots,N_l$.
\end{itemize}

The indices for the fine (resp.~coarse) discretization 
are resampled marginally according to 
$w_{1}^{l,i}$ (resp.~$w_{2}^{l,i}$), which is exactly as required.  
Notice that it is necessary to independently sample the fine and coarse levels with a small probability
in order to preserve the marginals.  However, it will be shown that 
the resulting samples do remain sufficiently coupled, 
although with a slightly lower rate than the vanilla MLMC.
Finally the {\bf multilevel particle filter (MLPF)} is given below: 

{\bf For} $l=0,1,\dots,L$ and $i=1,\dots, N_l$, draw $\hU^{l,i}_{0,1}\sim \mu_0$, and let 

$\hU^{l,i}_{0,2}=\hU^{l,i}_{0,1}$.

{\bf Initialize} $m=1$.  {\bf Do} 

\begin{itemize}
\item[(i)] {\bf For} $l=0,1,\dots,L$ and $i=1,\dots, N_l$, draw $(U^{l,i}_{m,1},U^{l,i}_{m,2}) \sim M^l((\hU^{l,i}_{m-1,1},\hU^{l,i}_{m-1,2}),~\cdot~)$;
\item[(ii)] {\bf For} $l=0,1,\dots,L$ and $k=1,\dots, N_l$, draw $(I^{l,k}_1,I^{l,k}_2)$ according to the coupled resampling procedure above;
\item[(iii)] $(\hU^{l,k}_{m,1},\hU^{l,k}_{m,2}) \leftarrow (U^{l,I^{l,k}_1}_{m,1},U^{l,I^{l,k}_2}_{m,2})$.
\end{itemize}

$m \leftarrow m+1$
\newline

Note that if the variance of the weights becomes substantial, one can use the approach in \cite{jasra} to 
deal with this issue.

\section{Theoretical Results}
\label{sec:theo}

The calculations leading to the results in this section are performed via a Feynman-Kac type representation 
(see \cite{delm:04,delmoral1}) which is detailed in the supplementary material.  
Denote the marginal transition kernels of the Euler discretization procedure described above at level $l$ 
as $M^l_{1}$ (fine) and $M^l_{2}$ (coarse). 
Note that these results do not depend on Euler discretization and hold for any general coupled particle filter. 
Also note that the results are easily extended to non-autonomous SDE \eqref{eq:sde}, at the expense of additional technicalities.  
The predictor at time $m$, level $l$, is denoted as $\eta^l_{m,1}$ (fine) and $\eta^l_{m,2}$ (coarse). 
$\mathcal{B}_b(\mathbb{R}^d)$ are the bounded, measurable and real-valued functions
on $\mathbb{R}^d$ and $\textrm{Lip}(\mathbb{R}^d)$ are the globally Lipschitz real-valued functions on $\mathbb{R}^d$. 
Denote the supremum norm as $\|\cdot\|$, and the 
total variation norm 
as $\|\cdot\|_{\textrm{tv}}$.
For two Markov kernels $M_1$ and $M_2$
on the same space $E$, letting $\cA=\{\varphi: \|\varphi\|\leq 1, \varphi\in\textrm{Lip}(E)\}$
write
$$
|||M_{1}-M_{2}||| := \sup_{\varphi\in \cA}\sup_x |\int_E \varphi(y) M_1(x,dy) - \int_E \varphi(y) M_2(x,dy) |.
$$
Let $w_{m,j}^{l,i}$ denote the weights defined as in \eqref{eq:weights}
with the index $m$ indicated explicitly.
For each $j\in\{1,2\}$, $p\geq 1$, $m\geq 1$ define
\[\begin{split}
& M_{m,j}(u_p,du_{p+m}) = \\
& \int_{\bbR^{d\times m-1}} M_{j}(u_p,du_{p+1}) 
\cdots M_{j}(u_{p+m-1},du_{p+m}).
\end{split}\]
Finally, the following notation is introduced for the selection densities
$G_m(\cdot) := G(y_{m+1},\cdot)$.

The following assumption will be made, uniformly over the level $l \in [0,1,\ldots)$,
which will be omitted for notational simplicity. 
\begin{asn}[Mutation] 
There exists a 
$C>0$ such that for each $u,u'\in \bbR^d$, $j\in\{1,2\}$ and $\varphi\in\mathcal{B}_b(\bbR^d)\cap\textrm{Lip}(\bbR^d)$
$$
|M_{j}(\varphi)(u) - M_{j}(\varphi)(u')| \leq C\|\varphi\|~|u-u'|.
$$
\label{asn:liperg}
\end{asn}

Additionally, it will be assumed that for all suitable test-functions 
$\varphi\in\mathcal{B}_b(\cU)\cap\textrm{Lip}(\cU)$ the following hold.
\begin{asn}[MLPF rates] 
For $l \in [0,1,\ldots)$, and $p\geq 1$, let $(U^{l}_1, U^{l}_2) \sim M^l((U_{0,1}^l,U_{0,2}^{l}),~\cdot~)$, 
where $\bbE[\varphi(U_{0,1}^{l}) - \varphi(U_{0,2}^{l})]= \cO(h_l^\alpha)$ and
 $\bbE [ | \varphi(U_{0,1}^{l}) - \varphi(U_{0,2}^{l}) |^p ]^{2/p} = \cO(h_l^\beta)$
 for some $\alpha\geq\beta/2>0$.
 Then, there is a $\gamma>0$ such that
\begin{itemize}
\item[{\rm (i)}] 
{\rm max}$\left\{\left|\bbE[\varphi(U_1^{l}) - \varphi(U^{l}_2)]\right|, 
|||M^l_{1}-M^l_{2}||| \right \} = \cO(h_l^\alpha)$;
\item[{\rm (ii)}] 
$\bbE [ | \varphi(U^{l}_1) - \varphi(U^{l}_2) |^p ]^{2/p} = \cO(h_l^\beta)$;
\item[{\rm (iii)}] 
COST$[M^l] 
= \cO(h_l^{-\gamma})$,
\end{itemize}
where COST$[M^l]$ 
is the cost to simulate one sample from the kernel $M^l$.
\label{asn:mlrates}
\end{asn}

\subsection{Main Result}
\label{ssec:mlpf}

Here the MLPF theorem is presented, followed by the main theorem upon which it is based.  The proof and supporting lemmas are provided in the supplementary materials.
Let 
\begin{equation}
A^{N_l}_{l,m}(\varphi)=\sum_{i=1}^{N_l}[ w_{m,1}^{l,i}\varphi(U^{l,i}_{m,1}) - w_{m,2}^{l,i}\varphi(U^{l,i}_{m,2}) ], 
\label{eq:increst}
\end{equation} 
with the convention that $w_{m,2}^{0,i}:=0$, and define
$\heta^{\rm ML}_m(\cdot) := \sum_{l=0}^LA^{N_l}_{l,m}(\cdot)$.

\begin{theorem}[MLPF]
\label{thm:mlpf}
Let Assumptions \ref{asn:g}, \ref{asn:liperg}, and \ref{asn:mlrates} be given. 
Then for any $m\geq 0$, 
$\varphi\in \mathcal{B}_b(\mathbb{R}^d)\cap\textrm{\emph{Lip}}(\mathbb{R}^d)$, 
and $\varepsilon>0$,
there exists a finite constant $C(m,\varphi)$, an $L>0$, and $\{N_l\}_{l=0}^L$ 
such that 
\[
\mathbb{E}\Bigg[
\Bigg(\heta^{\rm ML}_m(\varphi)
- \heta_m^\infty(\varphi) \Bigg)^2   \Bigg] \leq
C(m,\varphi) \varepsilon^2, 
\]
for the cost $\cC(\varepsilon)$ given in the third column of Table \ref{tab:mlpfcases}.

\begin{table}[h]
\begin{center}
  \begin{tabular}{ | c || c | c |}
    \hline
    CASE & $K(\varepsilon)$ & $\cC(\varepsilon)$ \\ \hline\hline
    $\beta>2\gamma$ & $\cO(1)$ & $\cO(\varepsilon^{-2})$ \\ \hline
    $\beta=2\gamma$  & $\cO(-\log(\varepsilon))$ & $\cO(\varepsilon^{-2}\log(\varepsilon)^2)$ \\ \hline
    $\beta<2\gamma$  & $\cO(\varepsilon^{(\beta-2\gamma)/(4\alpha)})$ & $\cO(\varepsilon^{-2+(\beta-2\gamma)/(2\alpha)})$ \\
    \hline
  \end{tabular}
\end{center}
\caption{The three cases of MLPF, and associated constant $K(\varepsilon)$ and cost $\cC(\varepsilon)$.}
\label{tab:mlpfcases}
\end{table}

\end{theorem}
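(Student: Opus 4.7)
The strategy is the classical bias--variance decomposition of the mean-square error, adapted to the multilevel particle filter. A simple $(a+b)^2 \leq 2a^2 + 2b^2$ gives
\[
\bbE\bigl[(\heta^{\rm ML}_m(\varphi) - \heta_m^\infty(\varphi))^2\bigr]
\leq 2\,\bbE\bigl[(\heta^{\rm ML}_m(\varphi) - \heta_m^L(\varphi))^2\bigr]
+ 2\bigl(\heta_m^L(\varphi) - \heta_m^\infty(\varphi)\bigr)^2,
\]
and the telescoping identity analogous to \eqref{eq:collapse_sum} for the filters yields $\sum_{l=0}^{L}(\heta_m^l - \heta_m^{l-1})(\varphi) = \heta_m^L(\varphi)$. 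Since different particle systems drive different levels, Minkowski together with independence across levels give
\[
\bbE\bigl[(\heta^{\rm ML}_m(\varphi) - \heta_m^L(\varphi))^2\bigr]
\leq \Big(\sum_{l=0}^{L}\bbE\bigl[(A^{N_l}_{l,m}(\varphi) - (\heta_m^l - \heta_m^{l-1})(\varphi))^2\bigr]^{1/2}\Big)^2.
\]

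First I would control the bias term $|\heta_m^L(\varphi) - \heta_m^\infty(\varphi)|$. The filter admits a Feynman--Kac representation $\heta_m^l \propto G_{m-1}\cdot(\heta_{m-1}^l M_1^l)$, and the stability results of \cite{delm:04} (leveraging Assumptions~\ref{asn:g} and \ref{asn:liperg} to obtain mixing) propagate the one-step kernel error $|||M_1^L - M_1^\infty||| = \cO(h_L^\alpha)$ from Assumption~\ref{asn:mlrates}(i) through the $m$ recursions, giving an overall bias of $\cO(h_L^\alpha)$. Choosing $L \sim -\log_2(\varepsilon)/\alpha$ then forces the squared bias to be $\cO(\varepsilon^2)$.

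The heart of the proof is the increment bound
\[
\bbE\bigl[(A^{N_l}_{l,m}(\varphi) - (\heta^l_m - \heta^{l-1}_m)(\varphi))^2\bigr] \leq C(m,\varphi)\,\frac{h_l^{\beta/2}}{N_l}.
\]
The exponent $\beta/2$ (rather than $\beta$) is the key structural difference from standard MLMC and is imposed by the coupled resampling step. A direct computation gives $\alpha_m^l = \sum_i w_{m,1}^{l,i}\wedge w_{m,2}^{l,i} = 1 - \tfrac{1}{2}\sum_i |w_{m,1}^{l,i} - w_{m,2}^{l,i}|$, so the probability of the independent-resampling branch equals the total variation between the empirical weight vectors. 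By the Lipschitz bound on $G$ in Assumption~\ref{asn:g}(ii) this total variation is controlled by $\frac{1}{N_l}\sum_i |U_{m,1}^{l,i} - U_{m,2}^{l,i}|$, which by Cauchy--Schwarz and Assumption~\ref{asn:mlrates}(ii) is $\cO(h_l^{\beta/2})$ in $L^2$. I would run the argument as an induction on $m$: decompose the increment error into (a) a conditional Monte Carlo fluctuation at time $m$ given the history, (b) a coupling-loss contribution from the independent branch of the resampling, and (c) a propagated error from the inductive hypothesis at time $m-1$. Each piece is bounded with the Marcinkiewicz--Zygmund inequality, Assumption~\ref{asn:liperg} to push Lipschitz test functions through the mutation kernels, and the total-variation estimate just derived.

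Finally, with variance proxy $V_l = h_l^{\beta/2}$ and per-sample cost $C_l = h_l^{-\gamma}$, minimizing the total cost $\sum_l C_l N_l$ subject to $\sum_l V_l/N_l = \cO(\varepsilon^2)$ by Lagrange multipliers yields $N_l \propto \varepsilon^{-2} K(\varepsilon)\,2^{-l(\beta/2+\gamma)/2}$, where $K(\varepsilon) = \sum_{l=0}^L 2^{l(\gamma-\beta/2)/2}$. Summing the resulting geometric or harmonic series for $l = 0,\dots,L$ with $L \sim -\log_2(\varepsilon)/\alpha$ produces the three regimes of Table~\ref{tab:mlpfcases} according to the sign of $\beta - 2\gamma$. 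The main obstacle is the inductive variance estimate above: one must verify that the coupling loss introduced at each resampling step accumulates only multiplicatively under the Feynman--Kac dynamics, so that the increment rate is genuinely $h_l^{\beta/2}/N_l$ with a constant depending on $m$ but not on $l$; this is where the novelty of the coupled-resampling procedure pays off.
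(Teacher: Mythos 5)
Your overall architecture matches the paper's: the same bias/variance split via $(a+b)^2\le 2a^2+2b^2$, a bias of $\cO(h_L^{\alpha})$ propagated through the Feynman--Kac recursion, the crucial increment rate $h_l^{\beta/2}/N_l$ (correctly attributed to the independent branch of the coupled resampling, whose probability is the total variation between the empirical weight vectors and hence $\cO(h_l^{\beta/2})$ in $L^2$ --- this is exactly Lemma \ref{small_probability} and Theorem \ref{thm:main}), and the same allocation $N_l\propto \varepsilon^{-2}K(\varepsilon)2^{-(\beta+2\gamma)l/4}$. The one place you genuinely diverge is in recombining the levels: you bound $\mathbb{E}[(\sum_l \tilde{A}^{N_l}_{l,m}(\varphi))^2]$ by Minkowski, i.e.\ by $(\sum_l \mathbb{E}[(\tilde{A}^{N_l}_{l,m}(\varphi))^2]^{1/2})^2$, whereas the paper expands the square and uses independence \emph{between} levels so that the cross terms become products of increment \emph{biases}, each of size $\cO(\sqrt{B_l}/N_l)$ by Lemma \ref{lem:inc_bias}; Corollary \ref{cor:finrate} then shows these cross terms are higher order than the diagonal $\sum_l B_l/N_l$. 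Your route avoids having to prove the increment-bias lemma, but it is strictly weaker, and the weakness bites in the boundary row of Table \ref{tab:mlpfcases}: when $\beta=2\gamma$ one has $V_l/N_l = N_0^{-1}$ for every $l$, so Minkowski yields $((L+1)N_0^{-1/2})^2=(L+1)^2/N_0=(L+1)\varepsilon^2$ with the stated $N_0=\varepsilon^{-2}(L+1)$ --- a loss of $|\log\varepsilon|$ in the MSE (equivalently, restoring $\cO(\varepsilon^2)$ MSE inflates the cost to $\cO(\varepsilon^{-2}|\log\varepsilon|^3)$ rather than $\cO(\varepsilon^{-2}\log(\varepsilon)^2)$). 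Since that case is realized, e.g., by Euler with constant diffusion coefficient (Corollary \ref{cor:euler_mlpf}), you should replace the Minkowski step by the independence expansion plus a bound of order $\sqrt{B_l(m)}/N_l$ on $\mathbb{E}[\tilde{A}^{N_l}_{l,m}(\varphi)]$, which requires its own induction along the lines of Lemma \ref{lem:inc_bias}. The remaining substance --- the inductive $L^2$ estimate for a single coupled pair of filters, handled in the paper by Theorems \ref{theo:filt_error} and \ref{theo:pred} together with the decomposition of Lemma \ref{lem:kengo_lem} --- is only sketched in your plan, but the sketch identifies the right three error sources and the right tools.
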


{\it Proof. }
Notice that 
\[
\begin{split}
\mathbb{E}\Big[
\Big(
\heta_m^{\rm ML}(\varphi)
- \heta^\infty_m(\varphi) \Big)^2   \Big]  & \leq 
  2\mathbb{E}\Big[
\Big(
\heta_m^{\rm ML}(\varphi) - \heta_m^L(\varphi) \Big)^2   \Big] \\
& + 2 \Big( \heta_m^L(\varphi)
- \heta_m^\infty(\varphi) \Big)^2.
\end{split}\]
First, note that a theoretical kernel $M^{L,\infty}$
can be defined to generate coupled pairs of particles $(U_{m,1}^{L,\infty}, U^{L,\infty}_{m,2})$ 
for $m\geq 1$ with marginals 
$U^{L,\infty}_{m,1} \sim \heta^\infty_m$ and $U^{L,\infty}_{m,2} \sim \heta^L_m$ satisfying
the Assumptions \ref{asn:mlrates}.  Assumption \ref{asn:g}(i) then ensures the rate carries over to the update
and finally induction shows the second term is $\cO(h_l^{2\alpha})$.
The rest of the proof follows from 
Theorems \ref{thm:mlpf1} and \ref{thm:main}, and Corollary \ref{cor:finrate}, 
noting that the terms in Corollary \ref{cor:finrate} are analogous to the $V_l$ 
terms from the standard multilevel theory described in the previous section.
Therefore, upon choosing $L\propto -\log(\varepsilon)$, and $N_l \propto N_0 2^{-(\beta+2\gamma)l/4}$ with 
$N_0 \propto \varepsilon^{-2} K(\varepsilon)$ and $K(\varepsilon)$ as in the second column of 
Table \ref{tab:mlpfcases}, the results follow exactly as for MLMC above. 
$\square$

This Theorem can be immediately applied to the particular example of the diffusion 
\eqref{eq:sde}, with appropriate discretization method.  
This is made explicit and precise in the following Corollary.

\begin{cor} Theorem \ref{thm:mlpf} holds for the diffusion example \eqref{eq:sde} 
under Assumptions \ref{asn:diff}, given a numerical method which satisfies Assumptions \ref{asn:mlrates}.  
Furthermore Assumptions \ref{asn:mlrates} hold 
for Euler-Marayuma method, with $\alpha=\beta=\gamma=1$.  
For a constant diffusion $b(x)=b$, one has $\beta=2$.
\label{cor:euler_mlpf}
\end{cor}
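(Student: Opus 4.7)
The plan is to split the corollary into two tasks: (a) reducing to the hypotheses of Theorem \ref{thm:mlpf}, namely verifying Assumption \ref{asn:liperg} for the Euler--Maruyama kernels, and (b) verifying Assumption \ref{asn:mlrates} explicitly for Euler--Maruyama with rates $\alpha=\beta=\gamma=1$ in general, and $\beta=2$ when $b$ is constant. Once these are in hand, the cost estimates follow by reading off the appropriate case from Table \ref{tab:mlpfcases}: the general case lies in $\beta<2\gamma$ giving $\cO(\varepsilon^{-5/2})$, while $b$ constant lies in $\beta=2\gamma$ giving $\cO(\varepsilon^{-2}|\log\varepsilon|^2)$.

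For Assumption \ref{asn:liperg}, I would construct a synchronous coupling: start two Euler paths from $u,u'$ driven by the \emph{same} Brownian increments. The Lipschitz hypothesis \ref{asn:diff}(ii) together with a discrete Gr\"onwall argument gives $\bbE|X^{l}_{1}(u)-X^{l}_{1}(u')|^2 \leq C|u-u'|^2$ with $C$ uniform in $l$, and composing with a Lipschitz $\varphi$ gives the claim. Assumption \ref{asn:mlrates}(iii) is immediate: simulating $X^l_1$ takes $k_l=h_l^{-1}$ elementary steps, so $\gamma=1$.

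For the strong error (ii), I would use the coupled kernel \eqref{eq:euler1stepcoarse}, in which the coarse step is driven by $\xi_{2m}+\xi_{2m+1}$. The standard $L^p$ strong convergence theorem for Euler--Maruyama (\cite{KlPl92}), combined with the moment bound from Assumption \ref{asn:diff}(iii) and an induction that uses the assumed initial-pair bound, yields $\bbE|X^l_1-X^{l-1}_1|^p \leq C h_l^{p/2}$ uniformly in the level. Composing with Lipschitz $\varphi$ gives $(\bbE|\varphi(U^l_1)-\varphi(U^l_2)|^p)^{2/p} = \cO(h_l)$, so $\beta=1$. When $b(x)\equiv b$ is constant, $\partial_x b\equiv 0$, so the Milstein correction $\tfrac12 bb'((\Delta W)^2-h)$ vanishes and Euler--Maruyama coincides with Milstein, whose $L^p$ strong order is $1$, upgrading $\beta$ to $2$.

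The weak error (i) is the delicate piece and will be the main obstacle. The pointwise bias $|\bbE[\varphi(U^l_1)-\varphi(U^l_2)]|=\cO(h_l)$ I would obtain from a Talay--Tubaro expansion: uniform ellipticity \ref{asn:diff}(i) and $C^2$ regularity of $a,b$ give enough smoothness of the backward Kolmogorov semigroup $u(t,x):=\bbE[\varphi(X^\infty_t)\mid X^\infty_0=x]$ to make the discrete-time expansion rigorous and yield $\alpha=1$. The genuinely harder uniform-in-initial-condition seminorm $|||M^l_1-M^l_2|||$ I would bound by
\[
\sup_{u \in \bbR^d}\bigl| u^l(1,u) - u^{l-1}(1,u)\bigr|,
\]
where $u^l(t,\cdot)$ is the semigroup of the level-$l$ scheme, and control this by the same Talay--Tubaro machinery applied pointwise in $u$, using that the constants involve only suprema of derivatives of $u$ and moments of the Euler scheme, both controlled uniformly in $u$ via Assumption \ref{asn:diff}. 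The main delicacy is ensuring the constants in the expansion do not deteriorate as $\|u\|\to\infty$; this is handled by noting that $\varphi$ is bounded and by exploiting the uniform-in-$l$ propagation-of-moments estimates for Euler--Maruyama. Assembling (i)--(iii) and feeding them into Theorem \ref{thm:mlpf} completes the proof.
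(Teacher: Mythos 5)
Your proposal is correct and follows the same overall structure as the paper: verify Assumption \ref{asn:liperg} via a synchronous coupling with Gr\"onwall (the paper does this through Corollary V.11.7 of \cite{MR1780932} plus Gr\"onwall, see Proposition \ref{prop_EM}), obtain $\beta=1$ from the standard $L^p$ strong Euler rate by passing through the exact solution driven by the same Brownian path, get $\gamma=1$ from counting steps, and upgrade to $\beta=2$ for constant $b$ via the coincidence with Milstein. Your cost read-offs from Table \ref{tab:mlpfcases} ($\cO(\varepsilon^{-5/2})$ in general, $\cO(\varepsilon^{-2}\log(\varepsilon)^2)$ for constant diffusion) are also correct. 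The one place you genuinely diverge is the seminorm $|||M^l_1-M^l_2|||$ in Assumption \ref{asn:mlrates}(i): you propose controlling it by a pointwise-in-$u$ Talay--Tubaro expansion of the two discrete semigroups, whereas the paper instead invokes the transition-density estimate (2.4) of \cite{MR1843179}, which bounds the difference of the Euler kernels at consecutive levels directly under the uniform ellipticity of Assumption \ref{asn:diff}(i) and yields $\cO(h_l)$ uniformly in the initial condition with no semigroup analysis. The density-estimate route is shorter and sidesteps the delicacy you correctly flag; note also that the classical Talay--Tubaro expansion for merely Lipschitz test functions requires the regularizing effect of ellipticity together with more smoothness of $a,b$ than the $C^2$ guaranteed by Assumption \ref{asn:diff}, so your route would need either additional regularity hypotheses or a Bally--Talay-type argument to be made fully rigorous; the paper's citation-based treatment quietly absorbs the same issue.
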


{\it Proof. }
Assumptions \ref{asn:diff} on \eqref{eq:sde} guarantee the required Assumptions \ref{asn:liperg} 
on the kernels $M^{L,\infty}$ \cite{oksendal1995}.  
For Euler-Marayuma method the kernels $M^l$ also satisfy Assumptions \ref{asn:liperg} and
\ref{asn:mlrates} \cite{GrahamTalay, MR1843179}, 
and the rates can be found in \cite{GrahamTalay, KlPl92}.  The improved rate $\beta=2$ for $b(x)=b$ 
is well-known, as the Euler method coincides with the Milstein method in the case of constant diffusion 
\cite{GrahamTalay}.
$\square$

The main theorem which provides the appropriate convergence rate for the MLPF Theorem \ref{thm:mlpf} is now presented.

\begin{theorem}
Assume \ref{asn:liperg} for each level for the mutation kernel(s) and \ref{asn:g} for the updates.
Then for any $m\geq 0$, $1\leq L<+\infty$, $\varphi\in \mathcal{B}_b(\mathbb{R}^d)\cap\textrm{\emph{Lip}}(\mathbb{R}^d)$, there exists a
constant $C(m,\varphi) = \max_{0\leq l \leq L} C_l(m,\varphi)$  
such that
\[
\begin{split}
& \mathbb{E}\Bigg[
\Bigg(
\heta_m^{\rm ML}
- \heta_m^L(\varphi) \Bigg)^2   \Bigg] \leq \\
& C(m,\varphi) \sum_{l=0}^L \frac{1}{N_l}
\Bigg( B_l(m) + \sum_{q\neq l=0}^L \frac{\sqrt{B_l(m)B_q(m)}}{N_q}
\Bigg)
\end{split}
\]
\begin{align}
\nonumber
B_l(n)  =& \Big(\sum_{p=0}^n \mathbb{E}[\{|u_{p,1}^{l,1}-u_{p,2}^{l,1}|\wedge 1 \}
^{2}]^{1/2}+\|\eta_{p,1}^l-\eta_{p,2}^l\|_{\textrm{tv}} \\
& + \sum_{p=1}^n|||M_{p,1}^l-M_{p,2}^l|||\Big)^2.
\label{eq:bee}
\end{align}
\noindent 
Subscripts are added to indicate level-dependence, and the constants have been absorbed into the single one.  
\label{thm:mlpf1}
\end{theorem}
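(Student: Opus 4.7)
The plan is to expand the squared error and exploit the independence of the coupled particle systems run at distinct levels. Writing $\xi_m^l := A^{N_l}_{l,m}(\varphi) - [\heta_m^l - \heta_m^{l-1}](\varphi)$, with the convention $\heta_m^{-1}\equiv 0$, the global error telescopes to $\heta_m^{\rm ML}(\varphi) - \heta_m^L(\varphi) = \sum_{l=0}^L \xi_m^l$. Since the $L+1$ coupled particle filters are driven by independent randomness,
\[
\mathbb{E}\!\left[\Big(\sum_{l=0}^L \xi_m^l\Big)^{2}\right] = \sum_{l=0}^L \mathbb{E}[(\xi_m^l)^2] + \sum_{\substack{l,q=0 \\ l\neq q}}^L \mathbb{E}[\xi_m^l]\,\mathbb{E}[\xi_m^q].
\]
This reduces the theorem to two single-level estimates, a variance bound $\mathbb{E}[(\xi_m^l)^2] \leq C\, B_l(m)/N_l$ and a bias bound $|\mathbb{E}[\xi_m^l]| \leq C\sqrt{B_l(m)}/N_l$; substituting these into the display yields exactly the right-hand side of the statement, with $C(m,\varphi) = \max_l C_l(m,\varphi)$.

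Both single-level bounds I would prove by induction on $m$ using a Feynman-Kac decomposition of the coupled particle filter, recorded as a supporting lemma in the supplement. For the variance, decompose $\xi_m^l$ into a ``pure sampling'' fluctuation of the coupled empirical measure around its conditional mean at time $m$, plus an accumulated ``coupling gap'' transported from step $m-1$ through the mutation-selection recursion. Assumption \ref{asn:liperg} is used to convert pointwise particle discrepancies $|U^{l,i}_{p,1}-U^{l,i}_{p,2}|\wedge 1$ into discrepancies of conditional laws, while Assumption \ref{asn:g} controls the amplification of these discrepancies through the weight normalization in \eqref{eq:weights}. The three terms inside the bracket defining $B_l(m)$ correspond to the three independent sources of fine-coarse mismatch at each step $p\leq m$: particle positions, predictor laws, and mutation kernels.

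The bias bound requires the sharper $\sqrt{B_l(m)}$ factor in place of $B_l(m)$. The point is that, marginally, both weighted empirical measures $\sum_i w^{l,i}_{m,j}\delta_{U^{l,i}_{m,j}}$ are standard particle-filter estimators of $\heta^l_m$ (for $j=1$) and $\heta^{l-1}_m$ (for $j=2$), each with bias of order $1/N_l$. These biases, however, are strongly correlated by the coupled construction: their difference is smaller by a factor measuring how much the two marginal systems disagree after one coupled step, which by Assumption \ref{asn:liperg} is controlled by $\sqrt{B_l(m)}$. Iterating over $m$ then produces the claimed bound.

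The main obstacle is the propagation of the coupling through the coupled resampling step of Section \ref{ssec:mlpf}. With probability $1-\alpha_m^l$ the two indices are drawn \emph{independently} from the residual distributions $Z^l_{m,h}$, so the pair is, on that step, decoupled. The crux is to show that $\mathbb{E}[1-\alpha_m^l]$ is itself bounded in terms of the total-variation gap $\|\eta^l_{m,1}-\eta^l_{m,2}\|_{\textrm{tv}}$ between the two weighted laws, so that the degradation of coupling over $m$ steps accumulates only at the rate implicit in the square bracket defining $B_l(m)$. Once combined with the contraction property of the Feynman-Kac semigroup afforded by Assumption \ref{asn:g}(i), the inductive step closes uniformly in $l$, and both single-level bounds above can be assembled into the claimed level-summed estimate.
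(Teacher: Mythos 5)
Your proposal follows essentially the same route as the paper: the same centering of each increment estimator, the same expansion of the square using independence across levels (diagonal variance terms plus products of biases off the diagonal), and the same reduction to a per-level variance bound of order $B_l(m)/N_l$ and a per-level bias bound of order $\sqrt{B_l(m)}/N_l$, which are exactly the paper's Theorem~\ref{theo:filt_error} and Lemma~\ref{lem:inc_bias}. Your sketch of how those two single-level estimates are proved (induction on $m$ via the Feynman--Kac recursion, with the decoupling probability $1-\alpha_m^l$ controlled by the fine--coarse discrepancies) also matches the supplementary arguments, so the proposal is correct and not materially different from the paper's proof.
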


{\it Proof. }
Let $\tilde{A}_{l,m}^{N_l} (\cdot) = \Big(A_{l,m}^{N_l} - (\heta_m^l - \heta_m^{l-1})\Big)(\cdot),$ where $A_{l,m}^{N_l}$ 
is defined in Equation \eqref{eq:increst}, with $\heta_m^{-1} := 0$.
Noting the independence {\it between} increments, the telescoping sum provides 
\[
\begin{split}
\mathbb{E}\Bigg[
\Big(\sum_{l=0}^L\tilde{A}_{l,m}^{N_l}(\varphi) \Big)^2   \Bigg] & =  
\sum_{l=0}^L \Bigg( \mathbb{E} \Big[\Big(\tilde{A}_{l,m}^{N_l}(\varphi)\Big)^2 \Big] \\
& +  \sum_{q\neq l=0}^L \mathbb{E} \Big(\tilde{A}_{l,m}^{N_l}(\varphi)\Big) 
\mathbb{E} \Big(\tilde{A}_{q,m}^{N_q}(\varphi)\Big) \Bigg).  
\end{split}
\]  
The bound therefore follows trivially from applying 
Theorems \ref{theo:filt_error} 
and Lemma \ref{lem:inc_bias}
from the Supplementary materials 
to each level.  
$\square$

The bound of the first term in $B_l(n)$ of \eqref{eq:bee} is limited by 
the coupled resampling, 
and is asymptotically proportional to $h_l^{\beta/2}$.  This is the reason for the reduced rate.

\section{Numerical Examples}

\subsection{Model Settings}

The numerical performance of the MLPF algorithm will be illustrated here, with a few
examples of the diffusion processes considered in this paper. Recall that the
diffusions take the following form
\begin{equation*}
  d X_t = a(X_t) dt + b(X_t) d W_t, \qquad X_0 = x_0
\end{equation*}
with $X_t\in\bbR^d$, $t\ge0$ and $\{W_t\}_{t\in[0,T]}$ a Brownian motion of
appropriate dimension. In addition, partial observations $\{y_1,\dots,y_n\}$
are available with $Y_k$ obtained at time $k\delta$, and $Y_k|X_{k\delta}$ has
a density function $G(y_k,x_{k\delta})$. 
The objective is the estimation of
$\Exp[\varphi(X_{k\delta})|y_{1:n}]$ for some test function $\varphi(x)$.
Details of each example are described below. A summary of settings can be found in
Table~\ref{tab:model}.

\paragraph{Ornstein-Uhlenbeck Process}

First, consider the following \ou process,
\begin{gather*}
  d X_t = \theta(\mu - X_t)dt + \sigma d W_t, \\
  Y_k|X_{k\delta} \sim \calN(X_{k\delta}, \tau^2), \qquad \varphi(x) = x.
\end{gather*}
An analytical solution exists for this process and the
exact value of $\Exp[X_{k\delta}|y_{1:k}]$ can be computed using a Kalman
filter. The constants in the example are, $x_0 = 0$, $\delta = 0.5$, $\theta =
1$, $\mu = 0$, $\sigma = 0.5$, and $\tau^2 = 0.2$.

\paragraph{Geometric Brownian Motion}

Next consider the \gbm process,
\begin{gather*}
  d X_t = \mu X_t dt+ \sigma X_t d W_t, \\
  Y_k|X_{k\delta} \sim \calN(\log X_{k\delta}, \tau^2), \qquad
  \varphi(x) = x,
\end{gather*}
This process 
also admits an analytical solution, by using the transformation $Z_t = \log X_t$.
The constants are, $x_0 = 1$, $\delta = 0.001$, $\mu = 0.02$, $\sigma = 0.2$
and $\tau^2 = 0.01$.

\paragraph{Langevin Stochastic Differential Equation}

Here the \sde is given by 
\begin{gather*}
  d X_t = \frac{1}{2}\nabla\log\pi(X_t) dt+ \sigma dW_t, \\
  Y_k|X_{k\delta} \sim \calN(0, \tau^2e^{X_{k\delta}}),\qquad
  \varphi(x) = \tau^2e^x
\end{gather*}
where $\pi(x)$ denotes a probability density function. 
The density $\pi(x)$ is chosen as the Student's $t$-distribution with degrees of freedom $\nu = 10$. 
The other constants are, $x_0 = 0$, $\delta = 1$, $\sigma = 1$ and $\tau^2 = 1$.
Real daily \sp log return data (from August 3, 2011 to July 24, 2015,
normalized to unity variance) is used.

\paragraph{An \sde with a Non-Linear Diffusion Term}

Last, 
the following \sde is considered,
\begin{gather*}
  d X_t = \theta(\mu - X_t)dt + \frac{\sigma}{\sqrt{1 + X_t^2}} d W_t, \\
  Y_k|X_{k\delta} \sim \calL(X_{k\delta}, s), \qquad \varphi(x) = x,
\end{gather*}
where $\calL(m,s)$ denotes the Laplace distribution with location $m$ and scale
$s$. The constants are $x_0 = 0$, $\delta = 0.5$, $\theta = 1$, $\mu = 0$,
$\sigma = 1$ and $s = \sqrt{0.1}$. 
This example is abbreviated $\nlm$ 
in the remainder of this section. 

\begin{table}
  \begin{tabu}{X[2l]X[2l]X[l]X[2l]X[l]}
    \toprule
    Example & $a(x)$ & $b(x)$ & $G(y;x)$ & $\varphi(x)$ \\
    \midrule
    \ou
    & $\theta(\mu - x)$
    & $\sigma$
    & $\calN(x, \tau^2)$
    & $x$ \\
    \gbm
    & $\mu x$
    & $\sigma x$
    & $\calN(\log x, \tau^2)$~~~~
    & $x$ \\
    Langevin
    & $\frac{1}{2}\nabla\log\pi(x)$~~~
    & $\sigma$
    & $\calN(0, \tau^2e^x)$
    & $\tau^2 e^x$ \\
    \nlm
    & $\theta(\mu - x)$
    & $\frac{\sigma}{\sqrt{1 + x^2}}$
    & $\calL(x, s)$
    & $x$ \\
    \bottomrule
  \end{tabu}
  \caption{Model settings}
  \label{tab:model}
\end{table}

\subsection{Simulation Settings}

For each example, 
multilevel estimators are considered at levels $L = 1,\dots,8$. For the \ou
and \gbm processes, the ground truth is computed through a Kalman filter. For
the two other examples, 
results from particle filters at level $L = 9$ are used 
as approximations to the ground truth.

For each level of \mlpf algorithm, $N_l = \lfloor N_{0,L} h_l^{(\beta
  + 2 \gamma) / 4} \rfloor$ particles are used, where $h_l = M_l^{-1} = 2^{-l}$
is the width of the Euler-Maruyama discretization; $\gamma$ is the rate of
computational cost, which is $1$ for the examples considered here; and $\beta$
is the rate of the strong error. The 
value of $\beta$ is $2$ if the
diffusion term $b(x)$ is constant and $1$ in general. 
The value $N_{0,L} \propto \varepsilon^{-2} K(\varepsilon)$ 
is set according to Table \ref{tab:mlpfcases}.  
For the cases in which the diffusion term is constant, we let $N_{0,L} = 2^{2L} L$,
while for the other cases $N_{0,L} = 2^{(9/4)L} $.
Resampling is done adaptively. For the plain particle filters, resampling is
done when \ess is less than a quarter of the particle numbers. For the coupled
filters, we use the \ess of the coarse filter as the measurement of
discrepancy. Each simulation is repeated 100 times.

\subsection{Results}

First consider the rate $\beta/2$ of the strong error. This rate can be
estimated either by the sample variance of $\hat\varphi_l(X_{n\delta}) =
\sum_{i=1}^{N_l}\{w_1^{l,i}\varphi(U_{n,1}^{l,i}) -
w_2^{l,i}\varphi(U_{n,2}^{l,i})\}$, or by $1 - p_l(n)$, where $p_l(n)$ is the
probability of the coupled particles having the same resampling index at time
step $n$. Both $\var[\hat\varphi_l(X_{n\delta})]$ and $p_l(n)$ can be estimated
using the samples from \mlpf simulations. Figures~\ref{fig:rate_v}
and~\ref{fig:rate_p} show the estimated variance and value of $1 - p_l(n)$
against $h_l$, respectively. The estimated rates for the \ou and Langevin
examples are about $1$. 
For the other two examples, where
the diffusion term $b(x)$ is non-constant, the estimated rates are about $0.5$.
This is consistent with Corollary \ref{cor:euler_mlpf}.

\begin{figure}
  \includegraphics[height=1\columnwidth,width=1\columnwidth]{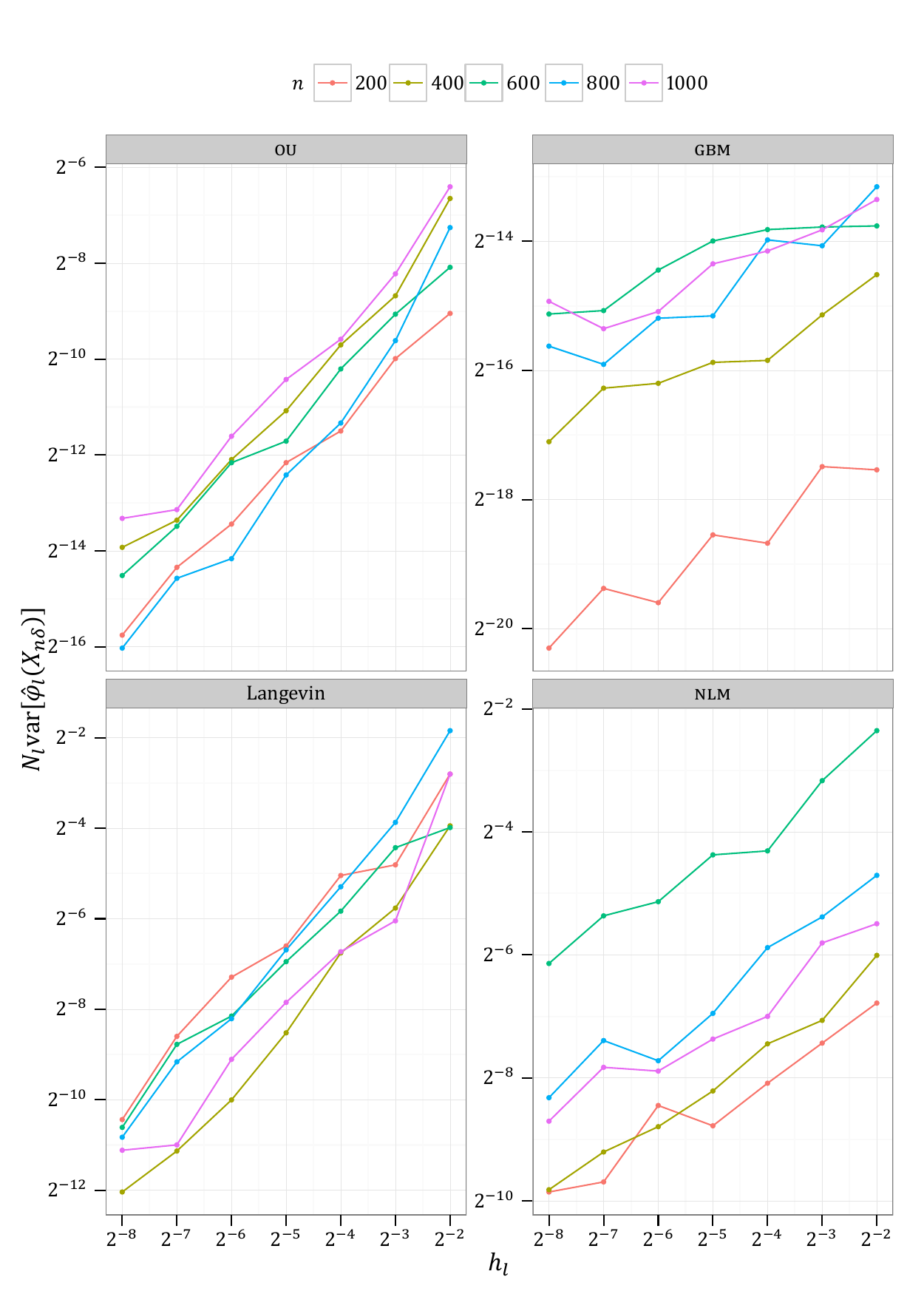}
  \caption{Rate estimates using the variance.}
  \label{fig:rate_v}
\end{figure}

\begin{figure}
  \includegraphics[height=1\columnwidth,width=1\columnwidth]{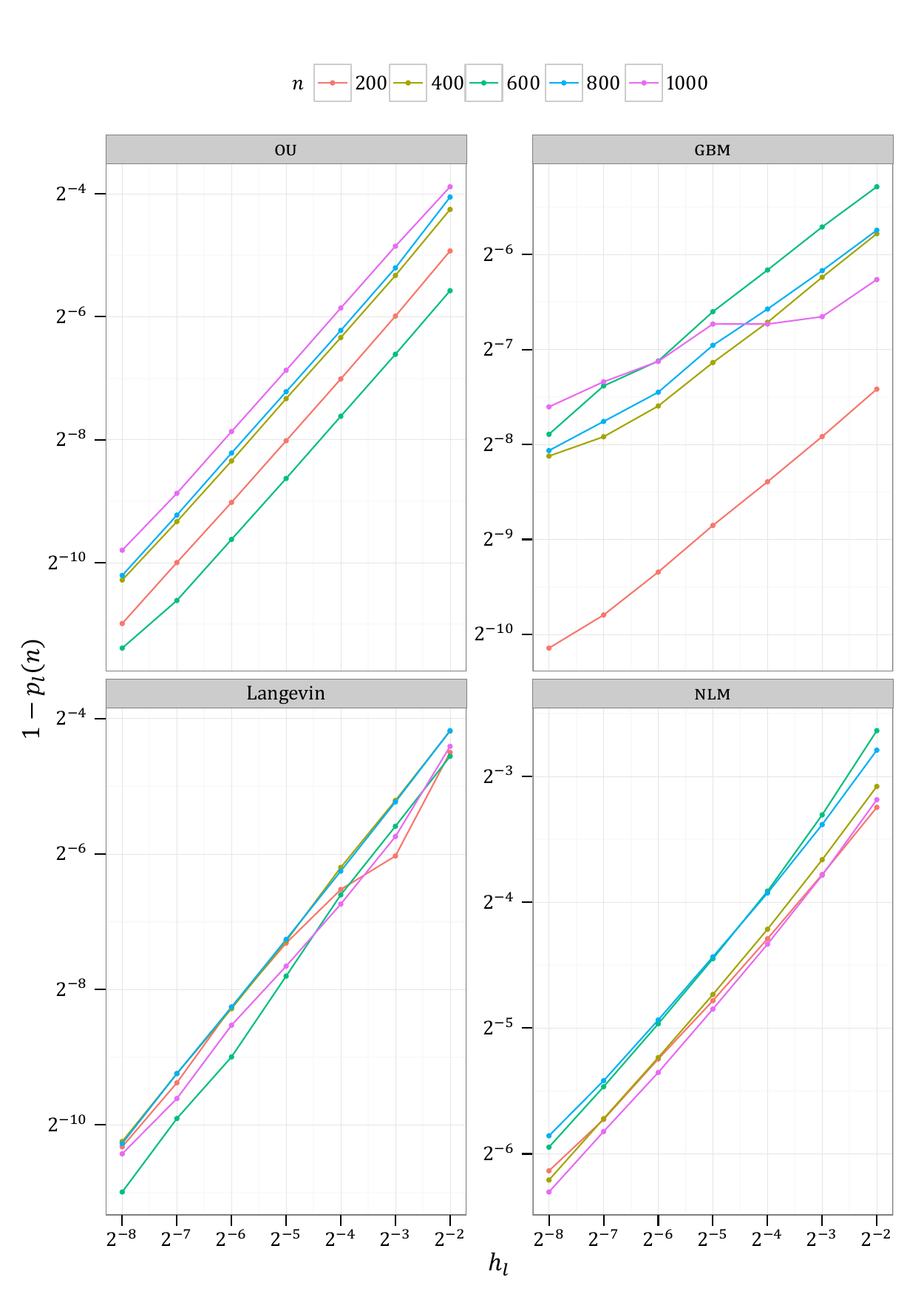}
  \caption{Rate estimates using the probability of coupling.}
  \label{fig:rate_p}
\end{figure}

Next 
the rate of cost 
vs. \mse is examined. 
This is shown
in Figure~\ref{fig:cost} and Table~\ref{tab:cost} for the estimator of
$\Exp[\varphi(X_{n\delta})|y_{1:n}]$.  This agrees with the theory, which
predicts a rate of $-1.5$ for the particle filter and a rate of $-1.25$ for the 
non-constant diffusion cases, and a logarithmic penalty on $-1$ for the others.

\begin{figure}
 \includegraphics[height=1\columnwidth,width=1 \columnwidth]{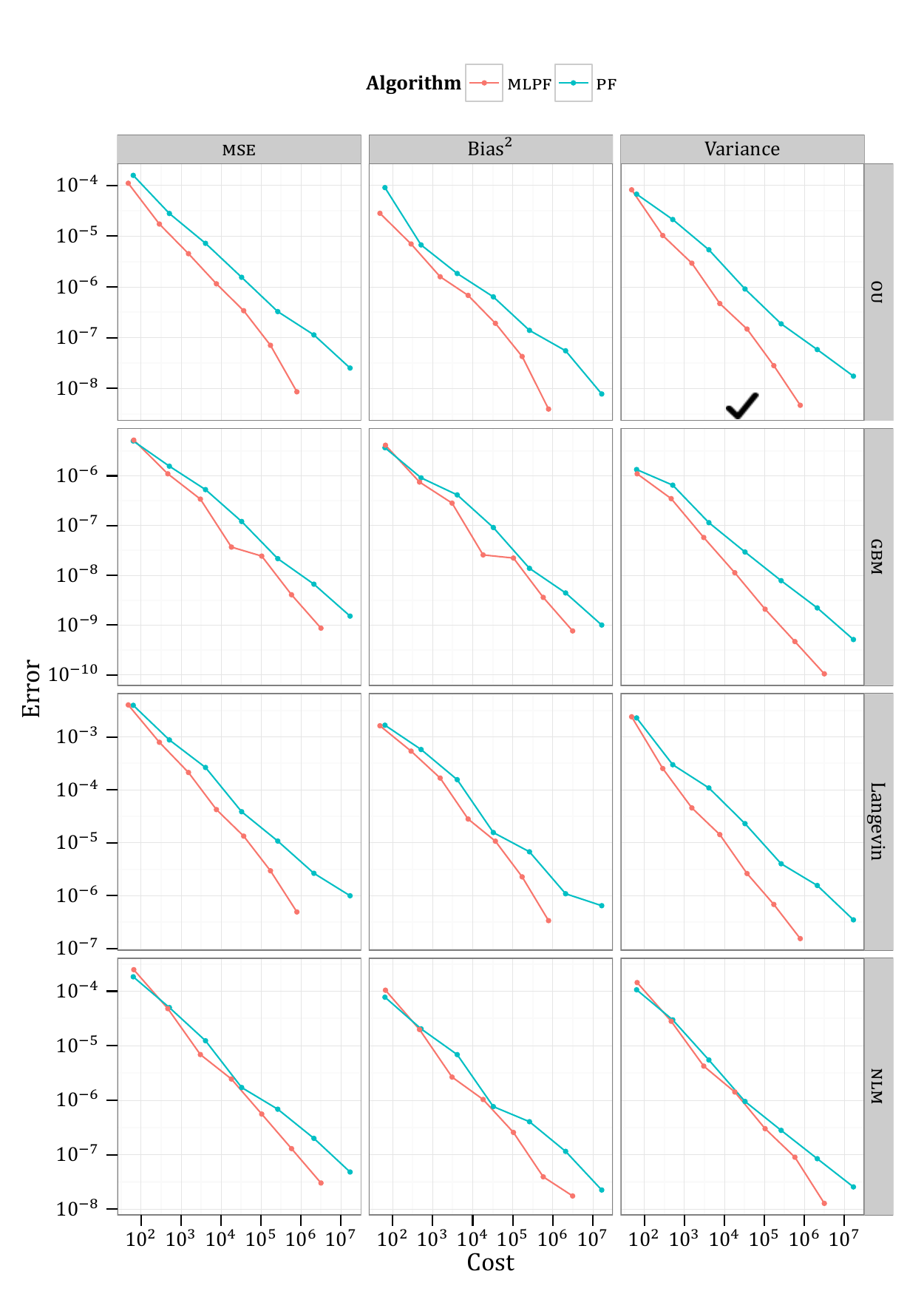}
  \caption{Cost rates as a function of MSE.}
  \label{fig:cost}
\end{figure}

\begin{table}
  \begin{tabu}{X[l]X[r]X[r]}
    \toprule
    Example & \pf & \mlpf \\
    \midrule
    \ou      & $-1.44$ & $-1.07$ \\
    \gbm     & $-1.51$ & $-1.24$\\
    Langevin & $-1.46$ & $-1.10$ \\
    \nlm     & $-1.50$ & $-1.21$ \\
    \bottomrule
  \end{tabu}
  \caption{Cost rate $\log \cC 
  \sim \log\mse$.}
  \label{tab:cost}
\end{table}

\section{Conclusions}

In this article 
a multilevel version of the particle filter has been introduced. The improvements that may be brought about by this 
approach were illustrated both theoretically and numerically. 
There are several natural extensions to this work.
First, and perhaps most importantly, is to theoretically understand the advantage of the particular
coupled resampling mechanism 
adopted in this article, in comparison to other types of coupled resampling,   
e.g.~via the variance in the CLT.  It is remarked that 
other resampling strategies were tried on these examples,
and they did not preserve a desired rate of strong convergence. However empirical results recently appeared in 
\cite{reich} which indicate that more favorable convergence rates may be preserved in certain cases 
by replacing the resampling step with a deterministic transformation. 
Second, it would be of interest to explore techniques for improving the preservation 
of coupling such that the same MLMC rate $\beta$ carries through to the MLPF, rather than $\beta/2$, 
e.g.~via coupling the independent pairs of particle filters in some way, 
or perhaps through a different resampling strategy involving antithetic variables \cite{giles_acta}.
Finally, one can use the approach in e.g.~\cite{jasra} to improve the stability of the particle filtering algorithm.

\subsubsection*{Acknowledgements}
AJ, KL \& YZ were supported by an AcRF tier 2 grant: R-155-000-143-112. AJ is affiliated with the Risk Management Institute and the Center for Quantitative Finance at NUS.  KL was supported by a Laboratory Directed Research and Development (LDRD) Strategic Hire grant from Oak Ridge National Laboratory (ORNL).  KL \& AJ were additionally supported by King Abdullah University of Science and Technology (KAUST).  KK was supported by CREST, JST and Grant-in-Aid for Young Scientists (B) 24740062.

\appendix

\section{Set Up}

\subsection{Basic Notations}

Consider a sequence of random variables $(v_n)_{n\geq 0}$ with $v_n=(u_{n,1},u_{n,2})\in \cU\times \cU =: \cV$. 
For $\mu\in\mathcal{P}(\cV)$ (the probability measures on $\cV$) and function $\varphi\in\mathcal{B}_b(\cU)$ (bounded-measurable, real-valued) we will write:
$$
\mu(\varphi_j) = \int_\cV \varphi(u_j) \mu(dv)\qquad j\in\{1,2\},\ v=(u_1,u_2).
$$
Write the $j\in\{1,2\}$ marginals (on $u_j$) of a probability $\mu\in\mathcal{P}(\cV)$ as $\mu_j$.
Define the potentials:
$G_n:\cU\rightarrow\mathbb{R}_+$.\ 
Let $\eta_0\in\mathcal{P}(\cV)$ and define
Markov kernels $M_{n}:\cV\rightarrow \mathcal{P}(\cV)$
and $M_{n,j}:\cU\rightarrow \mathcal{P}(\cU)$ with $n\geq 1$ and $j\in\{1,2\}$. It is explictly assumed that for $\varphi\in\mathcal{B}_b(\cU)$ the $j$ marginals satisfy:
\begin{equation}
M_{n}(\varphi_j)(v) = \int_\cV \varphi(u_j') M_n(v,dv') =  \int_\cU \varphi(u_j') M_{n,j}(u_j,du_j').\label{eq:marginal_m}
\end{equation}
We adopt the definition for $(v,\tilde{v})=((u_1,u_2),(\tilde{u}_1,\tilde{u}_2))$ of a sequence of Markov kernels 
$(\bar{M}_n)_{n\geq 1}$, $\bar{M}_n:\cV\times \cV\rightarrow\mathcal{P}(\cV)$
$$
\bar{M}_n((v,\tilde{v}),dv') := M_n((u_1,\tilde{u}_2),dv').
$$
In the main text $\cU=\bbR^d$, and in the references that follow $\cU$ should replace $\bbR^d$ in Assumptions \ref{asn:g} and \ref{asn:liperg}.

\subsection{Marginal Feynman-Kac Formula}

Given the above notations and defintions we define the $j-$marginal Feynman-Kac formulae:
$$
\gamma_{n,j}(du_n) = \int \prod_{p=0}^{n-1} G_p(u_p) \eta_{0,j}(du_0) \prod_{p=1}^n M_{p,j}(u_{p-1},du_p)
$$
with for $\varphi\in\mathcal{B}_b(\cU)$
$$
\eta_{n,j}(\varphi) = \frac{\gamma_{n,j}(\varphi)}{\gamma_{n,j}(1)}.
$$
One can also define the sequence of Bayes operators, for $\mu\in\mathcal{P}(\cU)$
$$
\Phi_{n,j}(\mu)(du) = \frac{\mu(G_{n-1}M_{n,j}(\cdot,du))}{\mu(G_{n-1})}\qquad n\geq 1.
$$
Recall that for $n\geq 1$, $\eta_{n,j} = \Phi_{n,j}(\eta_{n-1,j})$.

\subsection{Feynman-Kac Formulae for Multi-Level Particle Filters}

For $\mu\in\mathcal{P}(\cV)$ define for $u\in \cU$, $v\in \cV$:
\begin{eqnarray*}
G_{n,j,\mu}(u) & = & \frac{G_{n}(u)}{\mu_j(G_{n})} \\
\bar{G}_{n,\mu}(v) & = & G_{n,1,\mu}(u_1) \wedge G_{n,2,\mu}(u_2).
\end{eqnarray*}

Now for any sequence $(\mu_n)_{n\geq 0}$, $\mu_n\in\mathcal{P}(\cV)$, define the sequence of operators $(\bar{\Phi}_n(\mu_{n-1}))_{n\geq 1}$:
$$
\bar{\Phi}_n(\mu_{n-1})(dv_n) = 
$$
$$
\mu_{n-1}(\bar{G}_{n-1,\mu_{n-1}})\frac{\mu_{n-1}(\bar{G}_{n-1,\mu_{n-1}}M_n(\cdot,dv_n))}{\mu_{n-1}(\bar{G}_{n-1,\mu_{n-1}})} +(1-\mu_{n-1}(\bar{G}_{n-1,\mu_{n-1}}))\times
$$
$$
\mu_{n-1}\otimes
\mu_{n-1}
\Big(
\Big[
\frac{G_{n-1,1,\mu_{n-1}}-\bar{G}_{n-1,\mu_{n-1}}}{\mu_{n-1}(G_{n-1,1,\mu_{n-1}}-\bar{G}_{n-1,\mu_{n-1}})}\otimes
\frac{G_{n-1,2,\mu_{n-1}}-\bar{G}_{n-1,\mu_{n-1}}}{\mu_{n-1}(G_{n-1,2,\mu_{n-1}}-\bar{G}_{n-1,\mu_{n-1}})}
\Big]
\bar{M}_n(\cdot,dv_n)
\Big)
$$
Now define $\bar{\eta}_n := \bar{\Phi}_n(\bar{\eta}_{n-1})$ for $n\geq 1$, $\bar{\eta}_0=\eta_0$.

\begin{prop}\label{prop:marginal}
Let $(\mu_n)_{n\geq 0}$ be a sequence of probability measures on $\cV$ with $\mu_0=\eta_0$ and for each $j\in\{1,2\}$, $\varphi\in\mathcal{B}_b(\cU)$
$$
\mu_{n}(\varphi_j) = \eta_{n,j}(\varphi).
$$
Then:
$$
\eta_{n,j}(\varphi) = \bar{\Phi}_n(\mu_{n-1})(\varphi_j).
$$
In particular $\bar{\eta}_{n,j}=\eta_{n,j}$ for each $n\geq 0$.
\end{prop}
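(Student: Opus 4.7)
The plan is to unfold $\bar{\Phi}_n(\mu_{n-1})(\varphi_j)$ directly from its definition and show that the two summands collapse, after cancellations, to the single-marginal Feynman--Kac update $\Phi_{n,j}(\eta_{n-1,j})(\varphi)$. The only structural inputs needed are the marginalisation identity \eqref{eq:marginal_m}, which gives $M_n(v,\varphi_j) = M_{n,j}(u_j,\varphi)$ and $\bar{M}_n((v,\tilde v),\varphi_j) = M_{n,j}((u_1,\tilde u_2)_j,\varphi)$ (a function of only one of the four coordinates in $(v,\tilde v)$), together with the normalisation fact $\mu_{n-1}(G_{n-1,j,\mu_{n-1}}) = \mu_{n-1,j}(G_{n-1})/\mu_{n-1,j}(G_{n-1}) = 1$, which is where the hypothesis $\mu_{n-1,j} = \eta_{n-1,j}$ enters.

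First I would handle the coupled summand: its prefactor $\mu_{n-1}(\bar G_{n-1,\mu_{n-1}})$ cancels the denominator, producing $\mu_{n-1}\!\left(\bar G_{n-1,\mu_{n-1}}(v)\,M_{n,j}(u_j,\varphi)\right)$. Next, for the independent residual summand, integrating out the coordinate whose index is not $j$ in the product $\mu_{n-1}\otimes\mu_{n-1}$ removes the corresponding normalised residual density (which integrates to $1$), and the remaining denominator $\mu_{n-1}(G_{n-1,j,\mu_{n-1}} - \bar G_{n-1,\mu_{n-1}}) = 1 - \mu_{n-1}(\bar G_{n-1,\mu_{n-1}})$ exactly cancels the prefactor $(1 - \mu_{n-1}(\bar G_{n-1,\mu_{n-1}}))$. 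What remains is $\mu_{n-1}\!\left((G_{n-1,j,\mu_{n-1}}(u_j) - \bar G_{n-1,\mu_{n-1}}(v))\,M_{n,j}(u_j,\varphi)\right)$.

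Adding the two contributions, the $\bar G$ terms cancel pointwise, leaving $\mu_{n-1}(G_{n-1,j,\mu_{n-1}}(u_j)\,M_{n,j}(u_j,\varphi)) = \mu_{n-1,j}(G_{n-1}M_{n,j}(\cdot,\varphi))/\mu_{n-1,j}(G_{n-1})$, which by hypothesis equals $\eta_{n-1,j}(G_{n-1}M_{n,j}(\cdot,\varphi))/\eta_{n-1,j}(G_{n-1}) = \Phi_{n,j}(\eta_{n-1,j})(\varphi) = \eta_{n,j}(\varphi)$. This establishes the first claim.

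The ``in particular'' part is then a clean induction on $n$: the base case $\bar\eta_{0,j} = \eta_{0,j}$ holds by definition ($\bar\eta_0 = \eta_0$), and if both marginals of $\bar\eta_{n-1}$ coincide with $\eta_{n-1,j}$, the hypothesis of the main statement is met with $\mu_{n-1} = \bar\eta_{n-1}$, so $\bar\eta_{n,j} = \bar\Phi_n(\bar\eta_{n-1})(\cdot_j) = \eta_{n,j}$. The main obstacle is really only bookkeeping: one must track carefully which of the two product-measure factors integrates away when testing against $\bar M_n(\cdot,\varphi_j)$, and verify that the residual normalisation lines up with the mixture weight exactly because $\mu_{n-1,j}(G_{n-1})$ divides both $\bar G_{n-1,\mu_{n-1}}$ and $G_{n-1,j,\mu_{n-1}}$ by the same constant.
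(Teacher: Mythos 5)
Your proposal is correct and follows essentially the same route as the paper's proof: unfold $\bar{\Phi}_n(\mu_{n-1})(\varphi_j)$ using the marginalisation identity \eqref{eq:marginal_m}, cancel the mixture weights against the normalisations so the two summands collapse to $\mu_{n-1}(G_{n-1,j,\mu_{n-1}}M_{n,j}(\varphi))$, and then invoke the marginal hypothesis to identify this with $\Phi_{n,j}(\eta_{n-1,j})(\varphi)=\eta_{n,j}(\varphi)$. The only nitpick is that the normalisation $\mu_{n-1}(G_{n-1,j,\mu_{n-1}})=1$ holds by the definition of $G_{n-1,j,\mu_{n-1}}$ alone; the hypothesis $\mu_{n-1,j}=\eta_{n-1,j}$ is actually used only at the final substitution step, which you also carry out correctly.
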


\begin{proof}
By assumption $M_n(\varphi_j)=M_{n,j}(\varphi)$, so
we have
\begin{eqnarray*}
\bar{\Phi}_n(\mu_{n-1})(\varphi_j) & = & \mu_{n-1}(\bar{G}_{n-1,\mu_{n-1}}M_{n,j}(\varphi)) + 
\mu_{n-1}
\Big(
\big[
G_{n-1,j,\mu_{n-1}}-\bar{G}_{n-1,\mu_{n-1}}
\big]
M_{n,j}(\varphi)
\Big)\\
& = & \mu_{n-1}(G_{n-1,j,\mu_{n-1}}M_{n,j}(\varphi))\\
& = & \eta_{n-1,j}(G_{n-1,j,\mu_{n-1}}M_{n,j}(\varphi))\\
& = & \Phi_{n,j}(\eta_{n-1,j})(\varphi) \\
& = & \eta_{n,j}(\varphi).
\end{eqnarray*}
\end{proof}

\begin{rem}
It is established that for any $\mu\in\mathcal{P}(\cV)$
\begin{equation}
\bar{\Phi}_n(\mu)(\varphi_j) = \Phi_{n,j}(\mu_{j})(\varphi).\label{eq:marginal_crazy_gives_standard}
\end{equation}
This property is very useful in subsequent calculations.
\end{rem}

The point of the proposition is that if one has a system that samples 
$\bar{\eta}_0$, $\bar{\Phi}_1(\bar{\eta}_0)$ and so on, that marginally, one has exactly the marginals
$\eta_{n,j}$ at each time point. In practice one cannot do this, but rather  
runs the following system:
$$
\Big(\prod_{i=1}^N \bar{\eta}_0(dv_0^i)\Big)\Big(\prod_{p=1}^n \prod_{i=1}^N \bar{\Phi}_{p}(\bar{\eta}^N_{p-1})(dv_p^i)\Big)
$$
which is exactly one pair of particle filters at a given level of the MLPF.

\section{Normalizing Constant}

First note that one can use the following 
$$
\prod_{p=0}^{n-1} \bar{\eta}_{p,j}^N(G_p)  
$$
to estimate $\gamma_{n,j}(1)$. It is now proven that this estimate is unbiased.

In particular, it will be shown that 
$$
(\prod_{p=0}^{n-1}\bar{\eta}_{p,j}^N(G_p))\bar{\eta}_{n,j}^N(\varphi)
$$
is an unbiased estimator of $\gamma_{n,j}(\varphi)$, and the above follows immediately. 
The proof is by induction and the result at step 
$0$ is clearly true. 
Now suppose it is true at step 
$n-1$ and consider the estimator above:
$$
\mathbb{E}\Big[\Big(\prod_{p=0}^{n-1}\bar{\eta}_{p,j}^N(G_p)\Big)\bar{\eta}_{n,j}^N(\varphi) \Big|\mathscr{F}_{n-1}^N\Big] = 
\Big(\prod_{p=0}^{n-1}\bar{\eta}_{p,j}^N(G_p)\Big) \mathbb{E}\Big[\bar{\eta}_{n,j}^N(\varphi)\Big|\mathscr{F}_{n-1}^N\Big]
$$
where $\mathscr{F}_{n-1}^N$ is the filtration generated by the particle system up-to time $n-1$.
Now, by the exchangeability of the particle system and \eqref{eq:marginal_crazy_gives_standard} :
$$
\mathbb{E}\Big[\bar{\eta}_{n,j}^N(\varphi)
\Big|\mathscr{F}_{n-1}^N\Big]
=
\bar{\Phi}_n(\bar{\eta}_{n-1}^N)(\varphi_j)
= \Phi_{n,j}(\bar{\eta}_{n-1,j}^N)(\varphi).
$$
So
$$
\mathbb{E}\Big[(\prod_{p=0}^{n-1}\bar{\eta}_{p,j}^N(G_p))\bar{\eta}_{n,j}^N(\varphi)\Big]
=
\mathbb{E}\Big[(\prod_{p=0}^{n-2}\bar{\eta}_{p,j}^N(G_p))\bar{\eta}_{n-1,j}^N(G_{n-1}M_{n,j}(\varphi))\Big].
$$
The induction hypothesis and standard results complete the proof.

\section{$\mathbb{L}_2-$Error}

The squared $\mathbb{L}_2-$Error (MSE)
is considered here.

\subsection{Results for the Filter}

Let
\begin{align}
\label{eq:beea}
B(n)  =& \Big(\sum_{p=0}^n \mathbb{E}[\{|u_{p,1}^1-u_{p,2}^1|\wedge 1 \}
^{2}]^{1/2}+\|\eta_{p,1}-\eta_{p,2}\|_{\textrm{tv}}+ \sum_{p=1}^n|||M_{p,1}-M_{p,2}|||\Big)^2.
\end{align}

\begin{theorem}\label{theo:filt_error}
Assume \ref{asn:g} and \ref{asn:liperg}. Then for any $n\geq 0$, $\varphi\in \mathcal{B}_b(\cU)\cap\textrm{\emph{Lip}}(\cU)$ there exist a $C(n,\varphi)<+\infty$
such that 
$$
\mathbb{E}\Bigg[\Bigg(
\frac{\bar{\eta}_n^N(G_{n,1}\varphi_1)}{\bar{\eta}_n^N(G_{n,1})} - 
\frac{\bar{\eta}_n^N(G_{n,2}\varphi_2)}{\bar{\eta}_n^N(G_{n,2})} -
\frac{\bar{\eta}_n(G_{n,1}\varphi_1)}{\bar{\eta}_n(G_{n,1})} +
\frac{\bar{\eta}_n(G_{n,2}\varphi_2)}{\bar{\eta}_n(G_{n,2})}
\Bigg)^2\Bigg] \leq 
\frac{C(n,\varphi)}{N}B(n).
$$

\end{theorem}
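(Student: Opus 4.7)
\textbf{Plan for Theorem \ref{theo:filt_error}.} The strategy is to reduce the expression to an $\mathbb{L}_2$ bound on an unnormalized particle-filter error applied to a ``difference'' test function, and then to apply a standard Del Moral-style martingale decomposition for the coupled particle system, using the coupling structure to control each local error by the discrepancy quantities appearing in $B(n)$.

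First, reduce the ratio of empirical measures to unnormalized form. By Assumption \ref{asn:g}(i) one has $G_{n,j} \in (c^{-2},c^{2})$, so both denominators $\bar{\eta}_n^N(G_{n,j})$ and $\bar{\eta}_n(G_{n,j})$ are uniformly bounded away from zero and infinity. Using the identity $\frac{a}{b}-\frac{c}{d} = \frac{a-c}{b}+\frac{c(d-b)}{bd}$ applied twice, the expression inside the expectation can be rewritten as a finite linear combination of terms of the form
$$\bigl(\bar{\eta}_n^N - \bar{\eta}_n\bigr)(H_1) - \bigl(\bar{\eta}_n^N - \bar{\eta}_n\bigr)(H_2),$$
where each $H_j:\cV\to\mathbb{R}$ depends only on $u_j$ and is a bounded Lipschitz function built from $\varphi$ and $G_{n,j}$. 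It therefore suffices to bound $\mathbb{E}\bigl[\bigl((\bar{\eta}_n^N - \bar{\eta}_n)(H_1 - H_2)\bigr)^2\bigr]$ by $CB(n)/N$ for such $H_1,H_2$.

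Next, apply the standard telescoping decomposition for Feynman--Kac particle systems \cite{delm:04} to the coupled system $\bar{\eta}_n^N$:
$$\bigl(\bar{\eta}_n^N - \bar{\eta}_n\bigr)(F) = \sum_{p=0}^{n}\bigl[\bar{\eta}_p^N - \bar{\Phi}_p(\bar{\eta}_{p-1}^N)\bigr]\bigl(\bar{Q}_{p,n}F\bigr),$$
where $\bar{Q}_{p,n}$ is the semigroup of composed Bayes--mutation operators from $p$ to $n$. Conditional on the filtration $\mathscr{F}_{p-1}^N$, each summand is the empirical average of $N$ i.i.d.\ centered random variables, so by conditional orthogonality and Minkowski's inequality the problem reduces to showing
$$\mathbb{E}\bigl[\mathrm{Var}_{\bar{\Phi}_p(\bar{\eta}_{p-1}^N)}\bigl(\bar{Q}_{p,n}(H_1-H_2)\bigr)\bigr] \;\leq\; C \bigl(\textrm{contribution to } B(n)\bigr).$$

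The main obstacle is to propagate the ``difference'' structure through the nonlinear semigroup $\bar{Q}_{p,n}$. The plan is to show by backward induction on $n-p$ that $\bar{Q}_{p,n}(H_1-H_2)(v) = h_{1,p}(u_1) - h_{2,p}(u_2) + r_p(v)$, with $h_{j,p}$ bounded Lipschitz on $\cU$ uniformly in $n,p$ (using Assumption \ref{asn:liperg} and the uniform lower bound on $G_{n,j}$), and with the residual $r_p$ absorbing errors of order $\|\eta_{q,1}-\eta_{q,2}\|_{\mathrm{tv}}$ (coming from the normalization of $G_{q,j}$ via \eqref{eq:marginal_crazy_gives_standard}) and $|||M_{q,1}-M_{q,2}|||$ (coming from the mutation step), for $q=p,\dots,n$. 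Given this decomposition, the variance under $\bar{\Phi}_p(\bar{\eta}_{p-1}^N)$ of $h_{1,p}(u_1)-h_{2,p}(u_2)$ is bounded by $\mathbb{E}_{\bar{\Phi}_p(\bar{\eta}_{p-1}^N)}[(h_{1,p}(u_1)-h_{2,p}(u_2))^2]$, which by Lipschitzness and boundedness of the $h_{j,p}$ is controlled by $\mathbb{E}[|u_{p,1}^{1}-u_{p,2}^{1}|^2\wedge 1]$ (the first summand in \eqref{eq:beea}). Collecting the three contributions from each step $p=0,\dots,n$, squaring the telescoping sum, and applying Cauchy--Schwarz over $p$ produces precisely $B(n)/N$, completing the argument. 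The inductive step is the delicate part, since it requires that the coupled resampling operator $\bar{\Phi}_p$ does not destroy the additively separated ``$h_1-h_2$'' form, which follows from the construction of $\bar{M}_n$ and the identity of Proposition \ref{prop:marginal}.
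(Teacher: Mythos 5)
Your overall architecture is viable and genuinely different from the paper's. The paper proves this theorem by applying the six-term algebraic identity of Lemma \ref{lem:kengo_lem} to the four ratios and then controlling each term exactly as in the proof of Theorem \ref{theo:pred}, which itself proceeds by \emph{forward} induction on $n$ via the one-step decomposition $\bar{\eta}_n^N-\bar{\eta}_n=[\bar{\eta}_n^N-\bar{\Phi}_n(\bar{\eta}_{n-1}^N)]+[\bar{\Phi}_n(\bar{\eta}_{n-1}^N)-\bar{\eta}_n]$. You instead propose the global Del Moral telescoping over $p=0,\dots,n$ with local conditional-variance control; this is essentially the ``unrolled'' version of the same induction, and your key structural observation is correct: by the marginal property \eqref{eq:marginal_m} the coupled unnormalized semigroup sends a separated function $H_1(u_1)-H_2(u_2)$ \emph{exactly} to $Q_{p,n,1}(H_1)(u_1)-Q_{p,n,2}(H_2)(u_2)$ (these are precisely the $Q_{p,n,j}$ of Lemma \ref{lem:inc_bias}), so your residual $r_p$ is in fact zero and the discrepancies $\|\eta_{q,1}-\eta_{q,2}\|_{\textrm{tv}}$ and $|||M_{q,1}-M_{q,2}|||$ enter only when you compare $Q_{p,n,1}(H_1)$ to $Q_{p,n,2}(H_2)$ at nearby arguments, as in \eqref{eq:etaqb}. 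That part of your plan is sound and arguably cleaner than you anticipated.

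The genuine gap is in your first reduction step. It is not true that the quantity $\frac{\bar{\eta}_n^N(G_{n,1}\varphi_1)}{\bar{\eta}_n^N(G_{n,1})}-\frac{\bar{\eta}_n^N(G_{n,2}\varphi_2)}{\bar{\eta}_n^N(G_{n,2})}-\frac{\bar{\eta}_n(G_{n,1}\varphi_1)}{\bar{\eta}_n(G_{n,1})}+\frac{\bar{\eta}_n(G_{n,2}\varphi_2)}{\bar{\eta}_n(G_{n,2})}$ reduces to a combination of terms of the form $(\bar{\eta}_n^N-\bar{\eta}_n)(H_1-H_2)$. Applying $\frac{a}{b}-\frac{c}{d}=\frac{a-c}{b}+\frac{c(d-b)}{bd}$ to each marginal and then recombining the $j=1$ and $j=2$ pieces unavoidably produces cross terms in which a \emph{single-marginal} fluctuation such as $(\bar{\eta}_n^N-\bar{\eta}_n)(G_{n,2})$ --- which is only $\cO(N^{-1/2})$ in $\mathbb{L}_2$, with no factor of $\sqrt{B(n)}$ --- multiplies a level-difference of order $\sqrt{B(n)}$, either deterministic (e.g.\ $\frac{\bar{\eta}_n(G_{n,1}\varphi_1)}{\bar{\eta}_n(G_{n,1})}-\frac{\bar{\eta}_n(G_{n,2}\varphi_2)}{\bar{\eta}_n(G_{n,2})}$) or empirical (e.g.\ $\bar{\eta}_n^N(G_{n,1}-G_{n,2})$). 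These are exactly Terms 2--6 of Lemma \ref{lem:kengo_lem} in the paper's treatment, and closing the bound for them requires two ingredients absent from your plan: a deterministic bias estimate showing the limiting ratios differ by $\cO(\sqrt{B(n)})$ (the role of \eqref{eq:etaqb} and Lemma \ref{lem:DPJ}), and marginal $\mathbb{L}_p$ convergence of each single-level particle approximation at rate $N^{-1/2}$ (Proposition \ref{prop:lp_conv}), combined via Cauchy--Schwarz to yield $\cO(B(n)/N)$. Your telescoping machinery handles only the ``difference-of-test-functions'' terms; without the cross terms and these two auxiliary estimates the argument does not close.
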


\begin{proof}
Follows directly from Lemma \ref{lem:kengo_lem} and similar calculations to the proof of Theorem \ref{theo:pred} for the term $\mathbb{E}\Big[\Big([\bar{\Phi}_n(\bar{\eta}_{n-1}^N)-\bar{\eta}_n](\varphi_1-\varphi_2)\Big)^2\Big]$.
\end{proof}

\begin{lem}\label{lem:only_one}
Assume \ref{asn:g} and \ref{asn:liperg}. 
Then for any $n\geq 1$, $\varphi\in \mathcal{B}_b(\cU)$ there exist a $C(n,\varphi)<+\infty$ such that
$$
\bigg|\mathbb{E}\bigg[\frac{\bar{\eta}_n^N(G_{n,1}\varphi_1)}{\bar{\eta}_n^N(G_{n,1})} 
-\frac{\bar{\eta}_n(G_{n,1}\varphi_1)}{\bar{\eta}_n(G_{n,1})}\bigg]\bigg|  +
\bigg|\mathbb{E}\bigg[\frac{\bar{\eta}_n^N(G_{n,2}\varphi_2)}{\bar{\eta}_n^N(G_{n,2})} - 
\frac{\bar{\eta}_n(G_{n,2}\varphi_2)}{\bar{\eta}_n(G_{n,2})}
\bigg]\bigg| \leq \frac{C(n,\varphi)}{N}.
$$
\end{lem}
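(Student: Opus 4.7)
The plan is to handle each marginal $j\in\{1,2\}$ separately and reduce the claim to the classical $\mathcal{O}(1/N)$ bias bound for a Feynman-Kac particle filter. By Proposition \ref{prop:marginal} together with the identity \eqref{eq:marginal_crazy_gives_standard}, the $j$-th marginal $\bar\eta^N_{n,j}$ of the coupled system evolves, conditional on the past, exactly as a standard particle filter targeting $\eta_{n,j}$, with one-step kernel $\Phi_{n,j}(\bar\eta^N_{n-1,j})$. Thus it suffices to apply a standard single-system argument to each $j$ and sum the two resulting bounds.

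Fix $j$ and introduce the unnormalized particle estimators
\[
\tilde A^N := \Big(\prod_{p=0}^{n-1}\bar\eta^N_{p,j}(G_p)\Big)\bar\eta^N_{n,j}(G_n\varphi),
\qquad
\tilde B^N := \Big(\prod_{p=0}^{n-1}\bar\eta^N_{p,j}(G_p)\Big)\bar\eta^N_{n,j}(G_n),
\]
with deterministic limits $A := \gamma_{n,j}(G_n\varphi)$ and $B := \gamma_{n,j}(G_n)>0$. The unbiasedness calculation in the ``Normalizing Constant'' section gives $\bbE[\tilde A^N]=A$ and $\bbE[\tilde B^N]=B$, while the target ratio and its limit equal $\tilde A^N/\tilde B^N$ and $A/B$ respectively. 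Using the algebraic identity
\[
\frac{\tilde A^N}{\tilde B^N}-\frac{A}{B}
= \frac{1}{B}\Big(\tilde A^N-\tfrac{A}{B}\tilde B^N\Big)
+ \Big(\tfrac{1}{\tilde B^N}-\tfrac{1}{B}\Big)\Big(\tilde A^N-\tfrac{A}{B}\tilde B^N\Big),
\]
and noting that $\bbE[\tilde A^N-(A/B)\tilde B^N]=0$, the first summand vanishes in expectation; Cauchy-Schwarz then gives
\[
\Big|\bbE\Big[\tfrac{\tilde A^N}{\tilde B^N}-\tfrac{A}{B}\Big]\Big|
\leq
\bbE\Big[\tfrac{(\tilde B^N-B)^2}{B^2(\tilde B^N)^2}\Big]^{1/2}
\bbE\Big[\big(\tilde A^N-\tfrac{A}{B}\tilde B^N\big)^2\Big]^{1/2}.
\]

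Assumption \ref{asn:g}(i) supplies $G_p>c^{-1}$ for every $p$, hence $\tilde B^N\geq c^{-(n+1)}>0$ almost surely and $B/\tilde B^N$ is bounded uniformly in $N$. The problem thereby reduces to two standard $\mathbb{L}_2$ estimates for Feynman-Kac particle approximations with bounded potentials, namely $\bbE[(\tilde B^N-B)^2]\leq C(n)/N$ and $\bbE[(\tilde A^N-(A/B)\tilde B^N)^2]\leq C(n,\varphi)/N$, obtained from the generation-by-generation Marcinkiewicz-Zygmund decomposition combined with the uniform bounds on the $G_p$ (see e.g.\ Del Moral 2004, Ch.\ 7). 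Multiplying the two $\mathcal{O}(N^{-1/2})$ factors produces the claimed $\mathcal{O}(1/N)$ bias, and summing over $j\in\{1,2\}$ completes the argument. The only real obstacle is confirming that the classical $\mathbb{L}_2$ bound does transfer to the marginal of the coupled system, but this is immediate from Proposition \ref{prop:marginal}: the one-step propagation of the marginal law of $\bar\eta^N_{n,j}$ coincides with the usual mutation-selection recursion, so the classical proof applies verbatim, with all relevant constants inheriting only dependence on $n$, $\varphi$, and the bound $c$ on $G$.
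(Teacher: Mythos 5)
Your proof is correct and follows essentially the same route as the paper: the paper's own proof of this lemma is a one-line appeal to the classical $\mathcal{O}(1/N)$ bias bound for a standard Feynman--Kac particle filter (Proposition 9.5.6 of Del Moral 2013), justified exactly as you argue by the fact that each marginal of the coupled system evolves as a standard filter (Proposition \ref{prop:marginal} and \eqref{eq:marginal_crazy_gives_standard}, with Proposition \ref{prop:lp_conv} supplying the $\mathbb{L}_p$ rates). The only difference is that you derive the classical bound yourself---via unbiasedness of the unnormalized estimators from the ``Normalizing Constant'' section, the exact algebraic identity for the ratio, and Cauchy--Schwarz on the product of two $\mathcal{O}(N^{-1/2})$ terms, with Assumption \ref{asn:g}(i) controlling the denominator---where the paper simply cites the result; both versions are sound.
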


\begin{proof}
The proof follows by using the bias result of Proposition 9.5.6 of \cite{delmoral1} (which holds in our context, see also Proposition \ref{prop:lp_conv}). 
\end{proof}

\begin{lem}\label{lem:inc_bias}
Assume \ref{asn:g} and \ref{asn:liperg}. Then for any $n\geq 1$, $\varphi\in \mathcal{B}_b(\cU)$ there exist a $C(n,\varphi)<+\infty$ such that
$$
\bigg|\mathbb{E}\bigg[\frac{\bar{\eta}_n^N(G_{n,1}\varphi_1)}{\bar{\eta}_n^N(G_{n,1})} 
- \frac{\bar{\eta}_n^N(G_{n,2}\varphi_2)}{\bar{\eta}_n^N(G_{n,2})}
-\frac{\bar{\eta}_n(G_{n,1}\varphi_1)}{\bar{\eta}_n(G_{n,1})}  +
\frac{\bar{\eta}_n(G_{n,2}\varphi_2)}{\bar{\eta}_n(G_{n,2})}
\bigg]\bigg| \leq C(n,\varphi)\frac{\sqrt{B(n)} 
}{N}.
$$
\end{lem}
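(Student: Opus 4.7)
The plan is to perform a second-order expansion of each normalized update and then exploit the coupling structure of the joint empirical measure $\bar{\eta}_n^N$. Using the algebraic identity $a/b - c/d = (a-c)/b - c(b-d)/(bd)$ with $F_j := \bar{\eta}_n(G_{n,j}\varphi_j)/\bar{\eta}_n(G_{n,j})$, one obtains the representation
\[
\frac{\bar{\eta}_n^N(G_{n,j}\varphi_j)}{\bar{\eta}_n^N(G_{n,j})} - F_j = \frac{(\bar{\eta}_n^N - \bar{\eta}_n)(\widetilde{G}_{n,j})}{\bar{\eta}_n^N(G_{n,j})}, \qquad \widetilde{G}_{n,j} := G_{n,j}(\varphi_j - F_j),
\]
where, crucially, $\bar{\eta}_n(\widetilde{G}_{n,j}) = 0$. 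Expanding $1/\bar{\eta}_n^N(G_{n,j}) = 1/\bar{\eta}_n(G_{n,j}) - (\bar{\eta}_n^N - \bar{\eta}_n)(G_{n,j})/[\bar{\eta}_n(G_{n,j})\bar{\eta}_n^N(G_{n,j})]$ and then subtracting the $j=1$ and $j=2$ versions splits the target quantity into a linear piece $L^N$ and a quadratic piece $Q^N$.

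The linear piece takes the form $L^N = (\bar{\eta}_n^N - \bar{\eta}_n)(\widetilde{g}_n)$ for the coupled test function $\widetilde{g}_n := \widetilde{G}_{n,1}/\bar{\eta}_n(G_{n,1}) - \widetilde{G}_{n,2}/\bar{\eta}_n(G_{n,2})$, which satisfies $\bar{\eta}_n(\widetilde{g}_n) = 0$ and, by the Lipschitz assumptions on $G$ and $\varphi$ together with the definition \eqref{eq:beea}, has $L_2(\bar{\eta}_n)$-norm of order $\sqrt{B(n)}$. The quadratic piece $Q^N$ is a difference of cross-products of centered quantities, which I would rearrange by adding and subtracting so that every summand contains at least one \emph{coupled} centered factor of the form $(\bar{\eta}_n^N - \bar{\eta}_n)(\widetilde{G}_{n,1} - \widetilde{G}_{n,2})$ or $(\bar{\eta}_n^N - \bar{\eta}_n)(G_{n,1} - G_{n,2})$. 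For $Q^N$ a direct application of Cauchy-Schwarz in $\mathbb{L}_2$, combining the coupled $L_2$-bound underlying Theorem \ref{theo:filt_error} (of size $O(\sqrt{B(n)/N})$ for the coupled factor) with the standard $O(1/\sqrt{N})$ bound for the remaining centered factor, delivers the desired $O(\sqrt{B(n)}/N)$ contribution to $|E[Q^N]|$.

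For the linear piece $L^N$, the plan is to adapt the bias expansion of Proposition 9.5.6 of \cite{delmoral1} (already invoked in Lemma \ref{lem:only_one}) to the coupled system $\bar{\eta}_n^N$: the usual $1/N$ bias term is generated by second-order residuals whose prefactor is an $L_2(\bar{\eta}_n)$-type seminorm of the test function rather than its sup-norm, so that substituting $\widetilde{g}_n$ with its $\sqrt{B(n)}$-small $L_2$-norm produces the sharpened bound $|E[L^N]| \leq C\sqrt{B(n)}/N$. This is the main obstacle: one must carry out a coupled analogue of the Del Moral bias expansion and verify by induction on $n$ that the intermediate test functions obtained via the Feynman-Kac operators $Q_{p,n}$ inherit the smallness from $\widetilde{g}_n$. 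The delicate point is the coupled resampling mechanism, in which the pair of indices is drawn jointly with probability $\alpha_n^l = \bar{\eta}_n^N(G_{n,1} \wedge G_{n,2})$ and independently with probability $1 - \alpha_n^l$; the induction must therefore propagate a coupled $L_2$-estimate through both branches using Assumption \ref{asn:liperg} (Lipschitz mutation) and Assumption \ref{asn:g} (boundedness and Lipschitz continuity of $G$), exactly as in the proof of Theorem \ref{theo:filt_error}.
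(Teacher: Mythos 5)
Your plan is essentially the paper's own proof, reorganized: your quadratic piece $Q^N$ corresponds to the cross terms of the algebraic identity in Lemma \ref{lem:kengo_lem}, which the paper likewise bounds by Cauchy--Schwarz combining the coupled $\cO(\sqrt{B(n)/N})$ rate of Theorem \ref{theo:filt_error} with the standard $\cO(N^{-1/2})$ estimates and the $\cO(N^{-1})$ marginal bias of Lemma \ref{lem:only_one}, while your linear piece $L^N$ is exactly the coupled-bias quantity the paper controls by the induction \eqref{eq:etaq} on $\mathbb{E}[(\bar{\eta}_p^N-\bar{\eta}_p)(Q_{p,n,1}(\varphi)-Q_{p,n,2}(\varphi))]$. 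The one step you defer as ``the main obstacle'' --- propagating the $\sqrt{B(n)}/N$ bias through the Feynman--Kac semigroup and the coupled resampling --- is where all the work lies, and the paper carries it out not by modifying Del Moral's Proposition 9.5.6 but by reapplying Lemma \ref{lem:kengo_lem} at each time $p$ (the update reintroduces normalized ratios), so that the two leading terms regenerate the induction hypothesis at the previous time while the remaining terms are absorbed exactly as in your treatment of $Q^N$.
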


\begin{proof}
For $p\le n$ and for $j=1,2$, let
\begin{align*}
Q_{p,n,j}(\varphi)(v_p)=\int G_{n,j}(u_{n,j})\varphi(u_{n,j})\prod_{p\le q< n}G_{q,j}(u_{q,j})M_{q,j}(u_{q,j},du_{q+1,j})\ (v_p=(u_{p,1},u_{p,2})). 
\end{align*}
Observe that
\begin{align}
	\overline{\eta}_p(Q_{p,n,1}(\varphi))-\overline{\eta}_p(Q_{p,n,2}(\varphi))&= \cO\left(\|\eta_{p,1}-\eta_{p,2}\|_{\textrm{tv}}+\sum_{p\le q<n}|||M_{p,1}-M_{p,2}|||\right)\nonumber\\
	&=\cO(\sqrt{B(n)}). 
	\label{eq:etaqb}
\end{align}
We prove the following by induction on $p\le n$:
\begin{align}\label{eq:etaq}
\bbE[(\bar{\eta}_p^N- \bar{\eta}_p)( Q_{p,n,1}(\varphi) - Q_{p,n,2}(\varphi))]  \le C(n,\varphi)\frac{\sqrt{B(n)}}{N}. 
\end{align}
The expectation is $0$ for $p=0$ by definition. Observe that
\[
\begin{split}
\bbE[(\bar{\eta}_{p+1}^N- \bar{\eta}_{p+1})(Q_{p+1,n,1}(\varphi) - Q_{p+1,n,2}(\varphi))] & =
\bbE[(\bar{\Phi}_{p+1}(\bar{\eta}_p^N)- \bar{\eta}_{p+1})( Q_{p+1,n,1}(\varphi) - Q_{p+1,n,2}(\varphi))]  \\
& = \bbE \bigg[ \frac{\bar{\eta}_p^N (Q_{p,n,1}(\varphi))}{\bar{\eta}_p^N (G_{p,1})}  
- \frac{\bar{\eta}_p^N(Q_{p,n,2}(\varphi))}{\bar{\eta}_p^N(G_{p,2})} \\
& -\frac{\bar{\eta}_p(Q_{p,n,1}(\varphi))}{\bar{\eta}_p(G_{p,1})}  +
\frac{\bar{\eta}_p(Q_{p,n,2}(\varphi))}{\bar{\eta}_p(G_{p,2})}
\bigg]. 
\end{split}
\]
Thus by taking $p=n$, the proof is complete if we can show (\ref{eq:etaq}). 
To prove (\ref{eq:etaq}), 
the departure point is Lemma \ref{lem:kengo_lem}, letting
$a=\bar{\eta}_p^N (Q_{p,n,1}(\varphi))$, $A=\bar{\eta}_p^N (G_{p,1})$,
$b=\bar{\eta}_p^N(Q_{p,n,2}(\varphi))$, $B=\bar{\eta}_p^N(G_{p,2})$,
  $c=\bar{\eta}_p(Q_{p,n,1}(\varphi))$, $C=\bar{\eta}_p(G_{p,1})$, 
  $d=\bar{\eta}_p(Q_{p,n,2}(\varphi))$, and $D=\bar{\eta}_p(G_{p,2})$.
Note the following estimates hold, by Thm. 3.1 of \cite{MR1843179}
\begin{equation}
\bbE[|a-c|^2]^{1/2}, ~ \bbE[|b-d|^2]^{1/2}, ~ \bbE[|A-C|^2]^{1/2}, ~ \bbE[|B-D|^2]^{1/2} = \cO(N^{-1/2}),
\label{eq:el2}
\end{equation}
as well as the following, by Lemma \ref{lem:only_one}
\begin{equation}
\bbE[a] -c , ~ \bbE[b]-d, ~ \bbE[A]-C, ~ \bbE[B] - D  = \cO(N^{-1}).
\label{eq:exp}
\end{equation}
Also, by (\ref{eq:etaqb}), 
\begin{equation}
c-d, ~ C-D = \cO\left(\sqrt{B(n)}\right).  
\label{eq:weak}
\end{equation}
Hence, by Equations \eqref{eq:weak} and \eqref{eq:exp} (noting that $c,C,d,D$ are not
random), the last 4 terms of Lemma \ref{lem:kengo_lem} are bounded by 
$\frac{C(n,\varphi)}{N}\sqrt{B(n)}$.

Observe that the first two terms of Lemma \ref{lem:kengo_lem} can be further decomposed into
\[
\begin{split}
\bbE \bigg[\frac{a-b-(c-d)}{A} - \frac{b [A-B-(C-D)]}{AB} \bigg] & = \frac{\bbE [a-b-(c-d)]}{C} - \frac{d\bbE [A-B-(C-D)]}{CD} 
\\ 
& - \bbE\bigg[ \frac{(A-C) [ a-b-(c-d)]}{AC} \bigg] \\
& - \bbE\bigg [ [A-B-(C-D)] \frac{(C-A)Db + (D-B)Ab + (b-d)AB}{ABCD} \bigg].
\end{split}
\]
The last two expectations above were $\cO(\sqrt{B(n)}/N)$ by applying Cauchy-Schwartz inequality  and 
using \eqref{eq:el2} and Theorem \ref{theo:filt_error}.  
Now, the first two terms above will be dealt with using the inductive hypothesis. Hence the proof is complete.

\end{proof}

\subsection{Results for the Predictor}

\begin{theorem}\label{theo:pred}
Assume \ref{asn:g} and \ref{asn:liperg}. 
Then for any $n\geq 0$, $\varphi\in \mathcal{B}_b(\cU)\cap\textrm{\emph{Lip}}(\cU)$ 
there exist a $C(n,\varphi)<+\infty$
such that 
$$
\mathbb{E}\Big[\Big([\bar{\eta}_n^N-\bar{\eta}_n](\varphi_1-\varphi_2)\Big)^2\Big] \leq \frac{C(n,\varphi)}{N}B(n). 
$$

\end{theorem}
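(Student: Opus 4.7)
The plan is an induction on $n$ using the classical particle-filter error decomposition (cf.~\cite{delmoral1}), exploiting the fact that $\varphi_1-\varphi_2$ is small precisely when the coupled pair $(u_1,u_2)$ is close. For the base case $n=0$, the particles are i.i.d.\ from $\bar\eta_0=\eta_0$, so a standard variance computation gives
\[
\mathbb{E}\big[\big([\bar\eta_0^N-\bar\eta_0](\varphi_1-\varphi_2)\big)^2\big] \leq \tfrac{1}{N}\mathbb{E}_{\eta_0}[(\varphi(u_1)-\varphi(u_2))^2].
\]
Using $\varphi\in\mathcal{B}_b(\cU)\cap\textrm{Lip}(\cU)$, the integrand is bounded by $C\|\varphi\|^2(|u_1-u_2|^2\wedge 1)$, and the right-hand side is $\leq CN^{-1}B(0)$.

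For the inductive step, assume the bound holds at $n-1$ and decompose
\[
[\bar\eta_n^N-\bar\eta_n](\varphi_1-\varphi_2) = \underbrace{[\bar\eta_n^N-\bar\Phi_n(\bar\eta_{n-1}^N)](\varphi_1-\varphi_2)}_{T_1} + \underbrace{[\bar\Phi_n(\bar\eta_{n-1}^N)-\bar\Phi_n(\bar\eta_{n-1})](\varphi_1-\varphi_2)}_{T_2},
\]
using $\bar\eta_n=\bar\Phi_n(\bar\eta_{n-1})$. Conditional on $\mathcal{F}_{n-1}^N$ the new particles are i.i.d.\ from $\bar\Phi_n(\bar\eta_{n-1}^N)$, so $\mathbb{E}[T_1^2\mid\mathcal{F}_{n-1}^N]\leq N^{-1}\bar\Phi_n(\bar\eta_{n-1}^N)((\varphi_1-\varphi_2)^2)$, which by Lipschitz of $\varphi$ is controlled by $CN^{-1}\bar\Phi_n(\bar\eta_{n-1}^N)(|u_1-u_2|^2\wedge 1)$. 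The explicit form of $\bar\Phi_n$ splits this into a common-weight contribution, on which $u_1=u_2$ before mutation so that the post-mutation squared distance is controlled by $|||M_{n,1}-M_{n,2}|||$ via Assumption~\ref{asn:liperg}, and an independent-resampling contribution of mass $1-\bar\eta_{n-1}^N(\bar G_{n-1,\bar\eta_{n-1}^N})$. Under Assumption~\ref{asn:g}(i) this defect mass is $O(\|\bar\eta_{n-1,1}^N-\bar\eta_{n-1,2}^N\|_{\mathrm{tv}})$, which in expectation is controlled by $\|\eta_{n-1,1}-\eta_{n-1,2}\|_{\mathrm{tv}}$ plus Monte-Carlo fluctuations; all pieces aggregate into $CN^{-1}B(n)$.

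For $T_2$, use Proposition~\ref{prop:marginal} and identity \eqref{eq:marginal_crazy_gives_standard} to write $\bar\Phi_n(\mu)(\varphi_j)=\mu_j(G_{n-1}M_{n,j}(\varphi))/\mu_j(G_{n-1})$. Inserting and subtracting intermediate ratios expresses $T_2$ as a bounded linear combination of terms of the form $[\bar\eta_{n-1}^N-\bar\eta_{n-1}](\psi_1-\psi_2)$ with $\psi_j\in\mathcal{B}_b(\cU)\cap\textrm{Lip}(\cU)$, plus remainders bounded by $|||M_{n,1}-M_{n,2}|||$ and $\|\eta_{n-1,1}-\eta_{n-1,2}\|_{\mathrm{tv}}$. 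Cauchy--Schwarz together with the inductive hypothesis absorbs the former into $CB(n-1)/N\leq CB(n)/N$, while the latter are already built into $B(n)$.

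The main obstacle is the $T_1$ estimate: producing a bound of order $B(n)/N$ for $\mathbb{E}[\bar\Phi_n(\bar\eta_{n-1}^N)((\varphi_1-\varphi_2)^2)]$. Because the coupled resampling identifies $u_1=u_2$ only on a set of relative mass $\bar\eta_{n-1}^N(\bar G_{n-1,\bar\eta_{n-1}^N})$, the complementary independent pairs contribute $O(1)$ per particle to $|u_1-u_2|^2\wedge 1$, so one must show that this defect mass itself is small in expectation. Establishing this requires a parallel recursion tracking $\mathbb{E}[|u_{n,1}^1-u_{n,2}^1|^2\wedge 1]$ and $\|\eta_{n,1}-\eta_{n,2}\|_{\mathrm{tv}}$ simultaneously with the main $L^2$ induction, and the tight coordination of these three recursions is the delicate point. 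This defect is precisely what produces the strong rate $h_l^{\beta/2}$ (rather than $h_l^\beta$) noted immediately after Theorem~\ref{thm:mlpf1}, so no sharpening at this step can recover the plain MLMC rate within the present scheme.
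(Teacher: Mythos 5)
Your base case, your two-term decomposition at step $n$, and your treatment of $T_2$ (ratio algebra, induction hypothesis for the terms with matched test functions, remainders involving $|||M_{n,1}-M_{n,2}|||$ and total-variation differences each paired with an $O(N^{-1/2})$ fluctuation) all match the paper's proof in substance. The gap is in $T_1$, and it is one you have created for yourself. After Marcinkiewicz--Zygmund you must bound $\mathbb{E}\big[\bar{\Phi}_n(\bar{\eta}_{n-1}^N)\big((\varphi_1-\varphi_2)^2\big)\big]=\mathbb{E}\big[(\varphi(u_{n,1}^1)-\varphi(u_{n,2}^1))^2\big]$ by $C B(n)$ --- but $B(n)$ as defined in \eqref{eq:beea} already contains $\mathbb{E}[\{|u_{n,1}^1-u_{n,2}^1|\wedge 1\}^{2}]^{1/2}$ as a summand inside the square, so boundedness plus Lipschitz continuity of $\varphi$ gives $\mathbb{E}[(\varphi(u_{n,1}^1)-\varphi(u_{n,2}^1))^2]\leq C(\varphi)\,\mathbb{E}[\{|u_{n,1}^1-u_{n,2}^1|\wedge 1\}^{2}]\leq C(\varphi)B(n)$ in one line. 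No recursion on the resampling defect mass, and no simultaneous tracking of $\|\eta_{n,1}-\eta_{n,2}\|_{\textrm{tv}}$, is needed for this theorem: the statement is a bound \emph{in terms of} the coupling quality $B(n)$, not a proof that $B(n)$ is small. The smallness (the rate $h_l^{\beta/2}$) is the content of the separate Theorem \ref{thm:main}, established via the defect recursion of Lemmas \ref{pairwise_difference} and \ref{small_probability}, and is only invoked later in Corollary \ref{cor:finrate}. Since you declare the coordination of these recursions ``the delicate point'' and leave it unresolved, your proposal as written does not close --- even though the missing step is immediate once the role of $B(n)$ is recognized.

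Two smaller points. First, in your sketch of the common-weight contribution you claim the post-mutation squared distance of a coupled pair is controlled by $|||M_{n,1}-M_{n,2}|||$ via Assumption \ref{asn:liperg}; neither object does that job: $|||\cdot|||$ controls differences of expectations of bounded Lipschitz functions under the two \emph{marginal} kernels, not the $L^2$ distance of a pair drawn from the joint kernel, which is what Assumption \ref{asn:mlrates}(ii) (Proposition \ref{prop_EM}) provides. Second, the mean term $\mathbb{E}[|\bar{\Phi}_n(\bar{\eta}_{n-1}^N)(\varphi_1-\varphi_2)|^2]$ still requires the explicit decomposition \eqref{eq:bayes_decomp}--\eqref{eq:decomp4}, which bounds it by $C(\varphi)\big(\mathbb{E}[\{|u_{n-1,1}^1-u_{n-1,2}^1|\wedge 1\}^{2}]+|||M_{n,1}-M_{n,2}|||^2\big)$ --- again quantities already present in $B(n)$.
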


\begin{proof}
The proof is by induction and clearly holds at 
step 0 by the Marcinkiewicz-Zygmund inequality (see e.g.~\cite{Cappe_2005}) so we proceed to the induction step.
Throughout $C$ is a constant whose value may change from line-to-line. Any important dependencies are given a function notation.

$$
\mathbb{E}\Big[\Big([\bar{\eta}_n^N-\bar{\eta}_n](\varphi_1-\varphi_2)\Big)^2\Big] \leq 
$$
\begin{equation}
2\mathbb{E}\Big[\Big([\bar{\eta}_n^N-\bar{\Phi}_n(\bar{\eta}_{n-1}^N)](\varphi_1-\varphi_2)\Big)^2\Big] +
2\mathbb{E}\Big[\Big([\bar{\Phi}_n(\bar{\eta}_{n-1}^N)-\bar{\eta}_n](\varphi_1-\varphi_2)\Big)^2\Big] \label{eq:standard_decomp}.
\end{equation}
Consider the two terms on the R.H.S.~of \eqref{eq:standard_decomp} separately.

\noindent\textbf{Term}: $\mathbb{E}\Big[\Big([\bar{\eta}_n^N-\bar{\Phi}_n(\bar{\eta}_{n-1}^N)](\varphi_1-\varphi_2)\Big)^2\Big]$.

Begin by conditioning on $\mathscr{F}_{n-1}^N$ and then apply the Marcinkiewicz-Zygmund inequality to yield that
$$
\mathbb{E}\Big[\Big([\bar{\eta}_n^N-\bar{\Phi}_n(\bar{\eta}_{n-1}^N)](\varphi_1-\varphi_2)\Big)^2\Big] \leq
$$
$$
\frac{C}{N}\Big(\mathbb{E}[|\varphi(u_{n,1}^1)-\varphi(u_{n,2}^1)|^2] + \mathbb{E}[|\bar{\Phi}_n(\bar{\eta}_{n-1}^N)(\varphi_1-\varphi_2)|^2]\Big) \leq
$$
\begin{equation}
\frac{C}{N}\Big(\mathbb{E}[\{|u_{n,1}^1-u_{n,2}^1|\wedge 1\}^2] + \mathbb{E}[|\bar{\Phi}_n(\bar{\eta}_{n-1}^N)(\varphi_1-\varphi_2)|^2]\Big) \label{eq:master_term1}
\end{equation}
where 
the final line follows since $\varphi\in \mathcal{B}_b(\cU)\cap\textrm{\emph{Lip}}(\cU)$.

Now by \eqref{eq:marginal_crazy_gives_standard}
\begin{eqnarray}
\bar{\Phi}_n(\bar{\eta}_{n-1}^N)(\varphi_1-\varphi_2) & = & \frac{\eta_{n-1,1}^N(G_{n-1}M_{n,1}(\varphi))-\eta_{n-1,2}^N(G_{n-1}M_{n,2}(\varphi))}{\eta_{n-1,1}^N(G_{n-1})} + \nonumber\\ & &
\frac{\eta_{n-1,2}^N(G_{n-1}M_{n,2}(\varphi))}{\eta_{n-1,1}^N(G_{n-1})\eta_{n-1,2}^N(G_{n-1})}[\eta_{n-1,2}^N(G_{n-1})-\eta_{n-1,1}^N(G_{n-1})]\label{eq:bayes_decomp}
\end{eqnarray}
Consider the first term on the R.H.S.~of \eqref{eq:bayes_decomp}. 
$$
\frac{\eta_{n-1,1}^N(G_{n-1}M_{n,1}(\varphi))-\eta_{n-1,2}^N(G_{n-1}M_{n,2}(\varphi))}{\eta_{n-1,1}^N(G_{n-1})} = \eta_{n-1,1}^N(G_{n-1})^{-1}[\eta_{n-1,1}^N(G_{n-1}M_{n,1}(\varphi))
$$
\begin{equation}
- \eta_{n-1,1}^N(G_{n-1}M_{n,2}(\varphi))
+ \eta_{n-1,1}^N(G_{n-1}M_{n,2}(\varphi)) -\eta_{n-1,2}^N(G_{n-1}M_{n,2}(\varphi))
]\label{eq:decomp1}
\end{equation}
Now we deal with $\eta_{n-1,1}^N(G_{n-1}M_{n,2}(\varphi)) -\eta_{n-1,2}^N(G_{n-1}M_{n,2}(\varphi))$ on the R.H.S.~of \eqref{eq:decomp1}.
$$
\eta_{n-1,1}^N(G_{n-1}M_{n,2}(\varphi)) -\eta_{n-1,2}^N(G_{n-1}M_{n,2}(\varphi)) = 
$$
$$
\frac{1}{N}\sum_{i=1}^N \Big\{[G_{n-1}(u_{n-1,1}^i)-G_{n-1}(u_{n-1,2}^i)]M_{n,2}(\varphi)(u_{n-1,1}^i)
+ 
$$
$$
G_{n-1}(u_{n-1,2}^i)[M_{n,2}(\varphi)(u_{n-1,1}^i)-M_{n,2}(\varphi)(u_{n-1,2}^i)]\Big\}.
$$
Then applying 
Assumptions \ref{asn:g} and \ref{asn:liperg} it follows that
\begin{equation}
|\eta_{n-1,1}^N(G_{n-1}M_{n,2}(\varphi)) -\eta_{n-1,2}^N(G_{n-1}M_{n,2}(\varphi))| \leq C(\varphi)\frac{1}{N}\sum_{i=1}^N\{|u_{n-1,1}^i-u_{n-1,2}^i|\wedge 1 \}
\label{eq:decomp2}
\end{equation}
Returning to \eqref{eq:decomp1} it follows that 
$$
|\eta_{n-1,1}^N(G_{n-1}M_{n,1}(\varphi)) - \eta_{n-1,1}^N(G_{n-1}M_{n,2}(\varphi))| \leq C(\varphi)|||M_{n,1}-M_{n,2}|||.
$$
Thus using 
Assumptions \ref{asn:g} and \ref{asn:liperg}
and noting \eqref{eq:decomp2} 
$$
\frac{\eta_{n-1,1}^N(G_{n-1}M_{n,1}(\varphi))-\eta_{n-1,2}^N(G_{n-1}M_{n,2}(\varphi))}{\eta_{n-1,1}^N(G_{n-1})} \leq
$$
\begin{equation}
C(\varphi)\Big(\frac{1}{N}\sum_{i=1}^N
\{|u_{n-1,1}^i-u_{n-1,2}^i|\wedge 1 \}
+|||M_{n,1}-M_{n,2}|||\Big)\label{eq:decomp3}.
\end{equation}
Returning to \eqref{eq:bayes_decomp} and the second term on the R.H.S.~it follows by the Lipschitz property of $G_{n-1}$ and the upper-bound on $\varphi$ and lower bound 
on $G_{n-1}$ that
$$
\frac{\eta_{n-1,2}^N(G_{n-1}M_{n,2}(\varphi))}{\eta_{n-1,1}^N(G_{n-1})\eta_{n-1,2}^N(G_{n-1})}[\eta_{n-1,2}^N(G_{n-1})-\eta_{n-1,1}^N(G_{n-1})] \leq
$$
\begin{equation}
C(\varphi)\frac{1}{N}\sum_{i=1}^N\{|u_{n-1,1}^i-u_{n-1,2}^i|\wedge 1 \}
\label{eq:decomp4}
\end{equation}
Recalling \eqref{eq:bayes_decomp} and noting \eqref{eq:decomp3}-\eqref{eq:decomp4} 
$$
\bar{\Phi}_n(\bar{\eta}_{n-1}^N)(\varphi_1-\varphi_2)  \leq 
C(\varphi)\Big(\frac{1}{N}\sum_{i=1}^N\{|u_{n-1,1}^i-u_{n-1,2}^i|\wedge 1 \}
+|||M_{n,1}-M_{n,2}|||\Big).
$$
Thus, on returning to \eqref{eq:master_term1} it follows that
$$
\mathbb{E}\Big[\Big([\bar{\eta}_n^N-\bar{\Phi}_n(\bar{\eta}_{n-1}^N)](\varphi_1-\varphi_2)\Big)^2\Big] \leq
$$
$$
\frac{C(\varphi)}{N}\Big(\mathbb{E}[\{|u_{n-1,1}^i-u_{n-1,2}^i|\wedge 1 \}
^2] +
\mathbb{E}\Big[\Big(\frac{1}{N}\sum_{i=1}^N\{|u_{n-1,1}^i-u_{n-1,2}^i|\wedge 1 \}
+|||M_{n,1}-M_{n,2}|||\Big)^2\Big]\Big) \leq 
$$
\begin{equation}
\frac{C(\varphi)}{N}\Big(\mathbb{E}[\{|u_{n-1,1}^i-u_{n-1,2}^i|\wedge 1 \}
^2] + \mathbb{E}[\{|u_{n-1,1}^i-u_{n-1,2}^i|\wedge 1 \}
^2] +|||M_{n,1}-M_{n,2}|||^2\Big)\label{eq:master1_bound}.
\end{equation}
The final equation follows from Jensen's inequality.

\noindent\textbf{Term}: $\mathbb{E}\Big[\Big([\bar{\Phi}_n(\bar{\eta}_{n-1}^N)-\bar{\eta}_n](\varphi_1-\varphi_2)\Big)^2\Big]$.

Application of Lemma \ref{lem:kengo_lem} to $[\bar{\Phi}_n(\bar{\eta}_{n-1}^N)-\bar{\eta}_n](\varphi_1-\varphi_2)$ allows one 
to treat the six terms independently, by the $C_2-$inequality. 
Denote the upper-bound in the induction hypothesis at time $n-1$
as $B_{n-1}(N)$ (omitting dependence on the function), to avoid complex notations.

\noindent\textbf{Term} 1: First 
$$
\mathbb{E}\Big[\Big(\frac{1}{\eta_{n-1,1}^N(G_{n-1})}\big(
\eta_{n-1,1}^N(G_{n-1}M_{n,1}(\varphi)) - \eta_{n-1,2}^N(G_{n-1}M_{n,2}(\varphi)) - 
$$
$$
\eta_{n-1,1}(G_{n-1}M_{n,1}(\varphi)) + \eta_{n-1,2}(G_{n-1}M_{n,2}(\varphi))
\big)\Big)^2\Big] \leq
$$
$$
C \mathbb{E}[(\eta_{n-1,1}^N(G_{n-1}M_{n,1}(\varphi)-G_{n-1}M_{n,2}(\varphi))-\eta_{n-1,1}(G_{n-1}M_{n,1}(\varphi)-G_{n-1}M_{n,2}(\varphi)))^2] + 
$$
$$
\mathbb{E}\Big[\Big([\bar{\eta}_{n-1}^N-\bar{\eta}_{n-1}]([G_{n-1}M_{n,2}(\varphi))]_1-
[G_{n-1}M_{n,2}(\varphi))]_2)
\Big)^2\Big].
$$
Application of Proposition \ref{prop:lp_conv} and the induction hypothesis yields the upper bound:
$$
\frac{C(n)|||M_{n,1}-M_{n,2}|||}{N} + B_{n-1}(N).
$$

\noindent\textbf{Term} 2: 
$$
\mathbb{E}\Big[\Big(\frac{\eta_{n-1,2}^N(G_{n-1}M_{n,2}(\varphi))}{\eta_{n-1,1}^N(G_{n-1})\eta_{n-1,2}^N(G_{n-1})}\big(
\eta_{n-1,1}^N(G_{n-1}) - \eta_{n-1,1}(G_{n-1}) - 
$$
$$
\eta_{n-1,2}^N(G_{n-1}) + \eta_{n-1,2}(G_{n-1})
\big)\Big) 
$$
$$
\leq C B_{n-1}(N).
$$

\noindent\textbf{Term} 3: 
By Proposition \ref{prop:lp_conv}
$$
\mathbb{E}\Big[\Big(\frac{1}{\eta_{n-1,1}^N(G_{n-1})\eta_{n-1,1}(G_{n-1})}\big(\eta_{n-1,1}-
\eta_{n-1,1}^N\big)(G_{n-1})
\big(\eta_{n-1,1}(G_{n-1}M_{n,1}(\varphi)) - 
$$
$$
\eta_{n-1,2}(G_{n-1}M_{n,2}(\varphi))\big)
\Big)^2\Big] \leq 
$$
$$
\frac{C(n)}{N}(|||M_{n,1}-M_{n,1}|||^2 + \|\eta_{n-1,1}-\eta_{n-1,2}\|_{\textrm{tv}}^2 + |||M_{n,1}-M_{n,1}|||
\|\eta_{n-1,1}-\eta_{n-1,2}\|_{\textrm{tv}}).
$$

\noindent\textbf{Term} 4: 
By Proposition \ref{prop:lp_conv}
$$
\mathbb{E}\Big[\Big(\frac{1}{\eta_{n-1,1}^N(G_{n-1})\eta_{n-1,2}^N(G_{n-1})}
\big(\eta_{n-1,2}^N(G_{n-1}M_{n,2}(\varphi))-\eta_{n-1,2}(G_{n-1}M_{n,2}(\varphi))
\big)
$$
$$
\big(\eta_{n-1,1}(G_{n-1})-\eta_{n-1,2}(G_{n-1})
\big)
\Big)^2\Big] \leq 
\frac{C(n)}{N}\|\eta_{n-1,1}-\eta_{n-1,2}\|_{\textrm{tv}}^2.
$$

\noindent\textbf{Term} 5: 
By Proposition \ref{prop:lp_conv}
$$
\mathbb{E}\Big[\Big(\frac{\eta_{n-1,2}(G_{n-1}M_{n,2}(\varphi))}{\eta_{n-1,1}(G_{n-1})\eta_{n-1,2}^N(G_{n-1})\eta_{n-1,2}(G_{n-1})}
\big(\eta_{n-1,2}^N(G_{n-1})-\eta_{n-1,2}(G_{n-1})
\big)
$$
$$
\big(\eta_{n-1,1}(G_{n-1})-\eta_{n-1,2}(G_{n-1})
\big)
\Big)^2\Big] \leq 
\frac{C(n)}{N}\|\eta_{n-1,1}-\eta_{n-1,2}\|_{\textrm{tv}}^2.
$$

\noindent\textbf{Term} 6: 
By Proposition \ref{prop:lp_conv}
$$
\mathbb{E}\Big[\Big(\frac{\eta_{n-1,2}(G_{n-1}M_{n,2}(\varphi))}{\eta_{n-1,1}^N(G_{n-1})\eta_{n-1,1}(G_{n-1})\eta_{n-1,2}^N(G_{n-1})}
\big(\eta_{n-1,1}^N(G_{n-1})-\eta_{n-1,1}(G_{n-1})
\big)
$$
$$
\big(\eta_{n-1,1}(G_{n-1})-\eta_{n-1,2}(G_{n-1})
\big)
\Big)^2\Big] \leq 
\frac{C(n)}{N}\|\eta_{n-1,1}-\eta_{n-1,2}\|_{\textrm{tv}}^2.
$$

Putting together the bounds on the terms 1-6 along with the bound on $\mathbb{E}\Big[\Big([\bar{\eta}_n^N-\bar{\Phi}_n(\bar{\eta}_{n-1}^N)](\varphi_1-\varphi_2)\Big)^2\Big]$ completes the proof.

\end{proof}

\begin{lem}\label{lem:kengo_lem}
Let $a,b,c,d,A,B,C,D\in\mathbb{R}$ with $A,B,C,D$ non-zero then:
$$
\frac{a}{A} - \frac{b}{B} - \left(\frac{c}{C} - \frac{d}{D} \right) = \frac{[a-b-(c-d)]}{A} -
\frac{b[A-B-(C-D)]}{AB} + \frac{1}{AC}[C-A][c-d] 
$$
$$
- \frac{1}{AB}(b-d)(C-D) +
\frac{d}{CBD}(B-D)(C-D) + \frac{d}{ACB}(A-C)(C-D).
$$
\end{lem}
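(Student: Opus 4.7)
The statement is a purely algebraic identity over $\mathbb{R}$, so the proof reduces to careful bookkeeping with no probabilistic content. The plan is to start from the left-hand side and peel off one RHS term at a time using the elementary identity
\[
\frac{x}{X} - \frac{y}{Y} \;=\; \frac{x-y}{X} \;-\; \frac{y(X-Y)}{XY},
\]
applied repeatedly.

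First, I would apply this identity to each of the two pairs $a/A-b/B$ and $c/C-d/D$ to obtain
\[
\text{LHS} \;=\; \frac{a-b}{A} - \frac{c-d}{C} \;-\; \frac{b(A-B)}{AB} \;+\; \frac{d(C-D)}{CD}.
\]
Then I would apply the identity a third time to the difference $\frac{a-b}{A}-\frac{c-d}{C}$, with $x=a-b$, $y=c-d$, $X=A$, $Y=C$, which yields
\[
\frac{a-b}{A}-\frac{c-d}{C} \;=\; \frac{[a-b-(c-d)]}{A} \;+\; \frac{(C-A)(c-d)}{AC},
\]
producing exactly the first and third RHS terms of the lemma.

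Next, for the leftover piece $-\frac{b(A-B)}{AB}+\frac{d(C-D)}{CD}$, I would split $A-B = [A-B-(C-D)] + (C-D)$ inside the first fraction, so that the second RHS term $-\frac{b[A-B-(C-D)]}{AB}$ drops out and the residual
\[
-\frac{b(C-D)}{AB} \;+\; \frac{d(C-D)}{CD}
\]
remains. The final step is to verify
\[
-\frac{b(C-D)}{AB} + \frac{d(C-D)}{CD}
\;=\; -\frac{(b-d)(C-D)}{AB} + \frac{d(B-D)(C-D)}{CBD} + \frac{d(A-C)(C-D)}{ACB},
\]
which after factoring out $(C-D)$ and clearing denominators boils down to the elementary relation $\frac{1}{CD}-\frac{1}{AB}=\frac{AB-CD}{ABCD}$ on the left, matched on the right by $\frac{A(B-D)+D(A-C)}{ACD}=\frac{AB-CD}{ACD}$ after pulling out $d/B$. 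This comparison is immediate.

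There is no genuine obstacle in this proof: each step is forced by the shape of the target decomposition. The only point requiring care is the sign accounting in the third step, where the clever regrouping $A-B = [A-B-(C-D)] + (C-D)$ is what makes the second, fourth, fifth, and sixth terms of the RHS fit together simultaneously; once that regrouping is made, the remaining calculations are one-line verifications.
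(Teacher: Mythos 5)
Your proof is correct. The paper states Lemma \ref{lem:kengo_lem} without any proof at all, treating it as an elementary algebraic identity to be checked by direct expansion over the common denominator $ABCD$; so there is no ``paper approach'' to compare against beyond brute-force verification. Your derivation is a genuine (and more informative) alternative: by repeatedly applying $\frac{x}{X}-\frac{y}{Y}=\frac{x-y}{X}-\frac{y(X-Y)}{XY}$ and then regrouping $A-B=[A-B-(C-D)]+(C-D)$, you not only verify the identity but explain where each of the six terms on the right-hand side comes from --- which is exactly the structure exploited later in the proofs of Theorem \ref{theo:pred} and Lemma \ref{lem:inc_bias}, where the first two terms carry the ``increment'' differences and the last four are controlled by the weak/total-variation rates. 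I checked the final residual step: after factoring out $(C-D)$ the required identity reduces to $\frac{d}{CD}-\frac{d}{AB}=\frac{d}{B}\cdot\frac{A(B-D)+D(A-C)}{ACD}=\frac{d(AB-CD)}{ABCD}$, which holds. No gaps.
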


\begin{prop}\label{prop:lp_conv}
Assume \ref{asn:g} and \ref{asn:liperg}.
Then for any $n\geq 0,p\geq 1$ there exists a $C(n,p)<+\infty$
such that for any $\varphi\in \mathcal{B}_b(\cU)$, $j\in\{1,2\}$, 
$$
\mathbb{E}[|[\eta_{n,j}^N-\eta_{n,j}](\varphi)|^p ]^{1/p} \leq \frac{C(n,p)\|\varphi\|}{\sqrt{N}}.
$$
\end{prop}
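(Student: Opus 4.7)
The plan is to prove the bound by induction on $n$ and reduce to the classical $\mathbb{L}_p$ analysis for standard particle filters (e.g.\ Chapter 9 of \cite{delmoral1}). The crucial observation is that by Proposition \ref{prop:marginal} and the identity \eqref{eq:marginal_crazy_gives_standard} we have $\bar{\Phi}_n(\bar{\eta}_{n-1}^N)(\varphi_j) = \Phi_{n,j}(\bar{\eta}_{n-1,j}^N)(\varphi)$; and since the $N$ pairs $v_n^i$ are drawn conditionally i.i.d.\ from $\bar{\Phi}_n(\bar{\eta}_{n-1}^N)$ given $\mathscr{F}_{n-1}^N$, their $j$-th marginals $u_{n,j}^i$ are conditionally i.i.d.\ from $\Phi_{n,j}(\bar{\eta}_{n-1,j}^N)$. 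Hence the coupled resampling only creates cross-coordinate dependencies, while each marginal empirical measure $\eta_{n,j}^N$ evolves exactly as a textbook bootstrap particle filter targeting $\eta_{n,j}$.

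With this reduction in hand, the rest is the standard Del Moral argument. The base case $n=0$ is immediate from the Marcinkiewicz-Zygmund inequality applied to i.i.d.\ draws from $\eta_{0,j}$. For the inductive step, I would decompose
\begin{equation*}
[\eta_{n,j}^N - \eta_{n,j}](\varphi) = \bigl[\eta_{n,j}^N - \Phi_{n,j}(\bar{\eta}_{n-1,j}^N)\bigr](\varphi) + \bigl[\Phi_{n,j}(\bar{\eta}_{n-1,j}^N) - \Phi_{n,j}(\eta_{n-1,j})\bigr](\varphi).
\end{equation*}
The first (local sampling) term is a centered empirical average of $N$ conditionally i.i.d.\ random variables, each bounded by $\|\varphi\|$; applying Marcinkiewicz-Zygmund conditionally on $\mathscr{F}_{n-1}^N$ and then taking expectations yields an $\mathbb{L}_p$ bound of the form $C(p)\|\varphi\|/\sqrt{N}$. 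For the second (propagation) term, use the algebraic identity
\begin{equation*}
\Phi_{n,j}(\mu)(\varphi) - \Phi_{n,j}(\nu)(\varphi) = \frac{1}{\mu(G_{n-1})}\,(\mu-\nu)\bigl(G_{n-1}\bigl[M_{n,j}(\varphi) - \Phi_{n,j}(\nu)(\varphi)\bigr]\bigr),
\end{equation*}
with $\nu = \eta_{n-1,j}$. By Assumption \ref{asn:g}(i), $\mu(G_{n-1}) \geq c^{-1}$ and the auxiliary test function $\tilde\varphi := G_{n-1}[M_{n,j}(\varphi) - \Phi_{n,j}(\eta_{n-1,j})(\varphi)]$ is bounded with $\|\tilde\varphi\| \leq C\|\varphi\|$. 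Applying the inductive hypothesis at time $n-1$ to $\tilde\varphi$ then bounds this contribution in $\mathbb{L}_p$ by $C(n-1,p)\|\varphi\|/\sqrt{N}$, and Minkowski's inequality combines the two parts to close the induction.

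There is no real obstacle beyond confirming that the coupled resampling mechanism leaves the marginal Feynman-Kac dynamics intact, which is precisely the content of Proposition \ref{prop:marginal}. Consequently the result requires only Assumption \ref{asn:g} (for the boundedness properties of $G_{n-1}$) and Assumption \ref{asn:liperg} (to keep the auxiliary test function in the relevant class at each induction step); the fine coupling quality encoded in Assumption \ref{asn:mlrates} plays no role here, which is why the constant $C(n,p)$ can be taken uniform in the level $l$.
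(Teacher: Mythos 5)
Your proposal is correct and follows essentially the same route as the paper's proof: induction on $n$ with the base case from Marcinkiewicz--Zygmund, a split into the conditional sampling error $\eta_{n,j}^N-\bar{\Phi}_n(\bar{\eta}_{n-1}^N)(\cdot_j)$ and the propagation error $\bar{\Phi}_n(\bar{\eta}_{n-1}^N)(\cdot_j)-\eta_{n,j}$, and the reduction via \eqref{eq:marginal_crazy_gives_standard} to the standard marginal Feynman--Kac recursion. Your single-term identity for $\Phi_{n,j}(\mu)-\Phi_{n,j}(\nu)$ is algebraically equivalent to the two-term decomposition the paper writes out, so the arguments coincide.
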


\begin{proof}
The proof is by induction and clearly holds at rank 0 by the Marcinkiewicz-Zygmund inequality so we proceed to the induction step.
Throughout $C$ is a constant whose value may change from line-to-line. Any important dependencies are given a function notation.

The triangle inequality provides
$$
\mathbb{E}[|[\eta_{n,j}^N-\eta_{n,j}](\varphi)|^p ]^{1/p} \leq
\mathbb{E}[|\eta_{n,j}^N(\varphi) -\bar{\Phi}_n(\bar{\eta}_{n-1}^N)(\varphi_j)|^p]^{1/p} +
\mathbb{E}[|\bar{\Phi}_n(\bar{\eta}_{n-1}^N)(\varphi_j) - \eta_{n,j}(\varphi)|^p]^{1/p}.
$$
For the first term on the R.H.S.~one can condition on $\mathscr{F}_{n-1}^N$ and then apply the Marcinkiewicz-Zygmund inequality to yield that
$$
\mathbb{E}[|\eta_{n,j}^N(\varphi) -\bar{\Phi}_n(\bar{\eta}_{n-1}^N)(\varphi_j)|^p]^{1/p} \leq \frac{C(n,p)\|\varphi\|}{\sqrt{N}}.
$$
For the second term on the R.H.S.~one has the decomposition (see \eqref{eq:marginal_crazy_gives_standard})
$$
\bar{\Phi}_n(\bar{\eta}_{n-1}^N)(\varphi_j) - \eta_{n,j}(\varphi) = 
$$
$$
\eta_{n-1,j}^N(G_{n-1})^{-1}[\eta_{n-1,j}^N(G_{n-1}M_{n,j}(\varphi))-\eta_{n-1,j}(G_{n-1}M_{n,j}(\varphi))] +
$$
$$
\frac{\eta_{n-1,j}(G_{n-1}M_{n,j}(\varphi))}{\eta_{n-1,j}^N(G_{n-1})\eta_{n-1,j}(G_{n-1})}[\eta_{n-1,j}(G_{n-1})-\eta_{n-1,j}^N(G_{n-1})].
$$
Then one can control $\mathbb{E}[|\bar{\Phi}_n(\bar{\eta}_{n-1}^N)(\varphi_j) - \eta_{n,j}(\varphi)|^p]^{1/p}$ via Minkowski, Assumptions \ref{asn:g} and \ref{asn:liperg}
and the induction hypothesis, to yield
$$
\mathbb{E}[|\bar{\Phi}_n(\bar{\eta}_{n-1}^N)(\varphi_j) - \eta_{n,j}(\varphi)|^p]^{1/p} \leq \frac{C(n,p)\|\varphi\|}{\sqrt{N}},
$$
and this allows one to conclude.
\end{proof}

\section{Estimates for Stochastic Diffusion Processes}

Consider the case of the diffusion example \eqref{eq:sde} of Section \ref{sec:setup}, 
with the multilevel kernel introduced in Subsection \ref{ssec:mlmc}.
Fix a level $l$, and for $x,y\in\mathbb{R}^d$, let $(X^x_1,X^y_2) \sim M((x,y),~\cdot~)$, 
i.e. $X^x_1$ is the solution at step $k_l$ of equation \eqref{eq:euler1step} with initial condition
$x$ and $X^y_2$ is the solution at step $k_{l-1}$ of equation \eqref{eq:euler1stepcoarse} 
with initial condition $y$.  
It is well-known that $\mathbb{E}[|X^x_1-X^x|^\kappa]^{1/\kappa}\le Ch_l^{1/2}$ for $\kappa>0$
(see for example \cite{MR3050787,KlPl92}),
where $X^x$ 
is also correlated to $X^x_1$, 
in the sense that the latter arises from a coarsening like \eqref{eq:euler1stepcoarse} except 
with an integration of the stochastic forcing $\xi(t)$ over the interval $h_l$.
Let us generalize this slightly and assume some method for which 
$\mathbb{E}[|X^x_1-X^x|^\kappa]^{1/\kappa}\le Ch_l^{\beta/2}$ for some $\beta>0$.

\begin{prop}\label{prop_EM}
Assume \ref{asn:diff}
and for any 
$x\in \mathbb{R}^d$ and $\kappa>0$,
that $\mathbb{E}[|X^x_1-X^x|^\kappa]^{1/\kappa}\le Ch_l^{\beta/2}$,
for some $\beta,C >0$. 
Now let $y\in\mathbb{R}^d$. Then there exists a $C' >0$ such that
\begin{align*}
\mathbb{E}\left[
|X_1^x - X_2^y|^\kappa\right]^{1/\kappa}\le C' (|x-y|+h_l^{\beta/2}).  
\end{align*}
\end{prop}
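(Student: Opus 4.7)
The plan is to couple everything through the Brownian motion already driving the pair $(X_1^x, X_2^y)$ and to insert the continuous-time SDE solutions as intermediates. Let $X^x$ and $X^y$ denote the solutions of \eqref{eq:sde} on $[0,1]$ starting from $x$ and $y$ respectively, driven by that same Brownian motion. By the triangle inequality and Minkowski,
\begin{equation*}
\mathbb{E}\bigl[|X_1^x - X_2^y|^\kappa\bigr]^{1/\kappa} \leq \mathbb{E}\bigl[|X_1^x - X^x|^\kappa\bigr]^{1/\kappa} + \mathbb{E}\bigl[|X^x - X^y|^\kappa\bigr]^{1/\kappa} + \mathbb{E}\bigl[|X^y - X_2^y|^\kappa\bigr]^{1/\kappa}.
\end{equation*}
This reduces the proposition to controlling each of the three summands separately.

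The first summand is bounded by $C h_l^{\beta/2}$ directly from the strong-error hypothesis. The third summand is precisely the strong error of the same numerical scheme at one coarser level of discretization, applied to initial condition $y$; since the rate hypothesis is understood uniformly in the level index, it is bounded by $C h_{l-1}^{\beta/2} = C\, 2^{\beta/2}\, h_l^{\beta/2}$, which is still $\cO(h_l^{\beta/2})$.

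The middle term is the classical Lipschitz-in-initial-condition estimate for SDE flows. Writing $X_t^x - X_t^y = (x-y) + \int_0^t [a(X_s^x)-a(X_s^y)]\,\mathrm{d}s + \int_0^t [b(X_s^x)-b(X_s^y)]\,\mathrm{d}W_s$, I would raise to the $\kappa$ power (reducing the case $\kappa<2$ to $\kappa=2$ by Jensen), apply Burkholder--Davis--Gundy to the stochastic integral and H\"older to the drift, invoke the global Lipschitz bounds on $a$ and $b$ from Assumption \ref{asn:diff}(ii), and close via Gronwall's inequality to obtain $\mathbb{E}[|X_1^x - X_1^y|^\kappa]^{1/\kappa} \leq C\,|x-y|$; this is entirely classical (cf.~\cite{oksendal1995}). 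Combining the three bounds yields the claim with $C'$ the sum of the resulting constants. I expect no genuine obstacle here: the proposition is essentially a triangle-inequality decomposition, and the role of the coupling is simply to ensure that the strong-error rate (for the fine and coarse schemes against their common continuous-time limit) and the SDE flow Lipschitz bound can be applied simultaneously.
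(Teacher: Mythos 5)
Your proposal is correct and follows essentially the same route as the paper: a triangle-inequality decomposition through the exact SDE solutions $X^x$ and $X^y$ driven by the common Brownian path, with the two outer terms controlled by the strong-error hypothesis (at the fine and coarse levels) and the middle term by the classical Lipschitz-in-initial-condition flow estimate via Gr\"onwall. If anything you are more explicit than the paper, which states only two of the three bounds and cites Corollary V.11.7 of Rogers--Williams for the flow estimate rather than spelling out the BDG/Gr\"onwall argument.
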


\begin{proof}
By the triangular inequality, it is sufficient to show
\begin{align*}
\mathbb{E}[|X^x_1-X^x|^\kappa]^{1/\kappa}&\le Ch_l^{\beta/2}\\
\mathbb{E}[|X^x-X^y|^\kappa]^{1/\kappa}&\le C'|x-y|, 
\end{align*}
The first inequality holds by assumption. 
Now note that Assumption \ref{asn:liperg} follows from Corollary V.11.7 of \cite{MR1780932} together with Gr\"onwall's inequality,
and the second estimate is immediate.  
\end{proof}

Note that this provides Assumption \ref{asn:mlrates}(ii).
For Euler 
the rate $\beta=1$ is well-known and may be found for example in \cite{MR3050787,KlPl92}.
Assume $M_{m,1}^l$ and $M_{m,2}^l$ are transition kernels corresponding to Euler-Maruyama scheme with grid sizes 
$h_l$ and $h_{l-1}$ respectively. 
Then, under the uniformly elliptic condition Assumption \ref{asn:diff}(i), by equation (2.4) of \cite{MR1843179}, 
\begin{align}
	|||M_{m,1}^l-M_{m,2}^l|||\le Ch_{l}. 
\label{eq:mbound}
\end{align}
This shows that the second term in Assumption \ref{asn:mlrates}(i) provides $\alpha=1$.
As for the first term of Assumption \ref{asn:mlrates}(i), preservation of the weak errror, 
the reader is referred to \cite{KlPl92, GrahamTalay}
where appropriate assumptions are detailed.
Now an inequality for predictors can be proven.

\begin{lem}\label{lem:DPJ}
Assume (\ref{asn:g}(i),\ref{asn:mlrates}(i)). 
For $l,m\in\mathbb{N}$, there exists $C>0$ such that
\begin{align*}
	\|\eta_{m,1}^l-\eta_{m,2}^l\|_{\textrm{\emph{tv}}}\le Ch^\alpha_{l}. 
\end{align*}
\end{lem}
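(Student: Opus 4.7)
The plan is to proceed by induction on $m$, using the Feynman–Kac recursion $\eta^l_{m,j} = \Phi_{m,j}(\eta^l_{m-1,j})$, where $\Phi_{m,j}(\mu)(du) = \mu(G_{m-1}M^l_{m,j}(\cdot,du))/\mu(G_{m-1})$. The base case $m=0$ is immediate since $\eta^l_{0,1} = \eta^l_{0,2} = \eta_0$. For the inductive step, I would split by the triangle inequality,
\[
\|\eta^l_{m,1} - \eta^l_{m,2}\|_{\textrm{tv}} \le \|\Phi_{m,1}(\eta^l_{m-1,1}) - \Phi_{m,1}(\eta^l_{m-1,2})\|_{\textrm{tv}} + \|\Phi_{m,1}(\eta^l_{m-1,2}) - \Phi_{m,2}(\eta^l_{m-1,2})\|_{\textrm{tv}},
\]
and treat the two summands independently.

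For the first (stability) term, Assumption \ref{asn:g}(i), together with the fact that $M^l_{m,1}$ is a Markov kernel, gives a uniform non-expansion bound of the form $\|\Phi_{m,1}(\mu) - \Phi_{m,1}(\nu)\|_{\textrm{tv}} \le C\|\mu - \nu\|_{\textrm{tv}}$ by a standard Bayes-operator calculation (expand $\Phi_{m,1}(\mu) - \Phi_{m,1}(\nu)$ as a sum of two terms involving the numerator and denominator and use $c^{-1} < G_{m-1} < c$). Applying the inductive hypothesis then yields an $\cO(h_l^\alpha)$ bound on this term.

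For the second (consistency) term, the algebraic identity gives
\[
\Phi_{m,1}(\eta^l_{m-1,2})(\varphi) - \Phi_{m,2}(\eta^l_{m-1,2})(\varphi) = \frac{\eta^l_{m-1,2}\big(G_{m-1}[M^l_{m,1}(\varphi) - M^l_{m,2}(\varphi)]\big)}{\eta^l_{m-1,2}(G_{m-1})},
\]
which, taking the supremum over $\|\varphi\|\le 1$, is controlled (using $G_{m-1} \ge c^{-1}$) by $C$ times the TV analogue of $|||M^l_{m,1} - M^l_{m,2}|||$. The main obstacle here is that $|||\cdot|||$ is defined using Lipschitz test functions, whereas total variation requires only bounded measurability. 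However, the cited bound from equation (2.4) of \cite{MR1843179} is actually at the level of transition densities under uniform ellipticity (Assumption \ref{asn:diff}(i)), so it yields the pointwise-in-initial-condition total variation estimate $\|M^l_{m,1}(x,\cdot) - M^l_{m,2}(x,\cdot)\|_{\textrm{tv}} \le C h_l$; integrating this against $\eta^l_{m-1,2}$ gives the required $\cO(h_l) = \cO(h_l^\alpha)$ bound on the second term.

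Combining the two bounds and absorbing constants closes the induction. The part I expect to be most delicate is the passage from the weak bound $|||M^l_{m,1}-M^l_{m,2}||| = \cO(h_l^\alpha)$ used elsewhere in the paper to the stronger TV-level bound needed here; this is precisely where Assumption \ref{asn:diff}(i) (uniform ellipticity) enters, providing the smoothing that makes the density-level estimate of \cite{MR1843179} available. Everything else is a routine Feynman–Kac stability argument.
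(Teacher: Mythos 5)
Your proof is correct and follows essentially the same route as the paper's: induction on $m$, a triangle-inequality split of the Bayes update into a stability term (controlled by the two-sided bound on $G$ from Assumption \ref{asn:g}(i) together with the inductive hypothesis) and a consistency term (controlled by the kernel difference from Assumption \ref{asn:mlrates}(i)); the paper performs the split on the unnormalized operators $H^l_{m,j}$ rather than on the normalized $\Phi_{m,j}$, but that is cosmetic. One point in your favour: you explicitly flag that $|||\cdot|||$ is defined over Lipschitz test functions while the total-variation conclusion requires all bounded measurable ones, and you patch this via the density-level estimate of \cite{MR1843179}; the paper's own proof applies the $|||\cdot|||$ bound inside a TV estimate without comment, so your version is the more careful one.
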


\begin{proof}
	Let 
	\begin{align*}
	 	(H_{m,1}^l\varphi)(x)=\int M_{m,1}^l(x,dx^*)G_{m-1}(x)\varphi(x^*),\ 
		(H_{m,2}^l\varphi)(x)=\int M_{m,2}^l(x,dx^*)G_{m-1}(x)\varphi(x^*). 
	\end{align*}
	
	Then 
\begin{align*}
	\eta_{m,1}^l\varphi=\frac{\eta_{m-1,1}^lH_{m,1}^l\varphi}{\eta_{m-1,1}^lH_{m,1}^l 1},\ 
	\eta_{m,2}^l\varphi=\frac{\eta_{m-1,2}^lH_{m,2}^l\varphi}{\eta_{m-1,2}^lH_{m,2}^l 1}. 
\end{align*}
By definition,  $\eta_{0,1}^l=\eta_{0,2}^l$. Suppose that the claim holds for 
$0,1,\ldots, m-1$. Then 
\begin{align*}
	|\eta_{m,1}^l\varphi-\eta_{m,2}^l\varphi|=&
	\left|\frac{\eta_{m-1,1}^lH_{m,1}^l\varphi}{\eta_{m-1,1}^lH_{m,1}^l 1}-\frac{\eta_{m-1,2}^lH_{m,2}^l\varphi}{\eta_{m-1,2}^lH_{m,2}^l 1}\right|\\
	\le & 
	\frac{1}{\eta_{m-1,1}^lH_{m,1}^l 1}\left|\eta_{m-1,1}^lH_{m,1}^l\varphi-\eta_{m-1,2}^lH_{m,2}^l\varphi\right|\\
	&+
	\frac{\eta_{m-1,2}^lH_{m,2}^l\varphi}{\eta_{m-1,1}^lH_{m,1}^l 1\times \eta_{m-1,2}^lH_{m,2}^l 1}
	\left|\eta_{m-1,1}^lH_{m,1}^l1-\eta_{m-1,2}^lH_{m,2}^l1\right|. 
\end{align*}
By Assumption \ref{asn:g}(i), 
$c^{-1}\le \eta_{m-1,1}^lH_{m,1}^l 1, \eta_{m-1,2}^lH_{m,2}^l 1\le c$. 
Thus it is sufficient to show
\begin{align*}
\left|\eta_{m-1,1}^lH_{m,1}^l\varphi-\eta_{m-1,2}^lH_{m,2}^l\varphi\right|
\le C\|\varphi\| h_{l}^\alpha. 	
\end{align*}
However, the left-hand side of the above is dominated by 
\begin{align*}
&\left|\eta_{m-1,1}^lH_{m,1}^l\varphi-\eta_{m-1,2}^lH_{m,1}^l\varphi\right|
+
\left|\eta_{m-1,2}^lH_{m,1}^l\varphi-\eta_{m-1,2}^lH_{m,2}^l\varphi\right|\\
&\le 
\left(\|\eta_{m-1,1}^l-\eta_{m-1,2}^l\|_{tv}+|||M_{m,1}^l-M_{m,2}^l|||\right)\sup_{x,y}|G(y,x)|\|\varphi\|\le C\|\varphi\| h_{l}^\alpha. 
\end{align*}
where the second inequality follows from the induction assumption, and Assumptions \ref{asn:g}(i) and \ref{asn:mlrates}(i). 
Thus the claim follows by induction. 
\end{proof}

Let $I_{m,1}^l(k):=I_{m,1}^{l,k}$ and  $I_{m,2}^l(k):=I_{m,2}^{l,k}$. 
	For $m\ge 2$, let $S_m^l$ be the indices that choose the same ancestor in each resampling step, that is, 
	\begin{align*}
	S_m^l=\{k\in\{1,\ldots, N_l\}; &I_{m,1}^l(k)=I_{m,2}^l(k),I_{m-1,1}^l\circ I_{m,1}^l(k)=I_{m-1,2}^l\circ I_{m,2}^l(k),\cdots,\\
	&I_{1,1}^l\circ I_{2,1}^l\circ\cdots\circ I_{m,1}^l(k)=I_{1,2}^l\circ\cdots\circ I_{2,2}^l\circ I_{m,2}^l(k)\}. 
	\end{align*}
	For $m=1$, set $S_1^l=\{1,\ldots, N_l\}$. 
	Let 
\begin{align*}
\mathcal{F}_m^l=&\sigma\left(\left\{U_{p,1}^{l,k},U_{p,2}^{l,k}, \hat{U}_{p,1}^{l,k}, \hat{U}_{p,2}^{l,k}, I_{p,1}^l, I_{p,2}^l;p<m,k\le N_l\right\}\cup
\left\{U_{m,1}^{l,k},U_{m,2}^{l,k},k\le N_l\right\}\right),\\
\hat{\mathcal{F}}_m^l=&\sigma\left(\left\{U_{p,1}^{l,k},U_{p,2}^{l,k}, \hat{U}_{p,1}^{l,k}, \hat{U}_{p,2}^{l,k}, I_{p,1}^l, I_{p,2}^l;p<m,k\le N_l\right\}\cup
\left\{U_{m,1}^{l,k},U_{m,2}^{l,k},\hat{U}_{m,1}^{l,k},\hat{U}_{m,2}^{l,k},k\le N_l\right\}\right). 
\end{align*}

\begin{lem}\label{pairwise_difference}
For $\kappa>0$ and $m\in\mathbb{N}$,  there exists $C>0$ such that
\begin{align*} 
	\mathbb{E}\left[\frac{1}{N_l}\sum_{k\in S_{m-1}^l} |U_{m,1}^{l,k}-U_{m,2}^{l,k}|^\kappa\right]^{1/\kappa}\le Ch_{l}^{\beta/2}.
\end{align*}
\end{lem}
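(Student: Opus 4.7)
The approach is induction on $m$, alternating a mutation step (controlled by Proposition \ref{prop_EM}) with a coupled-resampling step (controlled by Assumption \ref{asn:g}(i)). For compactness, abbreviate
$$
D_m^l := \frac{1}{N_l}\sum_{k \in S_{m-1}^l} |U_{m,1}^{l,k} - U_{m,2}^{l,k}|^\kappa, \qquad
\widehat{D}_m^l := \frac{1}{N_l}\sum_{k \in S_m^l} |\widehat{U}_{m,1}^{l,k} - \widehat{U}_{m,2}^{l,k}|^\kappa.
$$
The base case follows from Proposition \ref{prop_EM} applied with $x = y$: since $\widehat{U}_{0,1}^{l,k} = \widehat{U}_{0,2}^{l,k}$ and $S_1^l = \{1,\ldots,N_l\}$ by convention, one directly gets $\mathbb{E}[D_1^l] \le C h_l^{\kappa \beta/2}$.

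For the inductive step, first condition on $\widehat{\mathcal{F}}_{m-1}^l$. Since the mutation kernel is applied independently across particles, Proposition \ref{prop_EM} gives, for every $k \in S_{m-1}^l$,
$$
\mathbb{E}\bigl[|U_{m,1}^{l,k} - U_{m,2}^{l,k}|^\kappa \bigm| \widehat{\mathcal{F}}_{m-1}^l \bigr] \le C\bigl(|\widehat{U}_{m-1,1}^{l,k} - \widehat{U}_{m-1,2}^{l,k}| + h_l^{\beta/2}\bigr)^\kappa.
$$
Averaging, applying a $C_\kappa$-inequality, and using $|S_{m-1}^l|/N_l \le 1$ yields $\mathbb{E}[D_m^l] \le C_1 \mathbb{E}[\widehat{D}_{m-1}^l] + C_2 h_l^{\kappa\beta/2}$.

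Next, the coupled resampling at step $m-1$ is used to reduce $\widehat{D}_{m-1}^l$ to $D_{m-1}^l$. The structural fact, immediate from the nested definition of $S_{\cdot}^l$, is that $k \in S_{m-1}^l$ if and only if the step-$(m-1)$ resampling selected a matched ancestor $j_k = I_{m-1,1}^l(k) = I_{m-1,2}^l(k)$ with $j_k \in S_{m-2}^l$. Conditioning on $\mathcal{F}_{m-1}^l$ and applying case (a) of the coupled resampling procedure gives
$$
\mathbb{E}\bigl[\widehat{D}_{m-1}^l \bigm| \mathcal{F}_{m-1}^l \bigr] = \sum_{i \in S_{m-2}^l} \bigl(w_{m-1,1}^{l,i} \wedge w_{m-1,2}^{l,i}\bigr)\, |U_{m-1,1}^{l,i} - U_{m-1,2}^{l,i}|^\kappa.
$$
The uniform bound $c^{-1} < G < c$ from Assumption \ref{asn:g}(i) forces $w_{m-1,j}^{l,i} \le c^2 / N_l$, so $\mathbb{E}[\widehat{D}_{m-1}^l] \le c^2\, \mathbb{E}[D_{m-1}^l]$.

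Chaining these bounds produces the recursion $\mathbb{E}[D_m^l] \le c^2 C_1\, \mathbb{E}[D_{m-1}^l] + C_2 h_l^{\kappa\beta/2}$, which closes the induction; applying $x \mapsto x^{1/\kappa}$ at the end delivers the claimed rate. The main obstacle I anticipate, beyond induction bookkeeping, is carefully showing that membership in $S_{m-1}^l$ activates exactly case (a) of the coupled resampling with ancestor in $S_{m-2}^l$, so that the minimum $w_{m-1,1}^{l,i} \wedge w_{m-1,2}^{l,i}$ naturally appears in the conditional expectation. This minimum is precisely the feature responsible for capping the strong rate at $\beta/2$ rather than the standard-MLMC $\beta$ in the particle-filter setting.
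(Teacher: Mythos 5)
Your proof is correct and follows essentially the same route as the paper's: induction on $m$, with Proposition \ref{prop_EM} handling the mutation step conditionally on $\hat{\mathcal{F}}_{m-1}^l$ and the boundedness of $G$ from Assumption \ref{asn:g}(i) handling the coupled resampling step. Your one-line identity $\mathbb{E}[\widehat{D}_{m-1}^l \mid \mathcal{F}_{m-1}^l]=\sum_{i\in S_{m-2}^l}(w_{m-1,1}^{l,i}\wedge w_{m-1,2}^{l,i})\,|U_{m-1,1}^{l,i}-U_{m-1,2}^{l,i}|^\kappa$ is just a compressed form of the paper's factorization into the conditional law of a resampled pair times $\mathbb{E}[\sharp S_{m-1}^l/N_l\mid\mathcal{F}_{m-1}^l]$, so the two arguments are algebraically identical.
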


\begin{proof}
By Proposition \ref{prop_EM}, 
\begin{align*}
\mathbb{E}\left[\frac{1}{N_l}\sum_{k\in S_{m-1}^l} |U_{m,1}^{l,k}-U_{m,2}^{l,k}|^\kappa\right]^{1/\kappa}&=
\mathbb{E}\left[\frac{1}{N_l}\sum_{k\in S_{m-1}^l} \mathbb{E}\left[\left.|U_{m,1}^{l,k}-U_{m,2}^{l,k}|^\kappa\right|\hat{\mathcal{F}}_{m-1}^{l}\right]\right]^{1/\kappa}\\ 
&\le 
C\mathbb{E}\left[\frac{1}{N_l}\sum_{k\in S_{m-1}^l} \left\{|\hat{U}_{m-1,1}^{l,k}-\hat{U}_{m-1,2}^{l,k}|+h_{l}^{\beta/2}\right\}^\kappa\right]^{1/\kappa}. 
\end{align*}
Since $(a+b)^\kappa\le C(a^\kappa+b^\kappa)\ (a,b\ge 0)$, we have
\begin{align*}
\mathbb{E}\left[\frac{1}{N_l}\sum_{k\in S_{m-1}^l} |U_{m,1}^{l,k}-U_{m,2}^{l,k}|^\kappa\right]^{1/\kappa}&\le C\mathbb{E}\left[\frac{1}{N_l}\sum_{k\in S_{m-1}^l} |\hat{U}_{m-1,1}^{l,k}-\hat{U}_{m-1,2}^{l,k}|^\kappa\right]^{1/\kappa}+Ch_{l}^{\beta/2}\\
&=
C\mathbb{E}\left[\frac{1}{N_l}\sum_{k\in S_{m-1}^l} |U_{m-1,1}^{l,I_{m-1,1}^{l,k}}-U_{m-1,2}^{l,I_{m-1,2}^{l,k}}|^\kappa\right]^{1/\kappa}+Ch_{l}^{\beta/2}. 
\end{align*}
Note that $I_{m-1,1}^l=I_{m-1,2}^l$  for $k\in S_{m-1}^l$. 
The conditional distribution of $(U_{m-1,1}^{l,I_{m-1,1}^{l,k}},U_{m-1,2}^{l,I_{m-1,2}^{l,k}})\ (k\in S_{m-1}^l)$ given $S_{m-1}^l$  and $\mathcal{F}_{m-1}^l$ is
\begin{align*}
\frac{\sum_{k\in S_{m-2}^l} \frac{G_{m-1}(U_{m-1,1}^{l,k})}{\sum_{i=1}^{N_l}G_{m-1}(U_{m-1,1}^{l,i})}\wedge \frac{G_{m-1}(U_{m-1,2}^{l,k})}{\sum_{i=1}^{N_l}G_{m-1}(U_{m-1,2}^{l,i})}\delta_{(U_{m-1,1}^{l,k},U_{m-1,2}^{l,k})}
	}{\sum_{k\in S_{m-2}^l} \frac{G_{m-1}(U_{m-1,1}^{l,k})}{\sum_{i=1}^{N_l}G_{m-1}(U_{m-1,1}^{l,i})}\wedge \frac{G_{m-1}(U_{m-1,2}^{l,k})}{\sum_{i=1}^{N_l}G_{m-1}(U_{m-1,2}^{l,i})}}
	\le C\frac{1}{\sharp S_{m-2}^l}\sum_{k\in S_{m-2}^l}\delta_{(U_{m-1,1}^{l,k},U_{m-1,2}^{l,k})}
\end{align*}
The expected value of $\sharp S_{m-1}^l$ given $\mathcal{F}_{m-1}^l$ is
\begin{align*}
\mathbb{E}\left[\left.\frac{\sharp S_{m-1}^l}{N_l}\right|\mathcal{F}_{m-1}^l\right]=\sum_{k\in S_{m-2}^l} \frac{G_{m-1}(U_{m-1,1}^{l,k})}{\sum_{i=1}^{N_l}G_{m-1}(U_{m-1,1}^{l,i})}\wedge \frac{G_{m-1}(U_{m-1,2}^{l,k})}{\sum_{i=1}^{N_l}G_{m-1}(U_{m-1,2}^{l,i})}\le C \frac{\sharp S_{m-2}^l}{N_l}. 
\end{align*}
Therefore
\begin{align*}
&\mathbb{E}\left[\frac{1}{N_l}\sum_{k\in S_{m-1}^l} |U_{m-1,1}^{l,I_{m-1,1}^{l,k}}-U_{m-1,2}^{l,I_{m-1,2}^{l,k}}|^\kappa\right]\\
&=\mathbb{E}\left[\frac{1}{N_l}\sum_{k\in S_{m-1}^l} \mathbb{E}\left[\left.|U_{m-1,1}^{l,I_{m-1,1}^{l,k}}-U_{m-1,2}^{l,I_{m-1,2}^{l,k}}|^\kappa\right|S_{m-1}^l,\mathcal{F}_{m-1}^l\right]\right]\\
&=\mathbb{E}\left[\frac{\sharp S_{m-1}^l}{N_l}\left\{\frac{\sum_{k\in S_{m-2}^l} |U_{m-1,1}^{l,k}-U_{m-1,2}^{l,k}|^\kappa\frac{G_{m-1}(U_{m-1,1}^{l,k})}{\sum_{i=1}^{N_l}G_{m-1}(U_{m-1,1}^{l,i})}\wedge \frac{G_{m-1}(U_{m-1,2}^{l,k})}{\sum_{i=1}^{N_l}G_{m-1}(U_{m-1,2}^{l,i})}
	}{\sum_{k\in S_{m-2}^l} \frac{G_{m-1}(U_{m-1,1}^{l,k})}{\sum_{i=1}^{N_l}G_{m-1}(U_{m-1,1}^{l,i})}\wedge \frac{G_{m-1}(U_{m-1,2}^{l,k})}{\sum_{i=1}^{N_l}G_{m-1}(U_{m-1,2}^{l,i})}}\right\}\right]\\
&=\mathbb{E}\left[\mathbb{E}\left[\left.\frac{\sharp S_{m-1}^l}{N_l}\right|\mathcal{F}_{m-1}^l\right]\left\{\frac{\sum_{k\in S_{m-2}^l} |U_{m-1,1}^{l,k}-U_{m-1,2}^{l,k}|^\kappa\frac{G_{m-1}(U_{m-1,1}^{l,k})}{\sum_{i=1}^{N_l}G_{m-1}(U_{m-1,1}^{l,i})}\wedge \frac{G_{m-1}(U_{m-1,2}^{l,k})}{\sum_{i=1}^{N_l}G_{m-1}(U_{m-1,2}^{l,i})}
	}{\sum_{k\in S_{m-2}^l} \frac{G_{m-1}(U_{m-1,1}^{l,k})}{\sum_{i=1}^{N_l}G_{m-1}(U_{m-1,1}^{l,i})}\wedge \frac{G_{m-1}(U_{m-1,2}^{l,k})}{\sum_{i=1}^{N_l}G_{m-1}(U_{m-1,2}^{l,i})}}\right\}\right]\\
	&\le C\mathbb{E}\left[\frac{1}{N_l}\sum_{k\in S_{m-2}^l} |U_{m-1,1}^{l,k}-U_{m-1,2}^{l,k}|^\kappa\right].
\end{align*}
Thus the claim comes from induction. 
\end{proof}

\begin{lem}\label{small_probability}
There exists $C>0$ such that for $m\in\mathbb{N}$, 
\begin{align*}
1-\mathbb{E}\left[\frac{\sharp S_m^l}{N_l}\right]\le Ch_{l}^{\beta/2}. 
\end{align*}
\end{lem}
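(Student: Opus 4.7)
The plan is to proceed by induction on $m$, allowing the constant to depend on $m$ (as is standard throughout the section). Write $T_m:=1-\mathbb{E}[\sharp S_m^l/N_l]$. The base case $m=1$ is trivial since by definition $S_1^l=\{1,\ldots,N_l\}$, so $T_1=0$. For the inductive step, the departure point is the identity already derived inside the proof of Lemma \ref{pairwise_difference}, namely
\[
\mathbb{E}\!\left[\left.\frac{\sharp S_m^l}{N_l}\right|\mathcal{F}_m^l\right]
=\sum_{j\in S_{m-1}^l}\tilde w_{m,1}^{l,j}\wedge \tilde w_{m,2}^{l,j},
\qquad
\tilde w_{m,h}^{l,j}:=\frac{G_{m-1}(U_{m,h}^{l,j})}{\sum_{i}G_{m-1}(U_{m,h}^{l,i})}.
\]
This uses the fact that in the coupled resampling the residuals $Z_{m,1}^{l,\cdot}$ and $Z_{m,2}^{l,\cdot}$ have disjoint supports coordinate-wise, so the independent branch never produces matching indices and the probability $\mathbb{P}(I_{m,1}^l(k)=I_{m,2}^l(k)=j\mid\mathcal{F}_m^l)$ equals $\tilde w_{m,1}^{l,j}\wedge \tilde w_{m,2}^{l,j}$.

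Next I use $\sum_j\tilde w_{m,1}^{l,j}=1$ and the elementary inequality $\tilde w_{m,1}^{l,j}-\tilde w_{m,1}^{l,j}\wedge\tilde w_{m,2}^{l,j}\le |\tilde w_{m,1}^{l,j}-\tilde w_{m,2}^{l,j}|$ to decompose
\[
1-\sum_{j\in S_{m-1}^l}\tilde w_{m,1}^{l,j}\wedge\tilde w_{m,2}^{l,j}
\;\le\; \sum_{j\notin S_{m-1}^l}\tilde w_{m,1}^{l,j}+\sum_{j=1}^{N_l}|\tilde w_{m,1}^{l,j}-\tilde w_{m,2}^{l,j}|.
\]
For the first term, Assumption \ref{asn:g}(i) gives the uniform bound $\tilde w_{m,h}^{l,j}\le c^2/N_l$, so its expectation is at most $c^2\,T_{m-1}$. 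For the second term, a standard quotient identity $\tfrac{a}{A}-\tfrac{b}{B}=\tfrac{a-b}{A}-\tfrac{b(A-B)}{AB}$ applied to the normalized weights, combined with the boundedness and Lipschitz estimates in Assumption \ref{asn:g}, yields
\[
\sum_{j=1}^{N_l}|\tilde w_{m,1}^{l,j}-\tilde w_{m,2}^{l,j}|
\;\le\; \frac{C}{N_l}\sum_{j=1}^{N_l}\{|U_{m,1}^{l,j}-U_{m,2}^{l,j}|\wedge 1\}.
\]
Splitting this last sum over $j\in S_{m-1}^l$ and $j\notin S_{m-1}^l$, I would apply Lemma \ref{pairwise_difference} to the first piece (giving $O(h_l^{\beta/2})$) and the trivial bound $|U_{m,1}^{l,j}-U_{m,2}^{l,j}|\wedge 1\le 1$ together with the inductive hypothesis to the second piece (giving $O(T_{m-1})$). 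The resulting recursion
\[
T_m \le C\,h_l^{\beta/2}+(1+c^2)\,T_{m-1}
\]
closes the induction with an $m$-dependent constant.

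The main obstacle is the bookkeeping when splitting the $\ell^1$ weight-difference along the ancestry set $S_{m-1}^l$ and verifying that the ``bad'' contribution $\sum_{j\notin S_{m-1}^l}\tilde w_{m,1}^{l,j}$ is absorbed into $T_{m-1}$ rather than contributing a zeroth-order term; this relies essentially on the uniform upper bound $\tilde w_{m,h}^{l,j}\lesssim 1/N_l$ from Assumption \ref{asn:g}(i). No contraction in $m$ is needed because the constant is permitted to depend on $m$, so the factor $(1+c^2)$ in the recurrence is harmless.
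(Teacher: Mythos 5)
Your proposal is correct and follows essentially the same route as the paper's proof: both establish the conditional identity $\mathbb{E}[\sharp S_m^l/N_l\mid\mathcal{F}_m^l]=\sum_{k\in S_{m-1}^l}w_{m,1}^{l,k}\wedge w_{m,2}^{l,k}$, split the deficit into the $\ell^1$ weight-difference term (controlled via the boundedness and Lipschitz properties of $G$ and Lemma \ref{pairwise_difference} with $\kappa=1$) plus a contribution from $k\notin S_{m-1}^l$ absorbed by the inductive hypothesis. The only cosmetic difference is that the paper uses the exact identity $1-\sum_k a_k\wedge b_k=\tfrac12\sum_k|a_k-b_k|$ for probability vectors where you use the corresponding one-sided inequality.
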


\begin{proof}
Note that 
\begin{align*}
1-\sum_{k=1}^{N_l}\frac{G_{m}(U_{m,1}^{l,k})}{\sum_{i=1}^{N_l} G_{m}(U_{m,1}^{l,i})}\wedge \frac{G_{m}(U_{m,2}^{l,k})}{\sum_{i=1}^{N_l} G_{m}(U_{m,2}^{l,i})}
=&\frac{1}{2}\sum_{k=1}^{N_l} \left|\frac{G_{m}(U_{m,1}^{l,k})}{\sum_{i=1}^{N_l} G_{m}(U_{m,1}^{l,i})}-\frac{G_{m}(U_{m,2}^{l,k})}{\sum_{i=1}^{N_l} G_{m}(U_{m,2}^{l,i})}\right|\\
\le &\frac{1}{2}\sum_{k\in S_{m-1}^l} \left|\frac{G_{m}(U_{m,1}^{l,k})}{\sum_{i=1}^{N_l} G_{m}(U_{m,1}^{l,i})}-\frac{G_{m}(U_{m,2}^{l,k})}{\sum_{i=1}^{N_l} G_{m}(U_{m,2}^{l,i})}\right|\\
&+\frac{1}{2}\sum_{k\notin S_{m-1}^l} \left|\frac{G_{m}(U_{m,1}^{l,k})}{\sum_{i=1}^{N_l} G_{m}(U_{m,1}^{l,i})}-\frac{G_{m}(U_{m,2}^{l,k})}{\sum_{i=1}^{N_l} G_{m}(U_{m,2}^{l,i})}\right|\\
	\le & C\frac{1}{N_l}\sum_{k\in S_{m-1}^l} |U_{m,1}^{l,k}-U_{m,2}^{l,k}|+C\left(1-\frac{\sharp S_{m-1}^l}{N_l}\right).
\end{align*}
Thus we have 
\begin{align*}
	\left(1-\mathbb{E}\left[\left.\frac{\sharp  S_m^l}{N_l}\right|\mathcal{F}_m^l\right]\right)
	=&\left\{1-\sum_{k=1}^{N_l}\frac{G_{m}(U_{m,1}^{l,k})}{\sum_{i=1}^{N_l} G_{m}(U_{m,1}^{l,i})}\wedge \frac{G_{m}(U_{m,2}^{l,k})}{\sum_{i=1}^{N_l} G_{m}(U_{m,2}^{l,i})}\right\}\\
	&+\sum_{k\notin S_{m-1}^l}\frac{G_{m}(U_{m,1}^{l,k})}{\sum_{i=1}^{N_l} G_{m}(U_{m,1}^{l,i})}\wedge \frac{G_{m}(U_{m,2}^{l,k})}{\sum_{i=1}^{N_l} G_{m}(U_{m,2}^{l,i})}\\
	\le & C\frac{1}{N_l}\sum_{k\in S_{m-1}^l} |U_{m,1}^{l,k}-U_{m,2}^{l,k}|+C\left(1-\frac{\sharp S_{m-1}^l}{N_l}\right). 
\end{align*}
The claim follows by induction. 
\end{proof}

\begin{theorem}
For $\kappa>1$ and $m\in\mathbb{N}$,  there exists $C>0$ such that
\begin{align*} 
	\mathbb{E}\left[\left(|U_{m,1}^{l,1}-U_{m,2}^{l,1}|\wedge 1\right)^\kappa\right]^{1/\kappa}\le Ch_{l}^{\beta/2\kappa}.
\end{align*}
\label{thm:main}
\end{theorem}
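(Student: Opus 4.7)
The plan is to combine exchangeability of the particle system with the two preparatory Lemmas \ref{pairwise_difference} and \ref{small_probability}. By exchangeability of the $N_l$ particles at level $l$,
\[
\mathbb{E}\left[(|U_{m,1}^{l,1}-U_{m,2}^{l,1}|\wedge 1)^\kappa\right] = \frac{1}{N_l}\sum_{k=1}^{N_l}\mathbb{E}\left[(|U_{m,1}^{l,k}-U_{m,2}^{l,k}|\wedge 1)^\kappa\right],
\]
and I would decompose the right-hand side by splitting the sum according to whether $k\in S_{m-1}^l$ or $k\notin S_{m-1}^l$. On the coupled set $S_{m-1}^l$, I would use the pointwise bound $(|\cdot|\wedge 1)^\kappa \leq |\cdot|^\kappa$ and apply Lemma \ref{pairwise_difference} (after raising the $L^\kappa$-bound stated there to the $\kappa$-th power), which yields a contribution of order $h_l^{\beta\kappa/2}$. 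On the complement $\{1,\ldots,N_l\}\setminus S_{m-1}^l$, I would use the trivial bound $(|\cdot|\wedge 1)^\kappa\leq 1$, so this part is controlled by $\mathbb{E}[(N_l-\sharp S_{m-1}^l)/N_l]$, which is $\cO(h_l^{\beta/2})$ by Lemma \ref{small_probability}.

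Adding the two bounds gives $\cO(h_l^{\beta\kappa/2}+h_l^{\beta/2})=\cO(h_l^{\beta/2})$, since $\kappa\geq 1$ and $h_l\leq 1$. Taking the $\kappa$-th root then delivers the claimed rate $h_l^{\beta/(2\kappa)}$. The hard part of the argument is not located in this theorem itself but already resides in the two lemmas: Lemma \ref{pairwise_difference} propagates the strong-error rate $h_l^{\beta/2}$ through the coupled resampling step by induction on $m$, using the explicit form of the conditional law of $(U_{m-1,1}^{l,I_{m-1,1}^{l,k}},U_{m-1,2}^{l,I_{m-1,2}^{l,k}})$ restricted to $S_{m-1}^l$, while Lemma \ref{small_probability} shows, again by induction, that the expected fraction of decoupled indices is itself $\cO(h_l^{\beta/2})$. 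The loss of exponent from $\beta/2$ to $\beta/(2\kappa)$ reflects precisely the cost of replacing $(|\cdot|\wedge 1)^\kappa$ by the trivial bound $1$ on the decoupled indices; it is therefore intrinsic to the coupled resampling mechanism employed here, and is the source of the factor of $2$ discrepancy between the MLMC rate and the MLPF rate observed in Table \ref{tab:mlpfcases}.
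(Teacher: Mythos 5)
Your proposal is correct and follows essentially the same route as the paper's own proof: exchangeability reduces the claim to an average over particles, the sum is split over $S_{m-1}^l$ and its complement, Lemma \ref{pairwise_difference} handles the coupled indices and Lemma \ref{small_probability} the decoupled ones, and the $\kappa$-th root yields the rate $h_l^{\beta/(2\kappa)}$. Your closing remark correctly locates both the real work (in the two lemmas) and the origin of the exponent loss.
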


\begin{proof}
By Lemmas \ref{pairwise_difference} and \ref{small_probability}, 
\begin{align*} 
	\mathbb{E}\left[\left(|U_{m,1}^{l,1}-U_{m,2}^{l,1}|\wedge 1\right)^\kappa\right]
	&=
		\mathbb{E}\left[\frac{1}{N_l}\sum_{k=1}^{N_l}\left(|U_{m,1}^{l,k}-U_{m,2}^{l,k}|\wedge 1\right)^\kappa\right]\\
		&=
		\mathbb{E}\left[\frac{1}{N_l}\sum_{{k\in S_{m-1}^l}}\left(|U_{m,1}^{l,k}-U_{m,2}^{l,k}|\wedge 1\right)^\kappa\right]
		+	
		\mathbb{E}\left[\frac{1}{N_l}\sum_{{k\notin S_{m-1}^l}}\left(|U_{m,1}^{l,k}-U_{m,2}^{l,k}|\wedge 1\right)^\kappa\right]
		\\
		&\le Ch_{l}^{\kappa\beta/2}+Ch_{l}^{\beta/2}\le 2Ch_{l}^{\beta/2}. 
\end{align*}	
Thus the claim follows. 
\end{proof}

\begin{cor} \label{cor:finrate}
If $\gamma/\alpha<2$, then the bound of Theorem \ref{thm:mlpf1} is dominated by 
$$\sum_{l=0}^L \frac{C(m,\varphi)}{N_l} h_{l}^{\beta/2},$$
where $C(m,\varphi) = \max_{0\leq l \leq L} C_l(m,\varphi)$.
\end{cor}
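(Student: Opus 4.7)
The plan is to substitute the estimates from the earlier lemmas into the bound of Theorem \ref{thm:mlpf1} and then absorb the off-diagonal cross term into the diagonal one. There are three key steps. First, I would bound $B_l(m)$: Theorem \ref{thm:main} with $\kappa = 2$ gives $\mathbb{E}[(|u_{p,1}^{l,1}-u_{p,2}^{l,1}|\wedge 1)^{2}]^{1/2} = \cO(h_l^{\beta/4})$, Lemma \ref{lem:DPJ} gives $\|\eta_{p,1}^l-\eta_{p,2}^l\|_{\mathrm{tv}} = \cO(h_l^\alpha)$, and \eqref{eq:mbound} (in its version giving $\cO(h_l^\alpha)$) yields $|||M_{p,1}^l-M_{p,2}^l||| = \cO(h_l^\alpha)$. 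Since $\alpha \ge \beta/2 \ge \beta/4$ is part of Assumption \ref{asn:mlrates}, $h_l^\alpha$ is dominated by $h_l^{\beta/4}$ for $h_l < 1$, so summing over the finite range $p \in \{0,\dots,m\}$ gives $\sqrt{B_l(m)} = \cO(h_l^{\beta/4})$ and hence $B_l(m) = \cO(h_l^{\beta/2})$.

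Second, I would reduce the off-diagonal cross term to a diagonal one. Enlarging the restricted sum to the full double sum and applying Cauchy--Schwarz,
$$
\sum_{l\neq q}\frac{\sqrt{B_l(m)B_q(m)}}{N_l N_q}
\leq \Big(\sum_{l=0}^L \frac{\sqrt{B_l(m)}}{N_l}\Big)^2
\leq \Big(\sum_{l=0}^L \frac{1}{N_l}\Big)\Big(\sum_{l=0}^L \frac{B_l(m)}{N_l}\Big),
$$
so the entire bound of Theorem \ref{thm:mlpf1} is at most $\big(1+\sum_l 1/N_l\big)\sum_l B_l(m)/N_l \le C\big(1+\sum_l 1/N_l\big)\sum_l h_l^{\beta/2}/N_l$. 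Thus the corollary reduces to verifying that $\sum_l 1/N_l = \cO(1)$ uniformly as $\varepsilon\to 0$ under the MLPF allocation $N_l \propto N_0 h_l^{(\beta+2\gamma)/4}$ prescribed in the proof of Theorem \ref{thm:mlpf}.

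Third, I would carry out this verification by a case analysis. The sum $N_0^{-1}\sum_l h_l^{-(\beta+2\gamma)/4}$ is geometric and asymptotically equivalent to its last term $N_0^{-1} h_L^{-(\beta+2\gamma)/4}$. Plugging in $h_L \asymp \varepsilon^{1/\alpha}$ from \eqref{eq:ell} and the case-dependent $N_0 \asymp \varepsilon^{-2}K(\varepsilon)$ from Table \ref{tab:mlpfcases}, the resulting power of $\varepsilon$ is checked to be nonnegative in each case: for $\beta > 2\gamma$ one needs $\beta+2\gamma \le 8\alpha$, which holds by $\beta \le 2\alpha$ and $\gamma < 2\alpha$; for $\beta < 2\gamma$ the exponent reduces to $2-\beta/(2\alpha) > 0$ using $\beta \le 2\alpha$. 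The tight case is $\beta = 2\gamma$, where one obtains $\varepsilon^{2-\gamma/\alpha}|\log\varepsilon|^{-1}$, which is bounded precisely when $\gamma/\alpha \le 2$. This critical case is the main obstacle and is what forces the hypothesis $\gamma/\alpha < 2$ in the statement; once it is settled, absorbing the bounded constant $1 + \sum_l 1/N_l$ into $C(m,\varphi)$ yields the claim.
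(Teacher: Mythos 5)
Your argument is correct and follows essentially the same route as the paper: you bound $B_l(m)=\cO(h_l^{\beta/2})$ via Theorem \ref{thm:main}, Lemma \ref{lem:DPJ} and \eqref{eq:mbound} together with $2\alpha\geq\beta$, and then show the off-diagonal term is dominated by the diagonal one under the MLPF allocation, with the hypothesis $\gamma/\alpha<2$ entering in exactly the same place. The only cosmetic difference is that you reorganize the cross-term estimate via Cauchy--Schwarz into the condition $\sum_{l=0}^{L}1/N_l=\cO(1)$, whereas the paper substitutes $N_l\propto \varepsilon^{-2}K(\varepsilon)h_l^{(\beta+2\gamma)/4}$ directly and reads off the bound $\varepsilon^{2}\cdot\varepsilon^{2-\gamma/\alpha}$; both reduce to the same case analysis on $(\beta,\gamma)$.
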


\begin{proof}
First note that Theorem \ref{thm:main} provides a bound of $C_l(m,\varphi) h_l^{\beta/2}$
on the first term of $B_l(n)$ defined in 
\eqref{eq:beea}, 
and other
terms are bounded by $C_l(m,\varphi) h_l^{2\alpha}$.  
Recall that $2 \alpha \geq \beta$, as they are defined here.

Now, one must show that $ \sum_{l=0}^L\frac{\sqrt{B_l}}{N_l}\sum_{q=0\neq l}^L \frac{\sqrt{B_q}}{N_q}$
is higher order in comparison to $\sum_{l=0}^L \frac{B_l}{N_l} = \cO(\varepsilon^2)$.
Choosing $L$ and $K_L$ as described in Section \ref{ssec:mlmc} and the proof of Theorem \ref{thm:mlpf},
one has 
$$
 \sum_{l=0}^L \frac{\sqrt{B_l}}{N_l} \sum_{q=0\neq l}^L \frac{\sqrt{B_q}}{N_q} 
 \lesssim \varepsilon^4 K(\varepsilon)^{-2} \sum_{l=0}^L \sqrt{C_l} \sum_{q=0\neq l}^L \sqrt{C_q},
$$
where $C_l \propto h_l^{-\gamma}$ is the cost associated to the $l^{th}$ level.
Notice each of the two summations is $\cO(C_L) = \cO(\varepsilon^{-\gamma/2\alpha})$,
and $K(\varepsilon) = o(1)$.  Therefore, 
$$
 \sum_{l=0}^L \frac{\sqrt{B_l}}{N_l} \sum_{q=0\neq l}^L \frac{\sqrt{B_q}}{N_q} 
 \lesssim \varepsilon^{2} \varepsilon^{2-\gamma/\alpha},
 $$
and under the assumption that $\gamma/\alpha\leq2$ the proof is concluded.
\end{proof}

\end{document}